\documentclass[11pt]{amsart}
\usepackage{fullpage}
\usepackage{parskip}
\usepackage{array}
\usepackage[table]{xcolor}
\usepackage{enumitem}
\usepackage{dcolumn}
\usepackage{mdframed}
\usepackage{subcaption}
\usepackage{amsmath, amssymb, amsthm}
\usepackage[foot]{amsaddr}
\usepackage{float}
\newtheorem{thm}{Theorem}[section]
\newtheorem{lem}[thm]{Lemma}
\newtheorem{cor}[thm]{Corollary}
\newtheorem{prop}[thm]{Proposition}

\newtheorem{rem}[thm]{Remark}
\newtheorem{definition}[thm]{Definition}
\numberwithin{thm}{section}
\numberwithin{equation}{section}
\numberwithin{figure}{section}

\makeatletter
\newcommand{\claimlabel}[1]{\def\@currentlabel{#1}\label{claim:#1}}
\makeatother
\newcommand{\claimref}[1]{\hyperref[claim:#1]{Claim~(#1)}}

\DeclareMathOperator{\sech}{sech}
\newcommand{\eps}{\varepsilon}
\newcommand{\gs}{\gamma_\star}
\definecolor{darkgreen}{rgb}{0,0.45,0.08}
\definecolor{rwgray}{RGB}{75,105,140}    
\definecolor{srbrown}{RGB}{160,75,80}    
\definecolor{trblue}{RGB}{160,75,80}     
\definecolor{wellcolor}{RGB}{140,140,140}
\definecolor{panelbgRW}{RGB}{234,234,234}
\definecolor{panelbgNew}{RGB}{220,234,255}
\definecolor{figgreen}{RGB}{50,205,50}
\definecolor{zeroblue}{RGB}{0,65,200}
\definecolor{refgray}{RGB}{130,130,130}
\definecolor{stablebg}{RGB}{232,245,233}
\definecolor{unstablebg}{RGB}{252,235,233}
\definecolor{headerbg}{RGB}{242,242,242}
\newcommand{\R}{\mathbb{R}}
\newcommand{\C}{\mathbb{C}}
\newcommand{\la}{\lambda}
\newcommand{\Scal}{\mathcal{S}}
\newcommand{\Lop}{L_+^\eps}

\newcommand{\win}{w_{\mathrm{in}}}
\newcommand{\wout}{w_{\mathrm{out}}}
\newcommand{\woutL}{w_{\mathrm{out}}^{L}}
\newcommand{\woutR}{w_{\mathrm{out}}^{R}}
\newcommand{\Ub}{U_\star(b)}

\newcommand{\Hd}{H^1_{0}(\mathbb{R}_+)}
\newcommand{\Hf}{H^1(\mathbb{R})}
\newcommand{\sgn}{\operatorname{sgn}}
\newcommand{\supp}{\operatorname{supp}}
\newcommand{\spn}{\operatorname{span}}

\newcommand{\bifwell}[4]{%
  \draw[wellcolor, line width=1pt, shift={(#1,#2)}]
    ({-#3},0)--({-#3},{-#4})--(#3,{-#4})--(#3,0);}
\newsavebox{\warmupZeroXbox}

\newcommand{\Nslope}{\nu_\star}
\newcommand{\NslopeT}{\nu_\star^{\rm TR}}
\newcommand{\NslopeS}{\nu_\star^{\rm SR}}

\newcommand{\Sreg}{\mathsf S}
\newcommand{\Ureg}{\mathsf U}

\definecolor{markergray}{rgb}{0.35,0.35,0.35}
\definecolor{markerred}{rgb}{0.45,0.05,0.05}
\definecolor{markerteal}{rgb}{0.00,0.28,0.38}

\definecolor{plotred}{rgb}{0.85,0.10,0.10}
\definecolor{plotorange}{rgb}{0.95,0.55,0.10}

\DeclareRobustCommand{\specpole}{\tikz[baseline=-0.55ex]{%
  \filldraw[figgreen, draw=black!75, line width=0.5pt] (0,0) circle (2.2pt);}}
\DeclareRobustCommand{\speczero}{\tikz[baseline=-0.55ex]{%
  \draw[zeroblue, line width=1.4pt, line cap=round]
    (-0.65ex,-0.65ex)--(0.65ex,0.65ex) (-0.65ex,0.65ex)--(0.65ex,-0.65ex);}}

\usepackage[colorlinks=true, linkcolor=red, citecolor=blue, urlcolor=cyan]{hyperref}

\usepackage{multirow, multicol, tcolorbox}
\usepackage[mathic]{mathtools}
\tcbuselibrary{skins, theorems, breakable}
\usepackage{graphicx}
\usepackage{tikz}
\usepackage{pgfplots}
\usetikzlibrary{arrows.meta, patterns}
\usepgfplotslibrary{groupplots}
\pgfplotsset{compat=newest}
\pgfmathdeclarefunction{bifwarmSech}{1}{%
  \pgfmathparse{2/(exp(max(min(#1,8),-8))+exp(-max(min(#1,8),-8)))}%
}
\pgfmathdeclarefunction{bifwarmTanh}{1}{%
  \pgfmathparse{%
    (exp(max(min(#1,8),-8))-exp(-max(min(#1,8),-8)))/%
    (exp(max(min(#1,8),-8))+exp(-max(min(#1,8),-8)))}%
}
\AtBeginDocument{%
  \savebox{\warmupZeroXbox}{\begin{tikzpicture}[baseline=-0.55ex]
    \draw[zeroblue, line width=1.5pt, line cap=round]
      (-3.2pt,-3.2pt)--(3.2pt,3.2pt) (-3.2pt,3.2pt)--(3.2pt,-3.2pt);
  \end{tikzpicture}}%
}
\usepackage[font=footnotesize, labelfont=bf, skip=4pt]{caption}
\usepackage[section]{placeins}  

\usepackage[backend=biber, date=year, giveninits=true, sorting=nyt, natbib=true, maxcitenames=2, maxbibnames=10, url=false, doi=true, backref=false]{biblatex}
\renewbibmacro{in:}{\ifentrytype{article}{}{\printtext{\bibstring{in}\intitlepunct}}}

\title{Stability and instability analysis of resonance-induced nonlinear bound states}

\author{Jackson C.\ Turner\textsuperscript{1} \and Michael I.\ Weinstein\textsuperscript{1,2}}

\thanks{\textsuperscript{1}Department of Applied Physics and Applied Mathematics, Columbia University, New York, NY}
\thanks{\textsuperscript{2}Department of Mathematics, Columbia University, New York, NY}

\email{jt3287@columbia.edu}
\email{miw2013@columbia.edu}

\date{\today}

\begin{document}

\begin{abstract}
We study  the focusing one-dimensional cubic nonlinear Schr\"odinger / Gross--Pitaevskii equation (NLS-GP), where the potential of the underlying linear Schr\"odinger operator, $H_V=-\partial_x^2+V(x)$, is compactly supported.  In \cite{turner2026resonance}  the authors proved that purely imaginary zeros (transmission resonances) or purely imaginary poles in the lower half plane  (scattering resonances) of a reflection coefficient of $H_V$ seed branches of {\it resonance-induced nonlinear bound states}, and that
these states bifurcate at a strictly positive $L^2(\mathbb R)$ excitation threshold. This is in contrast to nonlinear bound states which bifurcate at zero $L^2$-norm from point spectra of $H_V$ (corresponding to poles in the upper half plane). In this paper we establish precise criteria for the  nonlinear orbital stability and instability of the resonance-induced states near the bifurcation point. A corollary is that resonance-induced nonlinear states are orbitally stable  if i) the underlying linear (scattering or transmission) resonance frequency is sufficiently small,  and ii) the corresponding linear resonance eigenmode  is strictly positive.
Numerical simulations are presented to explore regimes not accessible to our theory, and to explore the large time dynamics for initial conditions near stable and unstable resonance-induced nonlinear bound states.
\end{abstract}

\maketitle

\section{Introduction }\label{sec:intro}

We study the focusing 1D cubic nonlinear Schr\"odinger / Gross--Pitaevskii equation (NLS/GP):
\begin{equation}\label{eq:nlse}
  i\partial_t\Psi = -\partial_x^2\Psi + V(x)\Psi - |\Psi|^2\Psi\ ,
\end{equation}
where the linear potential $V\in L^1$ is  real-valued and compactly supported. For initial data $\Psi_0 \in H^1(\R)$, \eqref{eq:nlse} has a unique global solution in $C(\R_t; H^1(\R))$; see \cite{kato1987, cazenave2003semilinear,fibich2015nonlinear}.

A nonlinear bound state of NLS/GP is a time-harmonic solution
$\Psi(x,t) = e^{-iEt}\psi(x)$, where the spatial profile $\psi(x)$ satisfies:
\begin{equation}\label{eq:V-nls}
  H_V\psi - |\psi|^2\psi = E\psi,\quad \psi\in H^1(\mathbb R),\quad\textrm{where $H_V:=-\partial_x^2+V(x)$}\ . 
\end{equation}
For $V = 0$, \eqref{eq:V-nls} reduces to $-\psi''-|\psi|^2\psi=E\psi$ with $\psi\in H^1$. In this case, all non-trivial solutions are given, up to spatial and phase translates, by the {\it free soliton}:
\begin{equation}\label{eq:free-soliton-intro}
 \psi(\xi) =  \Scal(\xi;E) := \sqrt{2|E|}\,\sech(\sqrt{|E|}\,\xi)\ ,
\end{equation}
where $E<0$.
A nonlinear orbital Lyapunov stability theory  of nonlinear bound states is discussed in Definition~\ref{sec:orbital}; see also, for example,  \cite{weinstein1986lyapunov,fibich2015nonlinear,sulemsulem1999,cazenave2003semilinear}.

For the case where $V(x)$ is a potential well with discrete spectrum, stable nonlinear bound states were shown in \cite{rose1988bound} to bifurcate from the zero solution at $L^2(\mathbb R)$-eigenpairs of $H_V=-\partial_x^2 + V(x)$. Their role in long-time dynamics, including asymptotic stability of nonlinear ground states and radiation damping, is discussed further in Section~\ref{sec:discussion}.



Recently, in \cite{turner2026resonance}, the authors demonstrated the bifurcation of nonlinear bound states from certain \emph{scattering} (SR) and \emph{transmission resonances} (TR) of the underlying linear operator $H_V$; see Theorem~\ref{thm:bifurcation} below.  We refer to such nonlinear bound states as {\it resonance-induced nonlinear bound states}. 

Scattering and transmission resonances are nontrivial solutions of $H_V\psi=E\psi$ subject to (i) outgoing radiation conditions (SR) or (ii) transmission conditions (TR). The corresponding resonance modes are $L^2_{\rm loc}$  but spatially unbounded. Resonance-induced nonlinear states bifurcate at a strictly positive {\it excitation threshold} in the $L^2(\mathbb R)$ norm (optical power \cite{fibich2015nonlinear,sulemsulem1999} or particle number \cite{erdHos2010derivation}). %
Let
 \[\mathcal N[\psi] := \| \psi \|_{L^2}^2\quad .\] 
 It is common to visualize families of nonlinear bound states as bifurcation curves in the $(\mathcal{N},E)$-plane. See, for example, the right column of Figure~\ref{fig:warmup-bifurcation}, which contrasts  SR and TR bifurcations with $L^2(\mathbb R)$-bound-state bifurcations studied in~\cite{rose1988bound}.

{\it In this paper we study the dynamical stability and instability properties of such resonance (SR- and TR-) induced  nonlinear bound states.} 
The main results are summarized just below in Section \ref{ssec:main-results} and stated in detail in Theorem~\ref{thm:intro-stability-sharp}, Corollary \ref{cor:stability-TR} (TR) and Corollary~\ref{cor:stability-SR} (SR).

\subsection{A brief review of resonance-induced nonlinear bound states; see \cite[Theorem~4.1]{turner2026resonance}}
Consider the following two (SR and TR) resonance spectral problems for the operator $H_V$: 
\begin{align}
H_V U_\star & =E_\star U_\star\ ,\ \textrm{where}\  
E_\star\ =\ -\gamma_\star^2<0\quad\textrm{and}\quad U_\star\ne0, \label{eq:res-prob}\end{align}
%
%
corresponding to the following asymptotic behaviors at infinity:
\begin{subequations}\label{eq:SR-TR-cases}
\begin{align}
&\textrm{Scattering resonances (SR):\ $0<x<\infty$,\quad  $U_\star(0)=0$\ and\ $U_\star$\ ``outgoing'' as $x\to+\infty$}\label{eq:SR-half}\\
&\textrm{Transmission resonances (TR): \  $x\in\mathbb R$,\ $U_\star$\ ``incoming'' at $-\infty$ and ``outgoing'' as $x\to+\infty$}.
\end{align}
\end{subequations}

We review the basics of 1D scattering theory, reflection and transmission coefficients, etc.\ in Section~\ref{ssec:resonance-def}.
In the (SR) case, resonances are poles  of the reflection coefficients; we focus in particular on poles of the form $-i\gamma_\star$, with $\gamma_\star>0$ (the $\gs <0$ states are the bound states covered in \cite{rose1988bound}). In the (TR) case, resonances are 
zeros  of the reflection coefficients of the form $-i\gamma_\star$, with $\gamma_\star \in\mathbb R \setminus \{0\}$.

 Although resonance modes $U_\star$ are not $L^2(\R)$ (indeed $U_\star$ grows exponentially), our resonance-induced nonlinear bound states are  solutions of 
    \begin{equation} H_V\psi -|\psi|^2\psi = E\psi,\quad \psi\in X, \label{eq:X-nl-evp}
    \end{equation}
    where
    \begin{equation}\label{eq:Xspace}
    X = \begin{cases} \Hd & \text{(SR) case },\\ \Hf & \text{(TR) case}.\end{cases}
    \end{equation}

\begin{rem} If $V(x)=V(-x)$, then in the (SR) case we may equivalently take $X=H^1_{\rm odd}(\mathbb R)$, since any resonance in $H^1_0(\mathbb R_+)$ can be extended to a function in $H^1(\mathbb R)$ which is odd.\end{rem}

We next recall our bifurcation result  \cite[Theorem~4.1]{turner2026resonance}; see also Theorem~\ref{thm:bifurcation} below. Introduce the following quantity, which depends on the resonance eigenpair $(U_\star,-\gamma_\star^2)$:
\begin{equation}\label{eq:Bstar}
     \mathcal B_\star
      :=\begin{cases}
      \int_{-b}^b U_\star^2\,dx \;-\; \frac{U_\star(b)^2}{2\gs}+\frac{1}{2\gs}, &\ \textrm{(TR) case / $X=H^1(\mathbb R)$}\\
      & \\
      \int_0^b U_\star^2\,dx \;-\; \frac{U_\star(b)^2}{2\gs}, &\ \textrm{(SR) case / $X=H_0^1(\mathbb R_+)$.}
      \end{cases}
\end{equation}
When referring to $\mathcal{B}_\star$  specifically in the (TR) or (SR) case, we shall write: $\mathcal{B}^{(TR)}_\star$ or  $\mathcal{B}^{(SR)}_\star$.

\begin{thm}[Resonance-induced nonlinear bound states; Theorem 4.1 of \cite{turner2026resonance}]
\label{thm:intro-bifurcation}
 Assume the non-degeneracy condition: $\mathcal B_\star\ne0$.

 Then, the following holds:
\begin{enumerate} 
\item   Each (SR) and (TR) resonance eigenpair $(\gs, U_\star)$ seeds a curve of nonlinear bound state solutions to \eqref{eq:V-nls}:
\[
\eps \longmapsto(E(\eps),\psi_\eps)\in\R\times X,
\]
which bifurcates at energy $E_\star=-\gamma_\star^2$: \[ E(\eps)=E_\star+\mathcal O(\eps), \quad \textrm{as $\eps\to0$}.\]
\item 
These bifurcations occur  at a strictly positive threshold $L^2$ norm
;  as $\eps\to0$, 
\begin{align*}
     \|\psi_\eps\|_{L^2(\R_+)}^2 &\to \mathcal N_\star=4\gs\qquad   \textrm{for the SR case}\\
     \|\psi_\eps\|_{L^2(\R)}^2 &\to \mathcal N_\star=4\gs\qquad \textrm{for the TR case.}
\end{align*}
(Viewed on all $\R$, the limiting scattering resonance (SR) state consists of a soliton  near $+\infty$ and one near $-\infty$. Hence, the $L^2(\R)$-threshold is at $2\times (4\gs)=8\gs$; see \cite{turner2026resonance}.)

\item For $0<\eps\ll1$, the spatial structure of the TR- induced nonlinear bound state  is \begin{equation}
  \psi_\eps(x) \approx
  \begin{cases}
      \sqrt\eps\ U_\star(x), & x\in {\rm supp(V)}\\
      \mathcal{S}(x-y_R(\eps);-\gs^2) &   x\notin {\rm supp}(V)\ \textrm{and}\ x\gg1 , 
  \end{cases}
\end{equation}
with $y_R(\eps)\sim \log(1/\eps)$.
\end{enumerate}
\end{thm}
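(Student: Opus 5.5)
The plan is a Lyapunov--Schmidt reduction that matches an inner solution on ${\rm supp}(V)=[-b,b]$ to exact free solitons outside. For $|x|>b$ the equation is autonomous with $V=0$. Every $H^1$ solution on a half-line at energy $E=-\gamma^2<0$ is therefore a translated free soliton tail, $\Scal(x-y;E)$ for $x>b$ (and analogously for $x<-b$, with the sign of the amplitude free). So the problem reduces to solving a nonlinear ODE on the bounded interval $[-b,b]$ (or $[0,b]$ in the (SR) case, with the condition $\psi(0)=0$), subject to Robin-type matching of the Cauchy data $(\psi,\psi')$ at $x=\pm b$ with the soliton tails.

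The key observation is that a soliton tail centered far away, $y\gg1$, has small Cauchy data at $x=b$ that is exponentially growing in $x$. Indeed, $\Scal(x-y;-\gamma^2)\approx 2\sqrt2\,\gamma e^{-\gamma y}e^{\gamma x}$ for $x\le b\ll y$. This is exactly the outgoing behaviour $e^{\gamma x}$ (i.e.\ $e^{-ikx}$ with $k=-i\gamma$) of the resonance mode $U_\star$ beyond the support. The matching is done in three steps.

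\emph{Ansatz.} Set $\eps$ so that $\sqrt\eps\,U_\star(b)\approx 2\sqrt2\gs e^{-\gs y_R}$, which gives $y_R\sim\log(1/\eps)$. Write
\[
\psi=\sqrt\eps\,(U_\star+w),\qquad E=E_\star+\eps\mu .
\]
In the (TR) case, the decaying data at $-\infty$ must match the incoming side, where $U_\star\propto e^{-\gs x}$ is already decaying. There the tail is a soliton centered at $y_L$ or just a linear tail. In the (SR) case one uses the Dirichlet condition at $0$ instead.

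\emph{Solving for $w$.} Substituting the ansatz, on $[-b,b]$ one gets
\[
(H_V-E_\star)w=\mu U_\star+\eps\,(\text{cubic})+\mu\eps w ,
\]
with boundary conditions encoding the exact soliton Cauchy data. These boundary conditions differ from the linear outgoing ones by $O(\eps)$ corrections coming from the soliton nonlinearity, $\Scal'/\Scal=\gamma\tanh(\gamma(y-x))$. The linearized boundary-value operator has a one-dimensional kernel spanned by $U_\star$, since $U_\star$ is a simple resonance. Modulo the cokernel, the contraction mapping principle gives $w=O(\eps)$ as a function of $(\eps,\mu)$.

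\emph{Solvability condition.} The remaining scalar equation is obtained by pairing with $U_\star$ and integrating by parts, with the boundary terms from the Robin data retained. The coefficient of $\mu$ is $\partial_E$ of the Wronskian/Jost-function mismatch at $E_\star$. Explicitly, it is $\int U_\star^2$ plus the boundary contribution $-U_\star(b)^2/(2\gs)$ (from differentiating $e^{\gamma x}$ in $\gamma$), plus $1/(2\gs)$ from the left tail in the (TR) case. This is precisely $\mathcal B_\star$. The hypothesis $\mathcal B_\star\ne0$ therefore lets the implicit function theorem solve $\mu=\mu(\eps)$, giving part (1) with $E(\eps)=E_\star+O(\eps)$.

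\emph{Norm and profile.} For part (2), the $L^2$ norm splits as an inner part $O(\eps)$ plus the far-field soliton $\int_b^\infty\Scal(x-y_R)^2\,dx\to\|\Scal\|_{L^2}^2=4\gs$, since $y_R\to\infty$. In the (SR) case this holds on $\R_+$, and doubling by odd reflection gives $8\gs$ on $\R$. Part (3) is read off from the construction. The main obstacle is making the reduction uniform as $y_R\to\infty$: the soliton tail is not a small perturbation globally, only its Cauchy data at $b$ is small. I would handle this by parametrizing the outer solution exactly via the explicit formula for $\Scal$, so that only bounded-interval estimates require a contraction argument, and then checking the smooth dependence on $\eps$ of the matching map, including the exponentially small $\tanh$ corrections.
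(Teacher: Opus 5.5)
Your proposal is correct and follows the same route the paper indicates. The full proof is in the companion work, with the key solvability computation reproduced in Appendix~\ref{ssec:proof-Edot}. That route is: reduce to the inner problem on $\mathrm{supp}(V)$, encode the exact soliton tails through the first-integral Robin conditions $\psi'=\psi\sqrt{-E-\psi^2/2}$ at the edges, pair with $U_\star$ via the Wronskian to obtain $\mathcal B_\star$ as the coefficient of the energy correction, apply the implicit function theorem, and read off the mass and profile. The only difference is how the kernel direction is fixed: in the TR case the paper fixes the amplitude directly by $\psi_\eps(-b)^2=\eps$ rather than through a Lyapunov--Schmidt complement.

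There are two sign slips, neither of which affects your conclusions. For $x<-b$ the TR mode is $U_\star\propto e^{\gs x}$, not $e^{-\gs x}$. And the outgoing factor at $k=-i\gs$ is $e^{ikx}=e^{\gs x}$, not $e^{-ikx}$.
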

%
An analogous statement applies for SR-induced nonlinear bound states; see Theorem \ref{thm:bifurcation}, where an expanded form of Theorem \ref{thm:intro-bifurcation} is stated for use in proofs.

Figure~\ref{fig:warmup-bifurcation} displays the different bifurcation scenarios; Top right panel: bifurcation from point spectrum, corresponding to a purely imaginary pole of $r_\pm(k)$ in the upper half plane, Middle right panel: bifurcation from a scattering resonance, corresponding to a purely imaginary pole of $r_\pm(k)$ in the lower half plane, and 
 Bottom right panel: bifurcation from a transmission resonance, corresponding to a purely imaginary zero of $r_\pm(k)$ (either upper or lower half plane).

    \subsection{Summary of results}\label{ssec:main-results}

Our main results concern the nonlinear dynamical stability of resonance-induced nonlinear bound states. We first review an appropriate notion of dynamical stability.

\begin{definition}[Orbital stability]
\label{sec:orbital}
Let $e^{-iEt}\psi( \, x \, ;E)$, where $x\mapsto\psi(x;E)\in X$, denote a nonlinear bound state spatial profile of \eqref{eq:X-nl-evp}. We say that this state is \emph{orbitally stable in $X$} if the following holds:  For every $\delta > 0$, there exists $\eta=\eta(\delta) > 0$, such that if $\|\Psi_0 - \psi(\cdot;E)\|_{X} < \eta$, then the solution $\Psi(t)$ of~\eqref{eq:nlse} satisfies
\begin{equation}
  \inf_{\theta \in \R}\|\Psi(t) - e^{i\theta}\psi(\cdot;E)\|_{X} < \delta \qquad \text{for all } t \ge 0. 
\end{equation}
\end{definition}

\noindent We next formulate our main theorem on the stability and instability properties of resonance-induced nonlinear bound states, near the bifurcation point ($\eps$ small); see Theorem \ref{thm:intro-bifurcation}. \medskip

Let $(U_\star,-\gs^2)$ denote a scattering or transmission resonance eigenpair for which the non-degeneracy condition $\mathcal{B}_\star\ne0$ holds. Formulating our stability theorem requires us to introduce additional $(U_\star,-\gs^2)$ and $\mathcal{B}_\star$ dependent quantities:  Let 
\begin{align}\label{eq:Omega}
  \Omega_\star \;:=\; 
  \begin{cases}
     \;-\; 2\!\int_{-b}^b U_\star^4\,dx + \frac{ U_\star(b)^4 }{2\gs} + \frac{1}{4\gs}, &\ \textrm{TR case}\\
     &\\
     \;-\; 2\!\int_0^b U_\star^4\,dx +  \frac{U_\star(b)^4}{2\gs} ,&\ \textrm{SR case\ .}
  \end{cases} 
\end{align}
In our review of the Lyapunov stability theory of NLS/GP  (Section \ref{sec:stability-theory}) we shall see the important role played by the slope, in the $(E,\mathcal N)$-plane, of the bifurcating branch $\eps\mapsto\bigl(E(\eps),\mathcal N[\psi(\cdot;E(\eps))]\bigr)$ at the bifurcation point $\eps=0$. We have, by Proposition \ref{prop:slope} below that
\begin{align}\label{eq:nu}
\Nslope \; :=\;  \frac{d\mathcal N}{dE}\Big|_{\eps=0^+}\  =\
  \begin{cases}
      \frac{2(\mathcal B_\star^{\rm TR})^2}{\Omega_\star^{\rm TR} - 3/(4\gs)} \;-\; \frac{2}{\gs}\  &\ \textrm{(TR case)}\\
     &\\
     \frac{2(\mathcal B_\star^{\rm SR})^2}{\Omega_\star^{\rm SR}} \;-\; \frac{2}{\gs}\ ,&\ \textrm{(SR case).}
  \end{cases} 
\end{align}
We assume that this slope is finite;
\begin{equation}\label{eq:non-vert}
\Omega_\star^{\rm TR} \ne \frac{3}{4\gs}\quad \textrm{(TR case)} \qquad{\rm and}\qquad  
\Omega_\star^{\rm SR} \ne 0\quad 
\textrm{(SR case)}.
\end{equation}

For solutions sufficiently near to and emanating from the bifurcation point at $(E_\star,\mathcal N_\star)$, stability or instability is determined by our main theorem:
{
\begin{thm}[Sharp stability criterion]\label{thm:intro-stability-sharp}
For the SR case, we set $[a,b]=[0,b]$ and for the TR case we set $[a,b]=[-b,b]$.

Let $(\gs,U_\star)$, with $E_\star=-\gamma_\star^2$, be a scattering or transmission resonance pair determined by the boundary value problem on $[a,b]$; equations \eqref{eq:resonance-eq}--\eqref{eq:SR-BC} for the SR case and equations \eqref{eq:resonance-eq}--\eqref{eq:TR-BC} for the TR case. Assume:
  \begin{enumerate}[leftmargin=2em,label=\textup{(H\arabic*)}]
    \item\label{H:positivity} \emph{Positivity:} $U_\star(x)\geq 0$ on $[a,b]$,
    \item\label{H:Bstar} \emph{Non-degeneracy:} $\mathcal B_\star\neq 0$,
    \item\label{H:Omega} \emph{Non-vertical tangency:} $\Nslope:=\dfrac{d\mathcal N}{dE}\Big|_{\eps=0^+}$ is finite; see \eqref{eq:non-vert}.
  \end{enumerate}
For all $\eps>0$ sufficiently small, let $\eps\mapsto \big(E(\eps),\psi_\eps(x;E(\eps))\big)\in \R\times X$ denote the resonance-induced nonlinear bound state obtained from Theorem~\ref{thm:intro-bifurcation}. Then:
  \begin{enumerate}
    \item the branch $\eps\mapsto \big(E(\eps),\psi_\eps(\cdot;E(\eps))\big)$ is orbitally stable in $X$ if
    \begin{equation}\label{eq:Omega-criterion}
      \Omega_\star > 0 \qquad \text{and} \qquad \Nslope < 0,
    \end{equation}
    \item the branch is unstable if
    \[ \textrm{either\qquad $\Omega_\star < 0$\qquad or\qquad $\Nslope > 0$.}\]
  \end{enumerate}
\end{thm}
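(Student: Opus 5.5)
The plan is to verify the hypotheses of the Grillakis--Shatah--Strauss / Weinstein Lyapunov stability theory for the linearized operators at $\psi_\eps$,
\[
L_+^\eps = H_V - E(\eps) - 3\psi_\eps^2,\qquad L_-^\eps = H_V - E(\eps) - \psi_\eps^2,
\]
acting in $L^2$ with form domain $X$. For real, positive ground-state-type profiles the criterion has three parts. First, $L_-^\eps\ge 0$ with $\ker L_-^\eps=\spn\{\psi_\eps\}$. Second, $n(L_+^\eps)$, the number of negative eigenvalues, must be counted and $\ker L_+^\eps$ must be trivial. Third, stability requires $n(L_+^\eps)=1$ together with $\frac{d\mathcal N}{dE}<0$. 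For instability, $n(L_+^\eps)=1$ with $\frac{d\mathcal N}{dE}>0$ gives instability by Grillakis--Shatah--Strauss. The case $n(L_+^\eps)=2$ gives instability via the Grillakis / Jones count of real unstable eigenvalues: with $n(L_-)=0$ the Hamiltonian linearization has at least $n(L_+)-1$ positive real eigenvalues, less one if $\frac{d\mathcal N}{dE}<0$. That count still leaves one real unstable eigenvalue, and linear instability with a real eigenvalue yields nonlinear instability. By continuity in $\eps$ of $\frac{d\mathcal N}{dE}$ along the branch, Proposition~\ref{prop:slope} reduces the slope condition to the sign of $\Nslope$. The theorem then follows once I show that, for $0<\eps\ll1$,
\[
n(L_+^\eps)=1 \ \text{if}\ \Omega_\star>0,\qquad n(L_+^\eps)=2 \ \text{if}\ \Omega_\star<0,\qquad \ker L_+^\eps=\{0\}.
\]

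\textbf{Step 1: $L_-^\eps$.} Hypothesis (H1) and the structure in Theorem~\ref{thm:intro-bifurcation}(3) are used to show $\psi_\eps>0$: it is $\approx\sqrt\eps\,U_\star\ge0$ on the support of $V$ and a positive sech-tail outside. Strict positivity in the interior follows from ODE uniqueness. In the SR case the boundary condition at $0$ holds, and $\psi_\eps>0$ on $(0,\infty)$. Since $L_-^\eps\psi_\eps=0$ and $\psi_\eps$ has no zeros in the open domain, Sturm oscillation theory gives that $0$ is the bottom of the spectrum and is simple.

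\textbf{Step 2: counting negative directions of $L_+^\eps$ by matched asymptotics.} Split space into the inner region $[a,b]$, where $\psi_\eps^2=O(\eps)$, and the outer region, where $\psi_\eps$ is a translate of $\Scal(\cdot-y_R;E_\star)$ (two such tails in the TR case, one per side in effect). On the outer region $L_+^\eps$ is close to the soliton operator $L_+^{\rm sol}=-\partial_\xi^2+\gs^2-6\gs^2\sech^2(\gs\xi)$. This operator has exactly one negative eigenvalue with eigenfunction $\sech^2$, and a zero mode $\partial_\xi\Scal$ coming from translation. On the inner region, $L_+^\eps\approx H_V+\gs^2$, and $U_\star$ solves this equation but does not satisfy the decaying boundary behaviour. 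Consequently $L_+^\eps$ has one robust negative eigenvalue, near that of $L_+^{\rm sol}$, plus one small eigenvalue $\mu(\eps)$ that bifurcates from the translation mode, which is broken by coupling to the inner region through the exponentially small tail at $x=b$.

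\textbf{Step 3: sign of $\mu(\eps)$ (main obstacle).} I would build a test function by gluing $\partial_\xi\Scal(x-y_R)$ in the outer region to an appropriately scaled multiple of $U_\star$ in the inner region, matched at $x=b$. Then I would compute $\langle L_+^\eps\phi,\phi\rangle/\|\phi\|^2$ to leading order in $\eps$, which should give $\mu(\eps)=c\,\eps^{p}\,\Omega_\star(1+o(1))$ with $c<0$. The contributions to track are:
\begin{enumerate}
\item the inner quartic term $-3\psi_\eps^2$ integrated against $\phi^2\approx U_\star^2$, which produces $-2\int U_\star^4$ up to normalisation;
\item the boundary/Wronskian terms at $b$ from the soliton tail expansion $\Scal\sim 2\sqrt2\gs e^{-\gs(x-y_R)}$, which produce $U_\star(b)^4/(2\gs)$;
\item in the TR case, the left outgoing piece, which produces $1/(4\gs)$.
\end{enumerate}
The cleanest rigorous route is probably a Lyapunov--Schmidt / Evans-type reduction: write the eigenvalue problem on $[a,b]$ and on the exterior, match the logarithmic derivatives at $x=\pm b$, and expand the resulting scalar equation in $\eps$. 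The leading coefficient should reproduce $\Omega_\star$, consistent with how $\Omega_\star$ already enters $\Nslope$ in Proposition~\ref{prop:slope}. Non-degeneracy (H2) guarantees that the reduction is regular. Because $\Omega_\star\ne0$, we get $\mu(\eps)\ne0$, so $\ker L_+^\eps=\{0\}$. Uniform resolvent bounds on the complement of the two near-critical modes show that no other eigenvalues approach $0$, which fixes the count.

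\textbf{Step 4: conclude.} Combine Steps 1--3 with the Grillakis--Shatah--Strauss criterion and the instability count, applied in $X$ (in the SR case on $H_0^1(\R_+)$, where the same theory applies verbatim). This yields part (1) when $\Omega_\star>0$ and $\Nslope<0$, and part (2) otherwise.
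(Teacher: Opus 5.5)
Your overall route matches the paper's: positivity of $\psi_\eps$ for $L_-^\eps$, matching of logarithmic derivatives at $x=\pm b$ against P\"oschl--Teller Jost solutions for the small eigenvalue, Proposition~\ref{prop:slope} for the slope, then Theorem~\ref{thm:W-GSS}. However, the count $n_-(L_+^\eps)=1$ when $\Omega_\star>0$ is not established.

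Step~2 only controls the spectrum near $-3\gs^2$ and near $0$ (``no other eigenvalues approach $0$''). Nothing in it excludes negative eigenvalues of $L_+^\eps$ that stay an order-one distance from $0$ and live on $\supp V$. Such eigenvalues really occur. On $[a,b]$ one has $L_+^\eps\approx H_V+\gs^2$, so any bound state of $H_V$ below $E_\star=-\gs^2$ produces an $\mathcal O(1)$-negative eigenvalue of $L_+^\eps$. This is exactly Example~2 of the paper, where $E_0+\gs^2<0$ and $n_-(L_+^\eps)\ge 3$. Your treatment of $L_+^\eps$ never invokes (H1), and without (H1) the claimed count is false, so a second use of positivity is indispensable.

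The paper supplies it in Proposition~\ref{prop:Lp-count}. Since $U_\star>0$ on $(a,b]$, the $\mu(\eps)$-eigenfunction $\phi=U_\star+\mathcal O(\eps)$ has no zero there. On the exterior, its zeros are roots of the explicit quadratic $p(m,\pm t)$ with $m=1+\mathcal O(\eps^2)$ and $t_L,t_R=-1+\mathcal O(\eps)$. This gives no zero on the left and exactly one on the right, so by Sturm oscillation $\mu(\eps)$ is the second eigenvalue. An alternative: $U_\star$ extends to a positive solution of $(H_V+\gs^2)U=0$ on the exterior, so the ground-state representation gives $H_V+\gs^2\ge0$ on $X$, after which a localization argument excludes inner negative eigenvalues. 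Only part (1) is affected, since for instability $n_-(L_+^\eps)\ge 2$ already suffices.

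Two further corrections.

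\emph{Sign.} The sign in Step~3 is reversed. The matching gives
\[
D(\eps,\mu)=-4\gs\mu+3\Omega_\star\eps^2+\mathcal O(\eps^3+\eps\mu+\mu^2),
\]
hence $\mu(\eps)=\tfrac{3}{4\gs}\Omega_\star\eps^2+\mathcal O(\eps^3)$, i.e.\ $p=2$ and $c=3/(4\gs)>0$. With your $c<0$ the conclusions of (1) and (2) would be interchanged.

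\emph{Test function versus matching.} A glued test function only bounds the second eigenvalue from above. It can show $\mu(\eps)<0$ but never $\mu(\eps)>0$, so part (1) genuinely requires the matching reduction. The regularity of that reduction comes from $\partial_\mu D(0,0)=-4\gs\neq0$, not from (H2); (H2) is needed for existence of the branch and for $\dot E$. Its key computation, $\partial_\eps D(0,0)=0$ and $\tfrac12\partial_\eps^2 D(0,0)=3\Omega_\star$, uses $\dot E$ from \eqref{eq:Edot-solved} together with the Wronskian identity of Lemma~\ref{prop:boundary-integral}.

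\emph{Jones--Grillakis count.} As worded, your count is off by one. With $L_-^\eps\ge0$ there are exactly $n_-(L_+^\eps)-1$ real unstable pairs when $d\mathcal N/dE<0$, not ``$n_-(L_+^\eps)-1$, less one''. Your conclusion of one unstable eigenvalue when $n_-(L_+^\eps)=2$ is nonetheless correct.
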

\begin{rem}[Stability for $|\gs|$ small]\label{rem:gs-small}
  Via asymptotic analysis and estimates of terms comprising  the expressions for $\Omega_\star$ in \eqref{eq:Omega} and $\nu_\star$ in \eqref{eq:nu}, it can be proved that
$\Omega_\star>0$ and $\nu_\star<0$ provided $|\gs|$ is sufficiently small. Corollary ~\ref{cor:small-gamma-stability} states that if $U_\star>0$ and $|\gs|$ is small then, there exists $\eps_0>0$ such that for all $0<\eps<\eps_0$,  the resonance-induced nonlinear bound states $\eps\mapsto \big(E(\eps),\psi_\eps(\cdot;E(\eps))\big)\in \R\times X$ of Theorem \ref{thm:intro-stability-sharp} are Lyapunov stable in $X$.
\end{rem}
}




\begin{rem}[On the proof of Theorem \ref{thm:intro-stability-sharp}]\label{rem:on_the_pf} A stability theory, summarized in Theorem \ref{thm:W-GSS} below, provides criteria that ensure that the given nonlinear bound state is a local minimum of the Hamiltonian energy subject to fixed $\mathcal{N}$ ($\mathcal{N}[\psi]=\|\psi\|_{L^2}^2$). Stability is controlled by a constrained variational problem for a pair of quadratic forms: $u\mapsto \left\langle L_+u,u\right\rangle$ 
 and $v\mapsto \left\langle L_-v,v\right\rangle$,  where $u+iv\in X$. Very succinctly, condition (i) on the positivity of the resonance eigenstate of $H_V$ (an extension of the property  of positivity of the ground state in the soliton theory) implies positivity of $\left\langle L_-v,v\right\rangle$, and conditions (ii) and (iii) are used to obtain the positivity of $\left\langle L_+u,u\right\rangle$. The most delicate point arises from NLS/GP not being invariant under spatial translation. This induces the zero energy eigenvalue of NLS ($V=0$) to perturb away from zero. Proposition~\ref{prop:mu} gives, for  $\eps>0$ and small, $\mu(\eps)=\tfrac{3\eps^2}{4\gs}\Omega_\star+\mathcal O(\eps^3)$.  Hence, the condition (ii) $\Omega_\star>0$ ensures that $\mu(\eps)$ does not contribute a negative direction for the quadratic form
  $\left\langle L_+u,u\right\rangle$, which would be a direction of instability. Our proof uses that $L_\pm$ are Schr\"odinger operators with P\"oschl--Teller type potentials for which there is an explicit spectral theory. 
  \end{rem}
  \bigskip\bigskip

\medskip

\subsection{Outline of the paper }\label{ssec:organization}

Section~\ref{sec:scattering} provides background: a quick review of 1D linear scattering from a potential well, and the special case of P\"oschl--Teller potentials. We also state a version of our result in  Theorem \ref{thm:intro-bifurcation} (see \cite{turner2026resonance}) on resonance-induced nonlinear bound state bifurcation, which we apply in our stability analysis. Section~\ref{sec:stability-theory} reviews the orbital stability theory, and how the analysis of the appropriate $L_\pm$ operators plays out for NLS/GP with $V=0$ (stability of the 1-soliton) and NLS/GP, where the bifurcation is from a simple $L^2$-eigenvalue of $H_V$. Section \ref{sec:prelim-rmk} contains preliminary remarks on the stability analysis of  resonance-induced nonlinear bound states.  Section~\ref{sec:main-thm} proves our main result,  Theorem~\ref{thm:intro-stability-sharp}, and derives Corollary~\ref{cor:small-gamma-stability}, and explains how the proofs are  reduced to Propositions~\ref{prop:mu},~\ref{prop:slope},~\ref{prop:Lp-count}. The proof is given in Section~\ref{sec:proofs}; the transmission resonance case (TR) is presented with all details, and the required modifications for the scattering resonance case (SR) are discussed in Remark~\ref{rem:SR-stability}.
In Section~\ref{ssec:stab-regimes} we present a  geometric interpretation of our stability and instability criteria (Corollaries~\ref{cor:stability-TR} and~\ref{cor:stability-SR})  along with illustrative numerical examples. In Section \ref{ssec:mu-verification} we  provide a numerical corroboration of the key perturbation expression (Proposition~\ref{prop:mu}) for the translation eigenvalue $\mu(\eps)$, and a construction (Proposition~\ref{prop:threshold-construction}) of stable TR branches for small $\gs > 0$ (Corollary~\ref{cor:small-gamma-stability}).  In Section~\ref{sec:dynamics} we present two cases of the long-time simulations for the time-dependent  NLS-GP initial value problem with initial data in a neighborhood of an unstable resonance-induced nonlinear bound state.  Section~\ref{sec:discussion} discusses extensions and open questions.

\subsection*{Acknowledgments} 
MIW acknowledges the very stimulating collaboration with Harvey Rose \cite{rose1988bound} during the 1980s, where the questions explored in \cite{turner2026resonance} and the current work originate.  The authors thank Henri Berestycki,  Panayotis Kevrekidis, Eduard Kirr, Mikael Rechtsman and the research group of Sebastian Will for stimulating discussions. 
This work was supported in part by NSF grants: DMS-1908657, DMS-1937254 and DMS-2510769, and Simons Foundation Math+X Investigator Award \#376319 (MIW). Part of this research was carried out during the 2023-24 academic year, when MIW was a Visiting Member in the School of Mathematics, Institute of Advanced Study, Princeton, supported by the Charles Simonyi Endowment, and a Visiting Fellow in the Department of Mathematics at Princeton University.

\section{Linear scattering theory and resonance induced nonlinear bound states}\label{sec:scattering}

This section reviews scattering and transmission resonances, summarizes the bifurcation results from \cite{turner2026resonance}, and provides background on the spectral theory of Schr\"odinger operators with potentials of  P\"oschl--Teller type. Figure~\ref{fig:warmup-bifurcation} provides a schematic  of the three bifurcation scenarios for nonlinear bound states of NLS/GP.

\begin{figure}[p]
\centering
\makebox[0.47\linewidth][c]{\textbf{Linear} ($k$-plane)}%
\hfill%
\makebox[0.50\linewidth][c]{\textbf{Nonlinear} $(\mathcal{N},E)$-plane}\\[2pt]

\begin{minipage}[t]{0.47\linewidth}\centering
\vspace{0pt}%
\begin{tikzpicture}[scale=0.65,transform shape]
  \fill[panelbgRW, rounded corners=4pt] (-3.7,-3.7) rectangle (3.7,3.7);
  \node[font=\normalsize\bfseries, anchor=south, rotate=90] at (-3.55,0) {(a) Bound state};
  \draw[->] (-3.5,0)--(3.5,0) node[below,font=\footnotesize]{Re$(k)$};
  \draw[->] (0,-3.5)--(0,3.5) node[right,font=\footnotesize]{Im$(k)$};
  \foreach \x/\y in {0/2.75, -1.2/-1, 1.2/-1, -2.2/-1.5, 2.2/-1.5, -3.2/-2, 3.2/-2}{
    \filldraw[figgreen!55, draw=black!35, thin] (\x,\y) circle (2pt);
  }
  \foreach \x/\y in {0/1.5, 0/-1.75, -1.2/0, 1.2/0, -2.2/0, 2.2/0, -3.2/0, 3.2/0}{
    \draw[zeroblue!50, line width=1pt]
      (\x-0.17,\y-0.17)--(\x+0.17,\y+0.17)
      (\x-0.17,\y+0.17)--(\x+0.17,\y-0.17);
  }
  \filldraw[figgreen, draw=black, line width=0.6pt] (0,0.7) circle (3.8pt)
    node[right=8pt, font=\footnotesize, text=black, fill=white, draw=gray!30, thin, inner sep=1pt]{$k_\star = i\kappa_\star$};
  \draw[<->, gray!55, line width=0.6pt] (1,2)--(3,2) node[below,font=\footnotesize,gray!70]{$x$};
  \bifwell{2}{2}{0.4}{0.55}
  \begin{scope}[shift={(2,2)}, xscale=0.5, yscale=0.35]
    \draw[black, line width=1.4pt, domain=-2:2, samples=80, smooth]
      plot(\x, {0.8*bifwarmTanh(4*\x)*exp(-0.6*abs(\x))});
  \end{scope}
  \node[font=\scriptsize, black] at (2.5,2.58) {$U_\star$};
\end{tikzpicture}
\end{minipage}%
\hfill%
\begin{minipage}[t]{0.50\linewidth}\centering
\vspace{0pt}%
\begin{tikzpicture}[scale=0.65,transform shape]
  \fill[panelbgRW, rounded corners=4pt] (-12,-0.7) rectangle (0.7,6.2);
  \draw[->] (0,0)--(-11.5,0) node[below]{$E$};
  \draw[->] (0,0)--(0,6) node[right]{$\mathcal{N}[\psi_E]$};
  \draw[rwgray, line width=1.8pt] plot[smooth, tension=1]
    coordinates{(-1,0)(-3,2)(-6,3.5)(-9,4.6)(-11,5.2)};
  \filldraw[figgreen, draw=black, line width=0.5pt] (-1,0) circle (2.8pt);
  \node[below=4pt, font=\footnotesize, text=black, fill=white, draw=gray!30, thin, inner sep=1pt]
    at (-1,0) {$E_\star = k_\star^2$};
  \foreach \nx/\ny/\amp/\tran in {-1.4/0.9/.1/.1, -5.0/3.0/.5/.1, -9.5/4.7/.8/.1}{
    \draw[<->, gray!55, line width=0.6pt] (\nx-0.9,\ny)--(\nx+0.9,\ny) node[right,font=\footnotesize,gray!70]{$x$};
    \draw[black, line width=1.2pt, domain=-0.8:0.8, samples=50, smooth, shift={(\nx,\ny)}]
      plot(\x,{0.8*\amp*(bifwarmSech(12*(\x-\tran))-bifwarmSech(12*(\x+\tran)))});
    \bifwell{\nx}{\ny}{0.13}{0.4}
  }
\end{tikzpicture}
\end{minipage}

\vspace{2pt}

\begin{minipage}[t]{0.47\linewidth}\centering
\vspace{0pt}%
\begin{tikzpicture}[scale=0.65,transform shape]
  \fill[panelbgNew, rounded corners=4pt] (-3.7,-3.7) rectangle (3.7,3.7);
  \node[font=\normalsize\bfseries, anchor=south, rotate=90] at (-3.55,0) {(b) Scattering resonance};
  \draw[->] (-3.5,0)--(3.5,0) node[below,font=\footnotesize]{Re$(k)$};
  \draw[->] (0,-3.5)--(0,3.5) node[right,font=\footnotesize]{Im$(k)$};
  \foreach \x/\y in {0/1.5, -1/-1, 1/-1, -2/-1.5, 2/-1.5, -3/-2, 3/-2}{
    \filldraw[figgreen!55, draw=black!35, thin] (\x,\y) circle (2pt);
  }
  \foreach \x/\y in {-1.2/0, 1.2/0, -2.2/0, 2.2/0, -3.2/0, 3.2/0}{
    \draw[zeroblue!50, line width=1pt]
      (\x-0.17,\y-0.17)--(\x+0.17,\y+0.17)
      (\x-0.17,\y+0.17)--(\x+0.17,\y-0.17);
  }
  \filldraw[figgreen, draw=black, line width=0.6pt] (0,-0.7) circle (3.8pt)
    node[right=8pt, font=\footnotesize, text=black, fill=white, draw=gray!30, thin, inner sep=1pt]{$k_\star={-}i\gs$};
  \draw[<->, gray!55, line width=0.6pt] (1,2)--(3,2) node[below,font=\footnotesize,gray!70]{$x$};
  \bifwell{2}{2}{0.4}{0.2}
  \begin{scope}[shift={(2,2)}, xscale=0.5, yscale=0.35]
    \draw[black, line width=1.4pt, domain=-2:2, samples=80, smooth]
      plot(\x, {0.3*bifwarmTanh(3*\x)*exp(0.6*abs(\x))});
  \end{scope}
  \node[font=\scriptsize, black] at (2.5,2.58) {$U_\star$};
\end{tikzpicture}
\end{minipage}%
\hfill%
\begin{minipage}[t]{0.50\linewidth}\centering
\vspace{0pt}%
\begin{tikzpicture}[scale=0.65,transform shape]
  \fill[panelbgNew, rounded corners=4pt] (-12,-0.7) rectangle (0.7,6.2);
  \draw[->] (0,0)--(-11.5,0) node[below]{$E$};
  \draw[->] (0,0)--(0,6) node[right]{$\mathcal{N}[\psi_E]$};
  \draw[refgray, line width=0.9pt, dotted, domain=0.05:8.4, samples=100]
    plot({-\x},{2.0*sqrt(\x)});
  \node[refgray, font=\footnotesize, anchor=south] at (-5,4.65)
    {$2\mathcal{N}[\mathcal{S}_E]$};
  \draw[srbrown, line width=1.8pt] plot[smooth, tension=1]
    coordinates{(-1,2)(-3,3.0)(-6,3.8)(-9,4.4)(-11,4.8)};
  \filldraw[figgreen, draw=black, line width=0.6pt] (-1,2) circle (3.8pt);
  \draw[black!70] (-1,0)--(-1,-0.15)
    node[below, font=\footnotesize, text=black, fill=white, draw=gray!30, thin, inner sep=1pt]{$E_\star = k_\star^2$};
  \draw[black!70] (0,2)--(0.12,2)
    node[right, font=\footnotesize, text=black, fill=white, draw=gray!30, thin, inner sep=1pt]{$\mathcal{N}_\star$};
  \foreach \nx/\ny/\amp/\tran/\scal in {-2.5/3.3/.5/.6/1.0, -6.0/4.4/.5/.3/1.4, -9.5/5.3/1/.1/2.0}{
    \draw[<->, gray!55, line width=0.6pt] (\nx-0.9,\ny)--(\nx+0.9,\ny) node[right,font=\footnotesize,gray!70]{$x$};
    \draw[black, line width=1.2pt, domain=-0.9:0.9, samples=200, smooth, shift={(\nx,\ny)}]
      plot(\x,{0.8*\amp*(bifwarmSech(12*(\scal*\x-\tran))-bifwarmSech(12*(\scal*\x+\tran)))});
    \bifwell{\nx}{\ny}{0.13}{0.2}
    \ifdim \nx pt = -2.5pt
      \node[font=\scriptsize] at ({\nx-\tran/\scal-.18},{\ny-0.44}) {$-\infty\!\leftarrow$};
      \node[font=\scriptsize] at ({\nx+\tran/\scal+.05},{\ny+0.43}) {$\to\!\infty$};
    \fi
  }
\end{tikzpicture}
\end{minipage}

\vspace{2pt}

\begin{minipage}[t]{0.47\linewidth}\centering
\vspace{0pt}%
\begin{tikzpicture}[scale=0.65,transform shape]
  \fill[panelbgNew, rounded corners=4pt] (-3.7,-3.7) rectangle (3.7,3.7);
  \node[font=\normalsize\bfseries, anchor=south, rotate=90] at (-3.55,0) {(c) Transmission resonance};
  \draw[->] (-3.5,0)--(3.5,0) node[below,font=\footnotesize]{Re$(k)$};
  \draw[->] (0,-3.5)--(0,3.5) node[right,font=\footnotesize]{Im$(k)$};
  \foreach \x/\y in {0/.2, 0/1.5, -1/-1, 1/-1, -2/-1.5, 2/-1.5, -3/-2, 3/-2}{
    \filldraw[figgreen!55, draw=black!35, thin] (\x,\y) circle (2pt);
  }
  \foreach \x/\y in {0/.7, -1.2/0, 1.2/0, -2.2/0, 2.2/0, -3.2/0, 3.2/0}{
    \draw[zeroblue!50, line width=1pt]
      (\x-0.17,\y-0.17)--(\x+0.17,\y+0.17)
      (\x-0.17,\y+0.17)--(\x+0.17,\y-0.17);
  }
  \draw[zeroblue, line width=2.2pt, line cap=round]
    (-0.25,-0.95)--(0.25,-0.45) (-0.25,-0.45)--(0.25,-0.95);
  \node[right=8pt, font=\footnotesize, text=black, fill=white, draw=gray!30, thin, inner sep=1pt] at (0,-0.7)
    {$k_\star={-}i\gs$};
  \draw[<->, gray!55, line width=0.6pt] (1,2)--(3,2) node[below,font=\footnotesize,gray!70]{$x$};
  \bifwell{2}{2}{0.4}{0.4}
  \begin{scope}[shift={(2,2)}, xscale=0.5, yscale=0.35]
    \draw[black, line width=1.4pt, domain=-2:2, samples=80, smooth]
      plot(\x, {0.7*bifwarmTanh(6*\x)*exp(0.4*\x)});
  \end{scope}
  \node[font=\scriptsize, black] at (2.5,2.58) {$U_\star$};
\end{tikzpicture}
\end{minipage}%
\hfill%
\begin{minipage}[t]{0.50\linewidth}\centering
\vspace{0pt}%
\begin{tikzpicture}[scale=0.65,transform shape]
  \fill[panelbgNew, rounded corners=4pt] (-12,-0.7) rectangle (0.7,6.2);
  \draw[->] (0,0)--(-11.5,0) node[below]{$E$};
  \draw[->] (0,0)--(0,6) node[right]{$\mathcal{N}[\psi_E]$};
  \draw[refgray, line width=0.9pt, dashed, domain=0.05:11.5, samples=120]
    plot({-\x},{1.0*sqrt(\x)});
  \node[refgray, font=\footnotesize, anchor=south] at (-5,2.38)
    {$\mathcal{N}[\mathcal{S}_E]$};
  \draw[trblue, line width=1.8pt] plot[smooth, tension=0.6]
    coordinates{(-1,1)(-2,2.1)(-3.5,3.2)(-5.5,4.2)(-7.5,4.9)(-9.5,5.4)(-11,5.7)};
  \filldraw[white, draw=black, thin] (-1,1) circle (2.5pt);
  \draw[zeroblue, line width=2.2pt, line cap=round]
    (-1.25,0.75)--(-0.75,1.25) (-1.25,1.25)--(-0.75,0.75);
  \draw[black!70] (-1,0)--(-1,-0.15)
    node[below, font=\footnotesize, text=black, fill=white, draw=gray!30, thin, inner sep=1pt]{$E_\star = k_\star^2$};
  \draw[black!70] (0,1)--(0.12,1)
    node[right, font=\footnotesize, text=black, fill=white, draw=gray!30, thin, inner sep=1pt]{$\mathcal{N}_\star$};
  \draw[<->, gray!55, line width=0.6pt] (-9.5-0.9,5.55)--(-9.5+0.9,5.55) node[right,font=\footnotesize,gray!70]{$x$};
  \draw[black, line width=1.2pt, domain=-0.8:0.8, samples=120, smooth, shift={(-9.5,5.55)}]
    plot(\x,{0.8*0.7*(bifwarmSech(28*(\x-0.1))-.8*bifwarmSech(28*(\x+0.1)))});
  \bifwell{-9.5}{5.55}{0.13}{0.3}
  \draw[<->, gray!55, line width=0.6pt] (-6.0-0.9,4.55)--(-6.0+0.9,4.55) node[right,font=\footnotesize,gray!70]{$x$};
  \draw[black, line width=1.2pt, domain=-0.8:0.8, samples=60, smooth, shift={(-6.0,4.55)}]
    plot(\x,{0.8*0.5*(1.1*bifwarmSech(25*(\x-0.3))-.2*bifwarmSech(12*(\x+0.07)))});
  \bifwell{-6.0}{4.55}{0.13}{0.3}
  \pgfmathsetmacro{\xc}{-2.5+0.6}
  \pgfmathsetmacro{\yc}{2.8}
  \draw[<->, gray!55, line width=0.6pt] (-2.5-0.9,2.8)--(-2.5+0.9,2.8) node[right,font=\footnotesize,gray!70]{$x$};
  \draw[black, line width=1.2pt, domain=-1.4:0.3, samples=50, smooth, shift={(\xc,\yc)}]
    plot(\x,{0.8*0.5*bifwarmSech(15*\x)});
  \node[font=\scriptsize] at (\xc,\yc+0.47) {$\to\!\infty$};
  \bifwell{-2.5}{2.8}{0.13}{0.3}
\end{tikzpicture}
\end{minipage}

\vspace{1pt}
\setlength{\abovecaptionskip}{3pt}
\captionsetup{font=footnotesize}
\caption{%
  Schematic bifurcation diagrams for the three nonlinear bound-state bifurcation scenarios.
  Left: the $k$-plane, with the bifurcating pole (\specpole) or zero (\speczero) highlighted.
  Right: the corresponding $(\mathcal{N},E)$ branches, with reference soliton curves and inset profiles.
  Panel \textbf{(a)} is the $L^2$ bound-state case~\cite{rose1988bound}; panels \textbf{(b)} and \textbf{(c)} are the scattering- and transmission-resonance cases from~\cite{turner2026resonance}.%
}
\label{fig:warmup-bifurcation}
\end{figure}
\subsection{Scattering and transmission resonances}\label{ssec:resonance-def}

Let $V \in L^1(\R)$ and be compactly supported in $[-b,b]$. Set $H_V = -\partial_x^2 + V$. For $k \in \C$, the \emph{Jost solutions} $f_\pm(x,k)$ are the unique solutions of $H_V\phi = k^2\phi$ satisfying $f_\pm(x,k)\sim e^{\pm ikx}$ as $x\to\pm\infty$; see  \cite{deift1979inverse,Newton1982,korotyaev2005inverse}. For any two solutions $f, g$ of $H_V\phi = k^2\phi$, their \emph{Wronskian} $W[f,g] := fg' - f'g$ is independent of $x$. 
Define the \emph{left reflection coefficient}
\begin{equation}\label{eq:r-minus-def}
  r_-(k) := \frac{W\bigl[f_+(\,\cdot\,,k),\, f_-(\,\cdot\,,-k)\bigr]}{W\bigl[f_+(\,\cdot\,,k),\, f_-(\,\cdot\,,k)\bigr]}, \qquad k \in \R\ .
\end{equation}
On the real axis, $\overline{f_-(x,k)} = f_-(x,-k)$. Both Jost solutions $f_\pm(x,k)$ extend analytically in $k$ to all of $\C$, since $V$ has compact support. The same is true of the two Wronskians appearing in \eqref{eq:r-minus-def} and hence $r_-(k)$ extends meromorphically to $\C$. 
The \emph{right reflection coefficient} is
\begin{equation}\label{eq:r-plus-def}
  r_+(k) \;:=\; \frac{W\bigl[f_+(\,\cdot\,,-k),\, f_-(\,\cdot\,,k)\bigr]}{W\bigl[f_+(\,\cdot\,,k),\, f_-(\,\cdot\,,k)\bigr]}, \qquad k \in \R\ .
\end{equation}
Since $r_\pm(k)$ share the denominator $W[f_+(\cdot,k),\,f_-(\cdot,k)]$, their poles (bound states and scattering resonances) coincide.
Their zeros (transmission resonances) are related by $k$-reflection: $r_-(k_\star)=0$ if and only if $r_+(-k_\star)=0$, since the numerator of $r_+$ at $k$ equals the numerator of $r_-$ at $-k$.

Poles of $r_-(k)$ in $\{\operatorname{Im} k > 0\}$ correspond to $L^2$ bound states of $H_V$; by self-adjointness, these lie on $i\R_+$. Poles $k_\star$ in $\{\operatorname{Im} k < 0\}$ are \emph{scattering resonances}, and those resonances which have zero real part correspond to \emph{anti-bound states}. Zeros $k_\star$ of $r_-(k)$ are \emph{transmission resonances}. These three cases correspond to the left-column panels of Figure~\ref{fig:warmup-bifurcation}.

Nonlinear bound states are time-harmonic solutions $e^{-iEt}\psi(x;E)$ and  hence $E\in\mathbb R$ must be real. Further, since $\psi(x;E)$ decays at spatial infinity, $E<0$. Our resonance-induced nonlinear bound states bifurcate from scattering or transmission 
resonances $(k_\star,U_\star)$; see Theorem \ref{thm:intro-bifurcation}  and Theorem \ref{thm:bifurcation} below.  Thus, at the bifurcation energy, $E_\star$, we require $k_\star^2=E_\star<0$, which forces $k_\star\in i\R$. 
We write 
\[ k_\star := -i\gs\qquad \textrm{with}\qquad  \gs\in\R\quad \textrm{and resonance energy $E=-\gs^2$.}\]
We delineate three cases:
\begin{itemize}[leftmargin=2em]
\item $L^2$ bound states / poles of $r_-(k)$ in the $\C_+$,  $\gs<0$. These bifurcations were studied in \cite{rose1988bound};
\item Scattering resonances / poles of $r_-(k)$ in the $\C_-$,  $\gs>0$;
\item Transmission resonances / zeros of $r_-(k)$ on $i\R$, $\gs\in\R$. By Remark~\ref{rem:gamma-sign}, one may reduce to the case $\gs>0$.
\end{itemize}
Thus we take $\gs>0$.
For both  \emph{scattering resonances} (SR) and \emph{transmission resonances} (TR) we work with $\gs>0$.

For scattering resonances, we work on the half-line $X = \Hd$; see Remark~\ref{rem:full-line-instability}. For transmission resonances, we work on the full line $X = \Hf$.

A resonance pair  $(\gs,U_\star)$ is a nontrivial solution of
\begin{equation}\label{eq:resonance-eq}
  (H_V + \gs^2)U_\star = 0, \qquad E_\star = -\gs^2.
\end{equation}
 subject to scattering resonance (SR) or transmission resonance (TR) boundary conditions:
\begin{itemize}[leftmargin=2em]
\item \emph{Scattering resonance (SR) case:} ${\rm domain}(H_V) = \Hd$ and
\begin{equation}\label{eq:SR-BC}
  U_\star(0) = 0, \qquad U_\star'(b) = \gs\, U_\star(b), 
\end{equation}
\item \emph{Transmission resonance (TR) case:} ${\rm domain}(H_V) = \Hf$ and
\begin{equation}\label{eq:TR-BC}
  U_\star'(-b) = \gs\, U_\star(-b), \qquad U_\star'(b) = \gs\, U_\star(b).
\end{equation}
\end{itemize}
The boundary conditions in \eqref{eq:SR-BC} and \eqref{eq:TR-BC} correspond to the relevant outgoing condition at $\pm b$ for $k_\star=-i\gs$ on the imaginary axis.

Note that if $(\gs,U_\star)$ is a resonance pair, then $U_\star(b)\ne0$. Indeed, if $U_\star(b)=0$, then resonance boundary condition at $x=b$ (for both TR and SR cases) implies $U_\star'(b)=0$. Hence,  $U_\star\equiv0$, a contradiction. 

For a resonance eigenpair $(\gs, U_\star)$, we choose the normalizations:
\begin{equation}\label{eq:resonance-normalization}
\begin{aligned}
\text{(SR):}\quad & U_\star'(0) = 1\quad 
{\rm and}\quad \text{(TR):}\quad U_\star(-b) = 1.
\end{aligned}
\end{equation}
Outside the support of $V$, the scattering resonance profile satisfies $U_\star(x) \propto e^{\gs|x|}$, while the transmission profile satisfies $U_\star(x) \propto e^{\gs x}$ (with $\gs>0$, exponentially growing to the right and decaying to the left).

\subsection{A brief review of resonance-induced nonlinear bound states \cite{turner2026resonance}}



The following result is essentially Theorem~4.1 of \cite{turner2026resonance}; see also Theorem \ref{thm:intro-bifurcation}.

\begin{thm}[Resonance-induced nonlinear bound states {\cite[Theorem~4.1]{turner2026resonance}}]\label{thm:bifurcation}
  Let $(\gs, U_\star)$ be a scattering or transmission resonance pair with $\gs > 0$ as defined in Section~\ref{ssec:resonance-def}.  Assume that $(\gs, U_\star)$ satisfies the non-degeneracy condition $\mathcal{B}_\star\ne0$.  

  Then, there exists $\eps_0 > 0$ such that for $0 < \eps < \eps_0$ the following hold:
  \begin{enumerate}[leftmargin=2em, label=\textup{(\roman*)}]
  \item There is a curve $\eps\mapsto(E(\eps),\psi_\eps)\in\R\times X$, with 
  \[\psi_\eps(x) = \sqrt\eps\,\phi_\eps(x)\quad  \textrm{and}\quad \phi_0(x) = U_\star(x), \] of solutions to \eqref{eq:V-nls} which bifurcates from the trivial solution at $E=E_\star$. The maps 
  \begin{align*}
    \eps&\mapsto E(\eps)\quad\textrm{and}\quad 
    \eps\mapsto \phi_\eps\big|_{[a,b]} 
  \end{align*} 
  extend to real analytic functions in an open neighborhood of $\eps=0$. Any non-trivial solution   with $E$ near $E_\star$ is equal to $\psi_\varepsilon$ up to a multiplicative constant  of modulus one.
  \item The bifurcation occurs at the $L^2(\mathbb R)$-threshold $\mathcal N_\star = 4\gs$; see the right-column panels of Figure~\ref{fig:warmup-bifurcation}.
  \item \emph{Nonlinear bound state eigenpair expansion:} 
  \begin{align*}
   E(\eps) &= -\gs^2 + \eps\dot E + \mathcal O(\eps^2), \quad  \dot E := \left.dE/d\eps\right|_{\eps=0}; \textrm{ see \eqref{eq:Edot-solved}.}\\
    \phi_\eps\big|_{[a,b]} &= U_\star + \mathcal{O}(\eps)
  \end{align*} 
  \item\emph{SR-induced mode spatial structure on $\R_+$:}
  \begin{equation}
  \psi_\eps(x) =
  \begin{cases}
      \sqrt\eps\,[U_\star + \mathcal O(\eps)], & 0\le x\le b\\
      \mathcal{S}(x-y_R(\eps);E(\eps)) &   x>b, 
  \end{cases}
\end{equation}
where $y_R$ (see \eqref{eq:yRL}) satisfies $y_R(\eps)\to +\infty$ as $\eps\to0$.
  \item\emph{TR-induced mode spatial structure on $\mathbb R$ ($\gs>0$):}
  \begin{equation}\label{eq:psi_eps-expand}
  \psi_\eps(x) =
  \begin{cases}
  \mathcal{S}(x-y_L(\eps);E(\eps)) &   x<-b, \\
      \sqrt\eps\,[U_\star + \mathcal O(\eps)], & -b\le x\le b\\
      \mathcal{S}(x-y_R(\eps);E(\eps)) &   x>b, 
  \end{cases}
\end{equation}
where $y_R(\eps)$ and $y_L(\eps)$  tend to $ +\infty$ as $\eps\to0$, 
   and are given by
  \begin{subequations}\label{eq:yRL}
  \begin{align}\label{eq:yRL-tanh}
    y_R(\eps) &\;=\; b + \frac{1}{|E(\eps)|^{1/2}}\,\mathrm{arctanh}\!\sqrt{1 - \frac{\psi^2_\eps(+b^-)}{2\,|E(\eps)|}}; \ \ \psi^2_\eps(+b^-) = \eps U_\star^2(b)  + \mathcal{O}(\eps^2). \\
\label{eq:yL-tanh}
    y_L(\eps) &\;=\; -b + \frac{1}{|E(\eps)|^{1/2}}\,\mathrm{arctanh}\!\sqrt{1 - \frac{\psi^2_\eps(-b^+)}{2\,|E(\eps)|}};\ \ \psi^2_\eps(-b^+) = \eps + \mathcal{O}(\eps^2).
  \end{align}
  \end{subequations}
  \item 
  \begin{equation}\label{eq:Edot-solved}
    \dot E \;=\;
    \begin{cases}
      \dfrac{\,(U_\star(b)^4 - 1)/(2\gs) \;-\; 2\!\int_{-b}^b U_\star^4(s)\,ds\,}{2\,\mathcal B_\star^{\rm TR}} & \text{(TR)}, 
      \\[12pt]
      \dfrac{\,U_\star(b)^4/(2\gs) \;-\; 2\!\int_0^b U_\star^4(s)\,ds\,}{2\,\mathcal B_\star^{\rm SR}}
      & \text{(SR)}\ .
    \end{cases}
\end{equation}

  \end{enumerate}
 Note: In the construction of $\psi_\varepsilon(x)$,  $E(\eps)$, $\psi_\eps(-b^+)$ and $\psi_\eps(b^-)$, which appear in \eqref{eq:yRL} are first determined by the ``inner'' resonance boundary value problems \eqref{eq:resonance-eq}-\eqref{eq:SR-BC} on $[0,b]$, respectively \eqref{eq:resonance-eq}-\eqref{eq:TR-BC} on $[-b,b]$, and then matched to a soliton in the region outside ${\rm supp}(V)$.
\end{thm}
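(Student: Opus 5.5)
The plan is an inner–outer reduction: rewrite \eqref{eq:V-nls} as a nonlinear boundary value problem on $[a,b]$ whose boundary conditions encode exact soliton tails, and then apply the analytic implicit function theorem at the resonance pair. Throughout, $[a,b]=[-b,b]$ in the (TR) case and $[a,b]=[0,b]$ in the (SR) case.

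\textbf{Step 1: reduction to real profiles.} Let $\psi\in X$ solve \eqref{eq:V-nls}. The current $\operatorname{Im}(\bar\psi\psi')$ is constant in $x$. It vanishes at infinity in the (TR) case, and at $x=0$ in the (SR) case. Hence $\psi$ is real up to a constant phase, and we may take $\psi$ real.

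\textbf{Step 2: the exterior problem.} Outside $[a,b]$ we have $V=0$. An $H^1$ solution there satisfies the first integral
\[
(\psi')^2=|E|\psi^2-\tfrac12\psi^4 ,
\]
and so it is a piece of a translated free soliton $\mathcal S(x-y;E)$. For $\psi$ small at the boundary, matching to the growing resonance behaviour selects the branch
\[
\psi'(b)=\psi(b)\sqrt{|E|-\tfrac12\psi(b)^2}
\]
at $x=b$. In the (TR) case the same relation holds at $x=-b$, where the soliton piece is a decaying left tail whose peak lies far to the right. Set $\psi=\sqrt\eps\,\phi$. The problem becomes
\[
-\phi''+V\phi-\eps\phi^3=E\phi \ \text{ on } [a,b],\qquad \phi'(\pm b)=\phi(\pm b)\sqrt{|E|-\tfrac{\eps}{2}\phi(\pm b)^2},
\]
where in the (SR) case the condition at $0$ is replaced by $\phi(0)=0$. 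We add the normalization \eqref{eq:resonance-normalization}. At $\eps=0$ and $E=-\gs^2$ this is exactly the resonance problem \eqref{eq:resonance-eq} with \eqref{eq:TR-BC}, respectively \eqref{eq:SR-BC}, solved by $U_\star$.

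\textbf{Step 3: analytic implicit function theorem.} I will formulate the problem via a shooting map. Integrate the ODE from the left endpoint with normalized data, so the solution depends analytically on $(E,\eps)$ on the compact interval. The unknown $E$ is then fixed by the single scalar equation
\[
F(E,\eps):=\phi'(b)-\phi(b)\sqrt{|E|-\eps\phi(b)^2/2}=0 .
\]
The square root is analytic near $(E_\star,0)$ because $|E_\star|=\gs^2>0$.

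\emph{Main obstacle: transversality} $\partial_EF(E_\star,0)\neq0$. Differentiate the ODE and the left boundary condition in $E$ (using $\partial_E\sqrt{|E|}=-1/(2\gs)$), multiply by $U_\star$, and integrate by parts (Green's identity). This should give
\[
\partial_EF\propto \int U_\star^2-\frac{U_\star(b)^2}{2\gs}+\frac{1}{2\gs},
\]
where the last term comes from the $x=-b$ Robin condition in the (TR) case and is absent in the (SR) case. This is precisely $\mathcal B_\star$, so the hypothesis $\mathcal B_\star\neq0$ is what is needed. Tracking signs and normalizations in the two boundary contributions is the delicate part. The analytic implicit function theorem then yields real-analytic maps $\eps\mapsto E(\eps)$ and $\eps\mapsto\phi_\eps|_{[a,b]}$, and local uniqueness of nontrivial solutions up to phase.

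\textbf{Step 4: the expansion of $\dot E$.} Apply the same Green identity to $\partial_\eps F$. This produces the contributions $-2\int U_\star^4$ from the cubic term, $U_\star(b)^4/(2\gs)$ from the $\eps$-dependence of the square root at $b$, and $-1/(2\gs)$ from the condition at $-b$ (TR only). This gives \eqref{eq:Edot-solved} via $\dot E=-\partial_\eps F/\partial_E F$.

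\textbf{Step 5: soliton positions and the $L^2$ threshold.} Integrating the first-order exterior ODE explicitly gives the $\operatorname{arctanh}$ formulas \eqref{eq:yRL}; since $\psi_\eps^2(\pm b)=\mathcal O(\eps)$, they show $y_R,y_L\to+\infty$. For the mass:
\begin{enumerate}
\item the inner region contributes $\mathcal O(\eps)$;
\item the right exterior region contains an almost complete soliton, of mass $\to\int\mathcal S^2=4\sqrt{|E|}\to4\gs$;
\item in the (TR) case the left exterior region is only a far tail, of mass $\mathcal O(\eps)$.
\end{enumerate}
Hence $\mathcal N\to\mathcal N_\star=4\gs$.
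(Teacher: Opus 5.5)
Your proposal is correct and follows essentially the same inner--outer route as the paper. The exterior first integral turns the problem into a nonlinear Robin problem on $[a,b]$, the Wronskian identity of Lemma~\ref{prop:boundary-integral} gives $\dot E$, continuity at $\pm b$ gives the $\mathrm{arctanh}$ centers, and the mass splits as in Section~\ref{ssec:proof-slope}. The difference is packaging: the paper takes existence and analyticity from \cite{turner2026resonance} and computes $\dot E$ by expanding $u=U_\star+\eps\dot u$ along that branch, whereas you also supply the existence step through the scalar shooting function $F$. Your transversality computation, $\partial_E F(E_\star,0)=-\mathcal B_\star/U_\star(b)$ in both the TR and SR cases, checks out, and it shows directly why $\mathcal B_\star\neq0$ is the natural nondegeneracy condition and why it is the denominator of $\dot E$.
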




Theorem \ref{thm:bifurcation} is essentially Theorem~4.1 of \cite{turner2026resonance}, but with more explicit detail concerning $y_L(\eps)$, $y_R(\eps)$, and the expression for $\dot E$. The proof of~\eqref{eq:Edot-solved} is in Section~\ref{ssec:proof-Edot}; the proof of~\eqref{eq:yRL-tanh},~\eqref{eq:yL-tanh} is in Section~\ref{ssec:proof-centers}.

\begin{rem}[Sign of $\gs$]\label{rem:gamma-sign}
Without loss of generality, we work with $\gs > 0$  throughout; $k_\star=-i\gs\in i\mathbb R_-$ for both SR and TR. Scattering resonances always live on $\{\operatorname{Im}(k_\star)< 0 \}$. The upper half-plane transmission case $k_\star \in i\R_+$ can be mapped to the lower half-plane via the reflection $V(x) \mapsto V(-x)$. By the same reflection, the TR formulas~\eqref{eq:Bstar},~\eqref{eq:Edot-solved} persist verbatim if one works with the mirrored normalization $U_\star(x) \mapsto U_\star(-x)$ with normalization $U_\star(b) = 1$. In this case,  the roles of $U_\star(b)$ and $U_\star(-b)$ are swapped throughout, but $\mathcal B_\star$ and $\dot E$ are unchanged. In the soliton-center expressions~\eqref{eq:yRL-tanh},~\eqref{eq:yL-tanh}  the $\mathrm{arctanh}$ terms acquire a minus sign, implying that $y_{R,L}(\eps) \to -\infty$ (rather than $+\infty$) as $\eps \to 0^+$.
\end{rem}

\medskip

\subsection{The P\"oschl--Teller potential}\label{ssec:PT}

As detailed in Section \ref{sec:stability-theory}, the stability of a nonlinear bound state depends crucially on the spectral properties of the linearized operators $L_-$ and $L_+$, each a  Schr\"odinger operator with a localized potential derived from the nonlinear bound state, $\psi_\varepsilon$. For the case  of cubic (indeed any homogeneous power) nonlinearity, $L_-$ and $L_+$ are operators with {\it P\"oschl--Teller} type  potentials. We shall make extensive use of the explicit spectral properties of such operators.

\begin{definition}[P\"oschl--Teller potentials and Schr\"odinger operators]\label{def:PT}
For parameters: $\ell = 1,2,3,\dots$ and $\la_0 > 0$, we introduce the  \emph{P\"oschl--Teller potential}  \[V_{\rm PT}(\xi;\la_0) = -\ell(\ell+1)\la_0^2\sech^2(\la_0\xi), \] and 
the P\"oschl--Teller Schr\"odinger operator:
\begin{equation}\label{eq:PTpot} H_{\rm PT}(\la_0,\ell) = -\partial_\xi^2 + V_{\rm PT}(\xi;\lambda_0,\ell).\end{equation}
The P\"oschl--Teller Schr\"odinger operator with potential centered at $\xi=y$ is denoted by:
\begin{equation}\label{eq:PTpot_y} H_{\rm PT}(\la_0,\ell,y) = -\partial_\xi^2 + V_{\rm PT}(\xi-y;\lambda_0,\ell).\end{equation}
\end{definition}

We next construct a class of scattering states for $H_{\rm PT}$ in terms of the Jost solutions,  $f_\pm(\xi, k)$, of Section~\ref{ssec:resonance-def}. This will lead us to a determination of the point spectrum of $H_{\rm PT}$.

\begin{lem}[Jost solutions of $H_{\rm PT}(\la_0,\ell)$, $\ell = 1, 2$]\label{lem:Jost}
Let $\ell\in\{1,2\}$. For $\la>0$, define  
\begin{subequations}
\label{eq:m_and_t}
    \begin{align}
    m &:= \la/\la_0 \label{eq:m-def}\\
    t(\xi) &:= \tanh(\la_0\xi).\label{eq:t-def}
    \end{align}
\end{subequations}
The unique solution of
\begin{equation} H_{\rm PT}(\lambda_0,\ell)\phi = -\la^2\phi\quad \textrm{ with }\quad \phi(\xi)\sim e^{-\la\xi}  \quad \textrm{as $\xi\to+\infty$}\end{equation}
is given in terms of the Jost solution $f_+$:
\begin{equation}\label{eq:Jost}
f_+(\xi, i\la; \la_0) = \begin{cases}
\displaystyle e^{-\la\xi}\,\frac{m + t(\xi)}{m + 1} & (\ell = 1),\\[6pt]
\displaystyle e^{-\la\xi}\,\frac{p\bigl(m, t(\xi)\bigr)}{(m+1)(m+2)} & (\ell = 2)\ ,
\end{cases}
\end{equation}
where, for $\ell=2$,
\begin{equation}\label{eq:p-def}
  p(m,t) \;:=\; m^2 - 1 + 3mt + 3t^2.
\end{equation}
Further,  the unique solution with asymptotic behavior $\phi(\xi)\sim e^{+\la\xi}$ as $\xi\to-\infty$ is given by
\begin{equation} \label{eq:f_pm}
 f_-(\xi, i\la; \la_0) = f_+(-\xi, i\la; \la_0).\end{equation}
\end{lem}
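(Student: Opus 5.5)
The plan is a direct verification: substitute a factored ansatz, reduce the P\"oschl--Teller equation to an ODE in the variable $t=\tanh(\la_0\xi)$, check that the stated polynomials solve it, and then settle normalization and uniqueness by a standard asymptotic argument. We write
\[
\phi(\xi)=e^{-\la\xi}\,q\bigl(t(\xi)\bigr),\qquad t(\xi)=\tanh(\la_0\xi),
\]
and use $t'=\la_0(1-t^2)$ together with $\sech^2(\la_0\xi)=1-t^2$. The chain rule gives
\[
\phi''=e^{-\la\xi}\Bigl[\la^2 q-2\la\la_0(1-t^2)q_t+\la_0^2(1-t^2)\bigl((1-t^2)q_{tt}-2t\,q_t\bigr)\Bigr].
\]
Insert this into $-\phi''-\ell(\ell+1)\la_0^2(1-t^2)\phi+\la^2\phi=0$. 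The $\la^2$ terms cancel. Dividing by $-\la_0^2(1-t^2)e^{-\la\xi}$ and setting $m=\la/\la_0$ leaves the reduced equation
\[
(1-t^2)q_{tt}-2(t+m)q_t+\ell(\ell+1)q=0 .
\]

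Next we check the proposed polynomials in this equation.

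For $\ell=1$, take $q=m+t$. Then $-2(t+m)+2(m+t)=0$, so the equation holds.

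For $\ell=2$, take $q=p(m,t)=m^2-1+3mt+3t^2$, so $q_t=3m+6t$ and $q_{tt}=6$. Expanding,
\[
6(1-t^2)-2(t+m)(3m+6t)+6(m^2-1+3mt+3t^2)=0 .
\]
The $t^2$, $mt$, $m^2$ and constant terms each cancel separately.

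Normalization comes from the limit $\xi\to+\infty$, where $t\to1$ exponentially fast. For $\ell=1$ we have $q\to m+1$. For $\ell=2$ we have $p(m,1)=m^2+3m+2=(m+1)(m+2)$. So after dividing by these constants, $e^{\la\xi}\phi\to1$, which is exactly the normalization in \eqref{eq:Jost}.

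Uniqueness follows because the potential decays exponentially and $\la>0$. By a Levinson-type argument, the two-dimensional solution space contains one solution behaving like $e^{+\la\xi}$ and one like $e^{-\la\xi}$ as $\xi\to+\infty$. The solutions with $\phi\sim e^{-\la\xi}$ therefore form a one-dimensional subspace, and the normalization $e^{\la\xi}\phi\to1$ fixes $f_+$ uniquely.

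For \eqref{eq:f_pm}, note that $V_{\rm PT}$ is even in $\xi$. Hence $\xi\mapsto f_+(-\xi,i\la;\la_0)$ again solves $H_{\rm PT}\phi=-\la^2\phi$, and it behaves like $e^{\la\xi}$ as $\xi\to-\infty$. The same uniqueness argument, applied at $-\infty$, identifies it with $f_-$.

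The only step needing care is the change of variables leading to the reduced equation in $t$; once that is done, the remaining checks are finite polynomial identities, so I expect no real obstacle.
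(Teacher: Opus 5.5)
Your proof is correct. The reduced equation $(1-t^2)q_{tt}-2(t+m)q_t+\ell(\ell+1)q=0$ is derived correctly, and the polynomial checks for $\ell=1,2$ cancel coefficient by coefficient. The identity $p(m,1)=(m+1)(m+2)$, with $m>0$, gives the stated normalization. Uniqueness for $\la>0$ and the evenness of $V_{\rm PT}$ then yield $f_-(\xi)=f_+(-\xi)$, consistent with the convention $f_\pm\sim e^{\pm ikx}$ at $k=i\la$.

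The paper gives no argument of its own here. It simply cites Fl\"ugge and Landau--Lifshitz, where these formulas are obtained constructively: after the substitution $t=\tanh(\la_0\xi)$ the equation becomes hypergeometric, and for integer $\ell$ the solution reduces to a polynomial in $\tanh$.

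So the difference is derivation versus verification:
\begin{itemize}
\item The textbook route explains where the formula comes from. It also covers general $\ell$, together with reflectionlessness and the bound-state spectrum used in Lemma~\ref{prop:PT-spectrum}.
\item Your route needs the answer in advance, but it is short, self-contained and completely checkable for the two cases the paper actually uses.
\end{itemize}

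Your factorization $\phi=e^{-\la\xi}q(t)$ has a side benefit. It gives directly $\phi'/\phi=-\la+\la_0(1-t^2)\,q_t/q$. This is exactly the logarithmic-derivative formula \eqref{eq:fplus-logderiv-b}, and via $f_-(\xi)=f_+(-\xi)$ it also gives \eqref{eq:outerlogform}. Both are used later in the matching argument.
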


\begin{proof}
See \cite{Flugge1971}, \cite{LandauLifshitzQM}.
\end{proof}

\begin{lem}[Spectrum of $H_{\rm PT}$, $\ell = 1, 2$]\label{prop:PT-spectrum}
Let $\ell\in\{1,2\}$ and $\la_0>0$. Then $\sigma_{\rm cont}(H_{\rm PT}) = [0,\infty)$, and the discrete spectrum is
\begin{equation}\label{eq:PT-spectrum-disc}
\begin{aligned}
\ell=1: \quad & \sigma_{\rm disc}(H_{\rm PT}) = \{-\la_0^2\}, && \text{eigenfunction } \sech(\la_0\xi);\\
\ell=2: \quad & \sigma_{\rm disc}(H_{\rm PT}) = \{-4\la_0^2,\,-\la_0^2\}, && \text{eigenfunctions } \sech^2(\la_0\xi),\;\sech(\la_0\xi)\tanh(\la_0\xi).
\end{aligned}
\end{equation}

\end{lem}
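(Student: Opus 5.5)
The statement to prove is Lemma~\ref{prop:PT-spectrum}. I would derive it from the explicit Jost solutions of Lemma~\ref{lem:Jost}, using the standard fact that for a potential decaying exponentially, $-\la^2<0$ is an eigenvalue exactly when the decaying solutions at $+\infty$ and $-\infty$ are linearly dependent.

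\textbf{Continuous spectrum.} Since $V_{\rm PT}(\xi)=-\ell(\ell+1)\la_0^2\sech^2(\la_0\xi)$ is bounded and decays exponentially, multiplication by $V_{\rm PT}$ is relatively compact with respect to $-\partial_\xi^2$. Weyl's theorem then gives $\sigma_{\rm ess}(H_{\rm PT})=\sigma(-\partial_\xi^2)=[0,\infty)$. The only remaining question is the negative point spectrum.

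\textbf{Reduction to a sign condition.} Fix $\la>0$ and set $m=\la/\la_0>0$. Any $L^2$ solution of $H_{\rm PT}\phi=-\la^2\phi$ must be proportional to $f_+(\xi,i\la)$, since the other solution grows like $e^{\la\xi}$ at $+\infty$. It must likewise be proportional to $f_-(\xi,i\la)=f_+(-\xi,i\la)$. As $\xi\to-\infty$ we have $t\to-1$, so $f_+$ behaves like $c(m)e^{-\la\xi}$ (growing) plus a subdominant term. Therefore $f_+\in L^2$ if and only if the coefficient $c(m)$ vanishes.

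\textbf{Computing $c(m)$.}
\begin{itemize}
\item For $\ell=1$: $c(m)=(m-1)/(m+1)$, which vanishes only at $m=1$, i.e.\ $\la=\la_0$. Then $f_+=e^{-\la_0\xi}(1+\tanh\la_0\xi)/2=\tfrac12\sech(\la_0\xi)\,e^{-\la_0\xi}\cdot e^{\la_0\xi}\cdot$ (after simplification) $\propto\sech(\la_0\xi)$.
\item For $\ell=2$: $c(m)\propto p(m,-1)=m^2-3m+2=(m-1)(m-2)$, which vanishes at $m=1$ and $m=2$, giving eigenvalues $-\la_0^2$ and $-4\la_0^2$.
\end{itemize}

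\textbf{Identifying the eigenfunctions and simplicity.} For $\ell=2$, $m=2$, one has $p=3(1+t)^2$, so $f_+\propto e^{-2\la_0\xi}(1+t)^2=\sech^2(\la_0\xi)$, using $e^{-\la_0\xi}(1+t)=\sech(\la_0\xi)$. For $\ell=2$, $m=1$, one has $p=3t(1+t)$, so $f_+\propto t\,\sech(\la_0\xi)$. Each eigenvalue is simple, because in one dimension the decaying solution at $+\infty$ is unique up to scalar multiples.

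A direct substitution can also confirm the eigenfunctions, via $(\sech^\ell)''=\ell^2\sech^\ell-\ell(\ell+1)\sech^{\ell+2}$ with the scaling by $\la_0$.

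\textbf{Main obstacle.} The only delicate point is justifying that Lemma~\ref{lem:Jost} gives \emph{all} solutions decaying at $+\infty$, so that no eigenvalues are missed. This follows from uniqueness for the Volterra integral equation defining $f_+$, so I do not expect a real difficulty. Zero is not an eigenvalue, since the solutions there tend to constants; in any case the statement concerns only $\sigma_{\rm disc}$ below $0$.
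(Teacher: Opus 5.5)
Your argument is correct, but it takes a different route from the paper. The paper gives no derivation and simply cites Fl\"ugge and Landau--Lifshitz. There the P\"oschl--Teller spectrum is obtained by reducing the eigenvalue equation (via $u=\tanh(\la_0\xi)$) to a hypergeometric / associated-Legendre equation and imposing a termination condition.

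You instead derive the lemma from the explicit Jost solutions of Lemma~\ref{lem:Jost}. The eigenvalue condition becomes the vanishing of the coefficient of the growing mode $e^{-\la\xi}$ of $f_+$ at $-\infty$, equivalently $W[f_+,f_-]=0$. Concretely, this is $m-1=0$ for $\ell=1$ and $p(m,-1)=(m-1)(m-2)=0$ for $\ell=2$.

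The two approaches buy different things:
\begin{itemize}
\item The citation route is shorter and handles all $\ell$ (even non-integer) uniformly.
\item Your route is self-contained relative to the paper and gives simplicity of the eigenvalues for free.
\item Your route also exposes exactly the mechanism the paper later uses. In the matching argument for Proposition~\ref{prop:mu}, the factor $p(m,t_R)$ vanishes at $(\eps,\mu)=(0,0)$ because $t_R\to-1$ and $m\to 1$ is the root of $p(m,-1)$ belonging to the translation mode. The nodal count in Proposition~\ref{prop:Lp-count} is likewise a root count for $p(m,\pm t)$.
\end{itemize}

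A few small tidy-ups:
\begin{itemize}
\item The step ``$f_+\in L^2$ iff $c(m)=0$'' deserves one line. If $c(m)=0$, then $f_+=o(e^{-\la\xi})$ as $\xi\to-\infty$, so $f_+$ has no component along the growing solution and is a multiple of $f_-$. Alternatively, just observe that your explicit eigenfunctions $\sech$, $\sech^2$ and $\tanh\,\sech$ are manifestly in $L^2$.
\item Your $\ell=1$ line is garbled. It should read $f_+=\tfrac12 e^{-\la_0\xi}\bigl(1+\tanh(\la_0\xi)\bigr)=\tfrac12\sech(\la_0\xi)$.
\item At zero energy, only one solution tends to constants; the other grows linearly. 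Neither is in $L^2$, so your conclusion stands.
\item If ``continuous spectrum'' is meant to exclude embedded eigenvalues, add the standard remark that for $k^2>0$ every solution is asymptotic to a combination of $e^{\pm ik\xi}$ and hence is not in $L^2$.
\end{itemize}
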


\begin{proof}
See \cite{Flugge1971}, \cite{LandauLifshitzQM}.




\end{proof}

\section{Review of orbital stability theory for NLS/GP and some applications}\label{sec:stability-theory}

For any initial condition $\Psi_0\in H^1(\R)$, NLS/GP~\eqref{eq:nlse} has a unique global-in-time solution $\Psi\in C\bigl([0,\infty);H^1(\R)\bigr)$, for which the Hamiltonian and $L^2$ norm are conserved in $t$:
\begin{equation}\label{eq:H-N-conserved}
\begin{aligned}
    \mathcal H[\Psi] &:= \int |\partial_x\Psi(x,t)|^2 + V(x) |\Psi(x,t)|^2 - \tfrac 12 | \Psi(x,t)|^4 \, dx, \\
     \mathcal N[\Psi] &:= \int |\Psi(x,t)|^2\,dx\ ;
\end{aligned}
\end{equation}
see, e.g.\ \cite{cazenave2003semilinear,sulemsulem1999,fibich2015nonlinear}.
For symmetric potentials, $V(-x)=V(x)$, the theory applies to data $\Psi_0\in X$ giving rise to solutions in $\Psi\in C\bigl([0,\infty);X\bigr)$, where $X=H^1(\mathbb R)$ or $X=H^1_0(\mathbb R_+)$

The stability theory of nonlinear bound states of NLS/GP, developed in \cite{weinstein1986lyapunov,grillakis1987stability} (see also \cite{grillakis1990stability,weinstein-survey:15}), is based on a variational principle, which we now explain.  Since any nonlinear bound state solution of \eqref{eq:V-nls} on $X$ is a critical point  $\psi(\cdot ;E)\in X$ of the time-invariant energy functional
\begin{equation} \mathcal{E}[\psi]\ :=\ \mathcal H[\psi] - E\,\mathcal N[\psi]\ , \qquad E < 0,\end{equation}
a sufficient condition for orbital stability (Definition~\ref{sec:orbital}) is positive definiteness of the second variation of $\mathcal E$ at $\psi( \, \cdot \, ;E)$, restricted to the tangent space of the constraint $\{\mathcal N=\mathcal N(\psi)\}$. 

The second variation is given by:
\begin{equation}
\frac{d^2}{d\tau^2} \mathcal{E}\bigl[\psi( \, \cdot \, ;E) + \tau(u+iv)\bigr]\Big|_{\tau=0}\ =\ \langle L_+ u, u\rangle\ +\ \langle L_- v, v\rangle\ ,
\label{eq:hessian}\end{equation}
where
\begin{equation}\label{eq:Lpm-def}
  L_+ := -\partial_x^2 + V - 3\psi( \, \cdot \, ;E)^2 - E, \qquad L_- := -\partial_x^2 + V - \psi( \, \cdot \, ;E)^2 - E,
\end{equation}
acting on the relevant subspace $X\subset H^1$; see \eqref{eq:Xspace}. We note that 
$ \ker L_-\big|_X = \spn\{\psi( \, \cdot \, ;E)\}$.

 Let $A$ denote a self-adjoint operator, all of  whose essential spectrum is strictly positive,   and denote by $n_-(A)$  the number of negative eigenvalues of an operator $A$. Then, we have the following:

\begin{thm}[Stability and stability criteria \cite{weinstein1986lyapunov,grillakis1988linearized,grillakis1990stability,jones1988instability}]\label{thm:W-GSS}
Consider the initial value problem for NLS/GP in the space $C(\mathbb R; X)$; see \eqref{eq:Xspace}. 
  Let $\psi( \, \cdot \, ;E)e^{-iEt}\in X$ be a bound state of \eqref{eq:V-nls}, 
   and let $L_\pm$ act on $X$ as in~\eqref{eq:Lpm-def}. 
  \begin{enumerate}[leftmargin=2em, label=\textup{(\roman*)}]
    \item \emph{Sufficient for orbital stability \cite{weinstein1986lyapunov,grillakis1987stability}:}
      Assume
      \begin{subequations}
      \label{eq:stab-conditions}
      \begin{align}\label{eq:W-GSS-hypotheses}
      L_-\Big|_X &\geq 0,\ \textrm{i.e. $L_-$ has no negative point spectrum in the space $X$}\\
        n_-\!\left(L_+\Big|_X\right) &= 1, \ \textrm{i.e. $L_+$ has exactly one negative eigenvalue in the space $X$, and }  \label{eq:nminusLminus} \\ \qquad \frac{d\mathcal N[\psi(\cdot,E)]}{dE} &< 0.\label{eq:Nslope-1}
      \end{align}
      \end{subequations}
     Then, the nonlinear bound state $e^{-iEt}\psi( \, \cdot \, ;E)$ is orbitally stable in $X$.
    \item \emph{Sufficient for instability:}
     The nonlinear bound state  $e^{-iEt}\psi( \, \cdot \, ;E)$ is  unstable in $X$ if any one of the following holds:
      \begin{enumerate}[leftmargin=2em, label=\textup{(\alph*)}]
        \item $d\mathcal N/dE > 0$ \cite{grillakis1988linearized}; \label{wgss-a}
        \item $n_-(L_+\big|_X) - n_-(L_-\big|_X) \neq 0,\,1$ \cite{jones1988instability,grillakis1988linearized}; \label{wgss-b}
      \end{enumerate}
  \end{enumerate}
\end{thm}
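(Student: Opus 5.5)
Since Theorem~\ref{thm:W-GSS} collects classical results, the plan is to reproduce the Lyapunov-functional argument of \cite{weinstein1986lyapunov,grillakis1987stability} for part~(i), and to reduce part~(ii) to the abstract instability theory of \cite{grillakis1988linearized,grillakis1990stability,jones1988instability}. Throughout, the only structure used is that NLS/GP is a Hamiltonian system on $X$ with conserved quantities $\mathcal H$ and $\mathcal N$. It is invariant under the phase symmetry $\Psi\mapsto e^{i\theta}\Psi$, but not under translations because $V\neq0$. Consequently the symmetry group is one-dimensional, and the kernel of the linearization is $\ker L_-|_X=\spn\{\psi\}$ with $\ker L_+|_X$ trivial. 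Triviality of $\ker L_+|_X$ is an additional nondegeneracy assumption, which I would verify separately for our branches.

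\textbf{Step 1: coercivity on the constrained tangent space.} Write $\Psi=e^{i\theta}(\psi+u+iv)$ with $\theta$ chosen by modulation so that $\langle v,\psi\rangle=0$; this is possible near the orbit by the implicit function theorem. The constraint $\mathcal N[\Psi]=\mathcal N[\psi]$ gives $\langle u,\psi\rangle=\mathcal O(\|u+iv\|^2)$. I would then show two coercivity estimates.
\begin{itemize}
\item For $v\perp\psi$: $\langle L_-v,v\rangle\ge c\|v\|_X^2$. This follows from $L_-\ge0$, $\ker L_-=\spn\{\psi\}$, and the positive essential spectrum ($-E>0$).
\item For $u\perp\psi$: $\langle L_+u,u\rangle\ge c\|u\|_X^2$. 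Here I would use $n_-(L_+)=1$. Differentiating the profile equation in $E$ gives $L_+\partial_E\psi=\psi$. The standard min-max lemma then says that $\inf\{\langle L_+u,u\rangle:\|u\|=1,\ u\perp\psi\}>0$ if and only if $\langle L_+^{-1}\psi,\psi\rangle=\tfrac12\,d\mathcal N/dE<0$, which is \eqref{eq:Nslope-1}.
\end{itemize}
Passing from the $L^2$ lower bound to the $X$-norm is routine via G\aa rding-type estimates.

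\textbf{Step 2: the Lyapunov argument.} Taylor-expand the conserved functional,
\[
\mathcal E[\Psi]-\mathcal E[\psi]=\langle L_+u,u\rangle+\langle L_-v,v\rangle+\mathcal O\bigl(\|u+iv\|_X^3\bigr),
\]
using the one-dimensional Sobolev embedding $H^1\subset L^\infty$. Correct $u$ by its $\mathcal O(\|u+iv\|^2)$ component along $\psi$, and allow a small mismatch $\mathcal N[\Psi_0]\neq\mathcal N[\psi]$ by comparing with the nearby bound state $\psi(\cdot;E')$ of the same mass, which exists since $d\mathcal N/dE\ne0$. This yields $\inf_\theta\|\Psi(t)-e^{i\theta}\psi\|_X^2\lesssim \mathcal E[\Psi_0]-\mathcal E[\psi]+|\mathcal N[\Psi_0]-\mathcal N[\psi]|$ for as long as the solution stays near the orbit. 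Conservation of $\mathcal E$ and $\mathcal N$, together with a continuity argument in $t$, closes the estimate.

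\textbf{Step 3: instability.} For part~(ii)(a), if $d\mathcal N/dE>0$ then $\langle L_+^{-1}\psi,\psi\rangle>0$. Combined with $n_-(L_+)=1$, this makes the constrained Hessian indefinite. I would then follow \cite{grillakis1988linearized}: construct a direction $y$ with $\langle L_+y,y\rangle<0$ and $y\perp\psi$, build a functional $A(\Psi)$ whose flow moves along $y$ while preserving $\mathcal N$, and show by a virial-type identity that $\tfrac{d}{dt}A$ has a sign bounded away from zero on an energy sublevel set. This forces the solution to exit any small neighborhood of the orbit. Part~(ii)(b) follows from the counting theorem of \cite{jones1988instability,grillakis1988linearized}: the number of real positive eigenvalues of the Hamiltonian linearization $JL$, with $L=\operatorname{diag}(L_+,L_-)$, is bounded below in terms of $n_-(L_+)-n_-(L_-)$ and the sign of $d\mathcal N/dE$. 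Whenever this difference is not $0$ or $1$, $JL$ has an unstable real eigenvalue, which gives linear and then nonlinear instability.

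The main obstacle is Step~3(a), the nonlinear instability argument: the construction of the escape functional $A$ and the justification of the virial identity in $X$ are technically the most delicate. Step~1 is straightforward once the kernel of $L_+$ is known to be trivial; for our resonance-induced branches that nondegeneracy is exactly what must be checked separately, and it is where the translation-mode eigenvalue $\mu(\eps)$ of Remark~\ref{rem:on_the_pf} enters.
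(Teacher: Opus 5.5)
Your Steps~1--2 and the counting step in Step~3(b) are a faithful sketch of the classical theory, which the paper does not prove but simply cites. You are also right that $\ker L_+|_X=\{0\}$ is a needed hypothesis that the statement leaves implicit. From $L_+\partial_E\psi=\psi$, any $\eta\in\ker L_+$ satisfies $\langle\eta,\psi\rangle=\langle L_+\eta,\partial_E\psi\rangle=0$, so $\eta$ would be a null direction of $\langle L_+u,u\rangle$ on $\{\psi\}^\perp$ and Step~1 would fail. In the application this hypothesis is supplied by Proposition~\ref{prop:mu}: the only eigenvalue of $L_+^\eps$ near $0$ is $\mu(\eps)=\tfrac{3\eps^2}{4\gs}\Omega_\star+\mathcal O(\eps^3)$, which is nonzero when $\Omega_\star\neq0$.

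The genuine gap is in Step~3(a). You invoke $n_-(L_+)=1$, but (ii)(a) assumes only $d\mathcal N/dE>0$, so the Grillakis--Shatah--Strauss escape-functional argument covers only the ground-state case.

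\begin{itemize}
\item The missing observation is that every bound state satisfies $n_-(L_+)\ge n_-(L_-)+1$. Since $L_+=L_--2\psi^2$, the form $\langle L_+w,w\rangle$ is strictly negative on the span of $\psi$ and the negative eigenfunctions of $L_-$; on $\psi$ itself it equals $-2\int\psi^4$.
\item In particular $n_-(L_+)=1$ already forces $L_-\ge0$, so your argument is self-contained in that case.
\item When $n_-(L_+)\ge2$ you need the slope-refined count you allude to in Step~3(b). For $d\mathcal N/dE>0$ the constraint $u\perp\psi$ removes no negative direction of $L_+$, so $JL$ has at least $n_-(L_+)-n_-(L_-)\ge1$ positive real eigenvalues.
\end{itemize}

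This route, like your Step~3(b), yields only a real unstable eigenvalue of $JL$. Passing from that to orbital instability in $X$ needs a separate linear-to-nonlinear instability theorem for NLS, which your phrase ``linear and then nonlinear'' glosses over.

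The paper's application is unaffected. There, (ii)(a) matters only when $\Omega_\star>0$, in which case $n_-(L_+^\eps)=1$ and your argument applies. When $\Omega_\star<0$, part (ii)(b) already gives instability.
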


We shall study the stability and instability properties of resonance-induced nonlinear bound states by studying $L_\pm$ in the relevant space, $X$, and applying Theorem~\ref{thm:W-GSS}. 

Note, from  Theorem~\ref{thm:bifurcation}, that the states $(\psi_\eps, E_\eps)$ are such that, outside the support of $V$, $\psi_\eps$ is identically equal to an $\eps$-dependent translate of the free soliton profile \eqref{eq:free-soliton-intro}. Therefore, for $x > \sup \supp(V)$ (and $x < \inf \supp(V)$ for TR), the corresponding $L^\eps_\pm$ operators are both Schr\"odinger operators with a translated P\"oschl--Teller potential. For $\eps>0$ and sufficiently small the P\"oschl--Teller potential is translated  far from ${\rm supp}(V)$ (see Theorem \ref{thm:bifurcation}) and so the P\"oschl--Teller operator figures centrally in the  spectral properties of:
\begin{equation}\label{eq:Lpm-eps}
L_+^\eps := -\partial_x^2 + V - 3\psi_\eps^2 - E(\eps), \qquad L_-^\eps := -\partial_x^2 + V - \psi_\eps^2 - E(\eps).
\end{equation}
For NLS/GP with $V=0$, we denote the analogous linearized  operators about the free soliton  by $L_\pm^{\rm sol}$.

\subsection{NLS/GP with $V=0$; stability of solitons of cubic (translation invariant)  NLS}\label{ssec:free-soliton}

The free soliton profile
\begin{equation}\label{def:sech} \Scal(\xi; E) = \sqrt{2}\la_0\sech(\la_0\xi), \quad \la_0 = \sqrt{-E}\end{equation}
solves $-\Scal'' - \Scal^3 = E\Scal$ on $\R$. The linearized operators about $\Scal$ are
\begin{equation}\label{eq:Lpm-free}
  L_+^{\rm sol} = -\partial_\xi^2 - 6\la_0^2\sech^2(\la_0\xi) + \la_0^2, \qquad L_-^{\rm sol} = -\partial_\xi^2 - 2\la_0^2\sech^2(\la_0\xi) + \la_0^2,
\end{equation}
which are P\"oschl--Teller operators (Definition~\ref{def:PT}) with $\ell = 2$ and $\ell = 1$, respectively, shifted by the constant $+\la_0^2$. Lemma~\ref{prop:PT-spectrum}, applied at $\ell=2$ and $\ell=1$ and shifted by $+\la_0^2$, gives
\begin{align}
\sigma_{\rm disc}(L_-^{\rm sol}) &= \{0\}, &\quad \sigma_{\rm cont}(L_-^{\rm sol}) &= [\la_0^2,\infty). \label{eq:Lm-spectrum}\\
  \sigma_{\rm disc}(L_+^{\rm sol}) &= \{-3\la_0^2,\; 0\}, &\quad \sigma_{\rm cont}(L_+^{\rm sol}) &= [\la_0^2,\infty), \label{eq:Lp-spectrum}
\end{align}

We next review the application of Stability / Instability Theorem~\ref{thm:W-GSS}, part (i), to the free soliton, $\Scal(x;E)$. Let us verify the hypotheses for orbital stability.

Note that $L_-^{\rm sol}\Scal=0$, and since $\Scal > 0$,  the ground state energy of $L_-^{\rm sol}$ is $0$. (Note also that it is the only point spectrum as displayed in \eqref{eq:Lm-spectrum}). Therefore, $L_-\ge0$.

Turning to  $L^{\rm sol}_+$, the zero eigenvalue of $L_+^{\rm sol}$ ($L_+^{\rm sol}\Scal'=0$,\ $\Scal'(\xi) \propto \sech(\lambda_0\xi)\tanh(\lambda_0\xi)$ odd) is associated with the spatial translation invariance of NLS in the case $V=0$, and is called the  \emph{translation mode}. Since $\Scal'$ changes sign exactly once (at $\xi=0$) on $\mathbb R$, $\Scal'$ is the second eigenvalue of $L^{\rm sol}_+$, and hence $n_-(L^{\rm sol}_+)=1$, as displayed in \eqref{eq:Lp-spectrum}. (We remark that the ground state eigenfunction of $L_+^{\rm sol}$ is $\sech^2(\la_0\xi)$ .)


Finally, the squared $L^2$ norm of the free  soliton is  $\mathcal N[\Scal_E] = 4\sqrt{-E}$. Therefore,  $d\mathcal N[\Scal_E]/dE = -2/\sqrt{-E} < 0$. All conditions of Theorem~\ref{thm:W-GSS}(i) hold and hence  the free soliton is orbitally stable.

\subsection{NLS/GP with $V \neq 0$; stability of nonlinear bound states induced by the $L^2$ ground state of $H_V$}\label{ssec:V0}

 Rose and Weinstein~\cite{rose1988bound}\label{ssec:Rose-W} proved the $H^1$-orbital Lyapunov stability of small amplitude nonlinear bound states which bifurcate from the zero solution  at simple eigenvalues of  $H_V$. Due to the presence of a non-trivial potential, $V(x)$, NLS/GP is not translation invariant.
 The key step in determining whether or not solutions on the ground state bifurcation curve are  orbitally stable is  a computation of $n_-(L_+)$. For energies $E$ near the linear ground state energy, the nonlinear bound state has uniformly small amplitude, and
standard analytic perturbation theory  yields that $L_+$ has one negative eigenvalue; $n_-(L_+)=1$. Finally, a verification of the slope condition in Theorem \ref{thm:W-GSS} yields stability.

\subsection{Preliminary remarks on stability analysis of resonance-induced nonlinear bound states}\label{sec:prelim-rmk}

As in the ground-state bifurcation of Section~\ref{ssec:V0}, the central
analytical task is the computation of $n_-(L_+^\eps)$.  The difficulty is
that resonance-induced states are \emph{not} uniformly small in amplitude:
they bifurcate at a positive $L^2$ threshold (Theorem~\ref{thm:bifurcation}),
so the standard perturbative argument used in~\cite{rose1988bound} does
not apply.

For $\eps>0$ small, most of the mass of $\psi_\eps$ is concentrated in a
free soliton profile centered at distance $\sim\log(1/\eps)$ from
$\supp(V)$.  One therefore expects $\Lop$ to possess an eigenvalue
$\mu(\eps)$ near zero, inherited from the translation mode
$\Scal'\propto\sech\tanh$ of $L_+^{\rm sol}$ (which has eigenvalue
$\mu=0$ for the free soliton; see~\eqref{eq:Lp-spectrum}).  The
leading-order expression for the perturbed translation eigenvalue,
$\mu(\eps)$, is obtained in Proposition~\ref{prop:mu} via matched
asymptotics and the scattering theory of P\"oschl--Teller potentials;
the calculations are delicate.  Proposition~\ref{prop:Lp-count} then
shows that, under the positivity hypothesis $U_\star\ge0$, $\mu(\eps)$
is the \emph{second} eigenvalue of $\Lop$; consequently,
\[
  n_-(L_+^\eps)=1 \quad\iff\quad \mu(\eps)>0.
\]
Finally, Proposition~\ref{prop:slope} determines $d\mathcal N/dE$ along
the resonance-induced branch near the bifurcation point.

\begin{rem}[Half-line restriction for scattering resonances]\label{rem:full-line-instability}
For symmetric $V(x) = V(-x)$, the scattering-resonance profiles in~\eqref{eq:SR-BC} are either odd or even. For $\varepsilon>0$ and small, the corresponding $\Hf$ resonance-induced (odd) nonlinear states have a spatial structure of free solitons (of opposite sign) centered at $y_R \gg b$ and $y_L \ll -b$. Therefore, the potential of $L_\pm^\eps$ consists of $V(x)$, centered at $x=0$, plus two identical wells which are concentrated far from $x=0$. The latter should contribute a \underline{pair} of discrete eigenvalues near zero. We believe therefore that it can be shown that our resonance-induced nonlinear bound states are dynamically \underline{unstable} in $H^1(\mathbb R)$, as an application of the instability criteria of Theorem~\ref{thm:W-GSS}(ii). In contrast, by restricting to the half-line case,  $X=\Hd$,  stability is restored.
\end{rem}

\section{Main stability theorem}\label{sec:main-thm}

We now turn to the proof of Theorem~\ref{thm:intro-stability-sharp} and derive Corollary~\ref{cor:small-gamma-stability}. We state three propositions: Proposition~\ref{prop:mu} (perturbed translation eigenvalue), Proposition~\ref{prop:Lp-count} (spectral count of $\Lop$ under positivity), and Proposition~\ref{prop:slope} (limiting mass slope), which are the key steps in the proof. These propositions are proved in Section~\ref{sec:proofs}.

Recall the function space $X$ from~\eqref{eq:Xspace} and the linearized operator $\Lop$ from~\eqref{eq:Lpm-eps}. Also recall the resonance pair $(\gs,U_\star)$-dependent parameters:
\[ \textrm{$\mathcal{B}_\star $ in \eqref{eq:Bstar}, $\Omega_\star$ in \eqref{eq:Omega}, and $\nu_\star$ in \eqref{eq:nu}.}\]
\begin{equation}\label{eq:assume-gs-pos}
\textrm{
Henceforth we assume, without loss of generality (see Remark~\ref{rem:gamma-sign}), that $\gs>0$ .
}
\end{equation}

For $\eps>0$ and small, let $\eps\mapsto (E(\eps),\psi_\eps(\cdot;E(\eps)))$ denote the family of transmission-resonance (TR) induced or scattering-resonance (SR) induced nonlinear bound states provided by Theorem~\ref{thm:bifurcation}.

The stability and instability of this family is determined by Theorem~\ref{thm:intro-stability-sharp}. The proof of Theorem~\ref{thm:intro-stability-sharp}, outlined below in Section~\ref{ssec:proof-main-thm},  is based on verifying the hypotheses of the  Stability Theorem \ref{thm:W-GSS}. The technical details are in subsequent sections. The assumptions in Theorem~\ref{thm:intro-stability-sharp} target the hypotheses of Theorem \ref{thm:W-GSS}: \ref{H:Bstar} ensures the existence of a resonance-induced branch, \ref{H:positivity} is  used to prove that $L_-$ has no negative point spectrum; $\Omega_\star>0$ ensures that the perturbed  ``translation eigenvalue'' is strictly positive and hence $n_-(L_+)=1$; \ref{H:Omega} and $\nu_\star<0$ imply the slope condition \eqref{eq:Nslope-1} for nonlinear bound states near the bifurcation.


A consequence of Theorem~\ref{thm:intro-stability-sharp} is the following characterization of stability and instability, purely in terms of the single parameter $\Omega_\star$ or in terms of $\nu_\star$.

\begin{cor}[Stability intervals: TR case]\label{cor:stability-TR}
\begin{enumerate}
\item The transmission-resonance induced nonlinear bound state $(E(\eps),\psi_\eps(\cdot;E(\eps)))$  is orbitally stable if:
\[
  0<\Omega_\star^{\rm TR}<\tfrac{3}{4\gs}
  \quad\text{or}\quad
  \Omega_\star^{\rm TR} > \tfrac{3}{4\gs} + \gs(\mathcal B_\star^{\rm TR})^2,
\]
and unstable if 
\[
  \Omega_\star^{\rm TR} < 0
  \quad\text{or}\quad
   \tfrac{3}{4\gs}<\Omega_\star^{\rm TR} < \tfrac{3}{4\gs} + \gs(\mathcal B_\star^{\rm TR})^2\ .
\]

\item Equivalently, in terms of $\NslopeT$, where $\NslopeT=d\mathcal N/dE\big|_{\eps=0^+}$, the transmission-resonance induced nonlinear bound state  is stable if:
\[
  \NslopeT < -\tfrac{2}{\gs}-\tfrac{8\gs(\mathcal B_\star^{\rm TR})^2}{3}
  \quad\text{or}\quad
  -\tfrac{2}{\gs}<\NslopeT<0\ .
\]
and unstable if 
\[
  -\tfrac{2}{\gs}-\tfrac{8\gs(\mathcal B_\star^{\rm TR})^2}{3}<\NslopeT<-\tfrac{2}{\gs}
  \quad\text{or}\quad
  \NslopeT>0.
\]
\end{enumerate}
\end{cor}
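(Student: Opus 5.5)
The plan is to treat this corollary as pure algebra on top of Theorem~\ref{thm:intro-stability-sharp}, using the explicit TR slope formula \eqref{eq:nu}. The hypotheses \ref{H:positivity}--\ref{H:Omega} are taken as standing assumptions. By the theorem, the branch is stable if $\Omega_\star>0$ and $\Nslope<0$, and unstable if $\Omega_\star<0$ or $\Nslope>0$. So everything reduces to rewriting these sign conditions in terms of a single parameter.

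Set $D:=\Omega_\star^{\rm TR}-\tfrac{3}{4\gs}$, which is nonzero by \eqref{eq:non-vert}, and write $B:=\mathcal B_\star^{\rm TR}$, so that $B^2>0$ by \ref{H:Bstar}. Then \eqref{eq:nu} reads
\[
\NslopeT+\tfrac{2}{\gs}=\frac{2B^2}{D}.
\]
Since $\gs>0$, I split on the sign of $D$:
\begin{itemize}
\item If $D<0$, the right-hand side is negative, so automatically $\NslopeT<-2/\gs<0$.
\item If $D>0$, then $\NslopeT<0$ holds iff $2B^2/D<2/\gs$, i.e. iff $D>\gs B^2$. Likewise $\NslopeT>0$ holds iff $0<D<\gs B^2$.
\end{itemize}
Combining with the sign of $\Omega_\star^{\rm TR}=D+\tfrac{3}{4\gs}$:
\begin{itemize}
\item Stability ($\Omega_\star>0$ and $\Nslope<0$) is equivalent to $-\tfrac{3}{4\gs}<D<0$, or to $D>\gs B^2$. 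In the second case $\Omega_\star>0$ holds automatically.
\item Instability is equivalent to $\Omega_\star^{\rm TR}<0$, or to $0<D<\gs B^2$.
\end{itemize}
Translating back to $\Omega_\star^{\rm TR}$ gives part (1).

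For part (2), I invert the relation: $D=2B^2/(\NslopeT+2/\gs)$. The value $\NslopeT=-2/\gs$ is excluded because $B\neq0$. The sign of $D$ equals the sign of $\NslopeT+2/\gs$, and on each sign region $D\mapsto 2B^2/D$ is monotone decreasing.
\begin{itemize}
\item On $D\in(-\tfrac{3}{4\gs},0)$, the quantity $2B^2/D$ ranges over $(-\infty,-\tfrac{8\gs B^2}{3})$. This gives the stable condition $\NslopeT<-\tfrac{2}{\gs}-\tfrac{8\gs B^2}{3}$.
\item On $D<-\tfrac{3}{4\gs}$ (that is, $\Omega_\star^{\rm TR}<0$), the quantity $2B^2/D$ ranges over $(-\tfrac{8\gs B^2}{3},0)$. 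This gives the unstable window $-\tfrac{2}{\gs}-\tfrac{8\gs B^2}{3}<\NslopeT<-\tfrac{2}{\gs}$.
\item On $D>\gs B^2$, the quantity $2B^2/D$ lies in $(0,2/\gs)$, giving $-2/\gs<\NslopeT<0$ (stable).
\item On $0<D<\gs B^2$, we get $\NslopeT>0$ (unstable).
\end{itemize}

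There is no real obstacle, since the analytic content lives entirely in Theorem~\ref{thm:intro-stability-sharp}. The only care needed is bookkeeping of boundary cases. The values $D=0$ and $\NslopeT=-2/\gs$ are excluded by \ref{H:Omega} and \ref{H:Bstar}. The borderline values $\Omega_\star^{\rm TR}=0$, $D=\gs B^2$ and $\NslopeT=-\tfrac{2}{\gs}-\tfrac{8\gs B^2}{3}$ are not covered by the corollary, consistent with the strict inequalities in the theorem.
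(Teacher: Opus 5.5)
Your proposal is correct and follows essentially the paper's route. The paper likewise sets $\widetilde\Omega_\star=\Omega_\star-\tfrac{3}{4\gs}$, writes $\nu_\star=f(\widetilde\Omega_\star)$ with $f(x)=2\mathcal B_\star^2/x-2/\gs$, and reads the regions where $\Omega_\star>0$ and $\nu_\star<0$ off the graph of $f$ and the accompanying tables; your explicit sign split and monotone inversion is the algebraic version of that picture, and it also handles the excluded boundary cases correctly.
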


\begin{cor}[Stability intervals: SR case]\label{cor:stability-SR}
\begin{enumerate}
\item The scattering-resonance induced nonlinear bound state $(E(\eps),\psi_\eps(\cdot;E(\eps)))$ is 
\[
\textrm{orbitally stable if}\qquad  \Omega_\star^{\rm SR} > \gs(\mathcal B_\star^{\rm SR})^2
  \qquad\text{and unstable if}\qquad
  \Omega_\star^{\rm SR} < \gs(\mathcal B_\star^{\rm SR})^2.
\]
\item Equivalently, in terms of $\NslopeS$, where $\NslopeS=d\mathcal N/dE\big|_{\eps=0^+}$, the scattering-resonance induced nonlinear bound state is
\[
\textrm{orbitally stable if}\qquad  
  -\tfrac{2}{\gs}<\NslopeS<0\qquad 
  \textrm{and unstable if}\qquad \NslopeS>0\ \text{ or }\ \NslopeS<-\tfrac{2}{\gs}.
\]

\end{enumerate}
\end{cor}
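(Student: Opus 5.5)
The plan is to deduce Corollary~\ref{cor:stability-SR} directly from Theorem~\ref{thm:intro-stability-sharp}. We work under that theorem's hypotheses \ref{H:positivity}--\ref{H:Omega}, which are implicit in the corollary. The argument is elementary sign bookkeeping with the explicit SR slope formula \eqref{eq:nu},
\[
\NslopeS=\frac{2(\mathcal B_\star^{\rm SR})^2}{\Omega_\star^{\rm SR}}-\frac{2}{\gs},
\]
using three facts: $\gs>0$ by \eqref{eq:assume-gs-pos}; $(\mathcal B_\star^{\rm SR})^2>0$ by \ref{H:Bstar}; and $\Omega_\star^{\rm SR}\neq0$ by \eqref{eq:non-vert}.

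For part (1), stability: by Theorem~\ref{thm:intro-stability-sharp}(1) the branch is stable when $\Omega_\star^{\rm SR}>0$ and $\NslopeS<0$. When $\Omega_\star^{\rm SR}>0$, multiplying $\NslopeS<0$ by $\gs\,\Omega_\star^{\rm SR}/2>0$ turns it into $\gs(\mathcal B_\star^{\rm SR})^2<\Omega_\star^{\rm SR}$. Conversely, suppose $\Omega_\star^{\rm SR}>\gs(\mathcal B_\star^{\rm SR})^2$. Since the right-hand side is strictly positive, this forces $\Omega_\star^{\rm SR}>0$, and then, by the same multiplication, $\NslopeS<0$. So the stability condition of the theorem is exactly $\Omega_\star^{\rm SR}>\gs(\mathcal B_\star^{\rm SR})^2$.

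For part (1), instability: suppose $\Omega_\star^{\rm SR}<\gs(\mathcal B_\star^{\rm SR})^2$, and split into two cases (the value $\Omega_\star^{\rm SR}=0$ is excluded by \ref{H:Omega}).
\begin{enumerate}
\item If $\Omega_\star^{\rm SR}<0$, Theorem~\ref{thm:intro-stability-sharp}(2) applies directly.
\item If $0<\Omega_\star^{\rm SR}<\gs(\mathcal B_\star^{\rm SR})^2$, then $2(\mathcal B_\star^{\rm SR})^2/\Omega_\star^{\rm SR}>2/\gs$, so $\NslopeS>0$, and Theorem~\ref{thm:intro-stability-sharp}(2) again gives instability.
\end{enumerate}

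For part (2), set $s:=2(\mathcal B_\star^{\rm SR})^2/\Omega_\star^{\rm SR}$, which is nonzero and has the sign of $\Omega_\star^{\rm SR}$; then $\NslopeS=s-2/\gs$. Hence:
\begin{enumerate}
\item $-2/\gs<\NslopeS<0$ is equivalent to $0<s<2/\gs$, which in turn is equivalent to $\Omega_\star^{\rm SR}>\gs(\mathcal B_\star^{\rm SR})^2$.
\item $\NslopeS>0$ is equivalent to $0<\Omega_\star^{\rm SR}<\gs(\mathcal B_\star^{\rm SR})^2$.
\item $\NslopeS<-2/\gs$ is equivalent to $s<0$, i.e. $\Omega_\star^{\rm SR}<0$.
\end{enumerate}
The boundary value $\NslopeS=-2/\gs$ would require $s=0$, which is impossible because $\mathcal B_\star^{\rm SR}\neq0$. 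The stable region is therefore exactly $-2/\gs<\NslopeS<0$, and the unstable region is the union of the other two intervals, matching part (1).

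There is no genuine obstacle here. The only points needing care are that the nondegeneracy hypothesis $\mathcal B_\star\neq0$ makes the threshold $\gs(\mathcal B_\star^{\rm SR})^2$ strictly positive, which is what lets the single inequality $\Omega_\star^{\rm SR}>\gs(\mathcal B_\star^{\rm SR})^2$ absorb the separate condition $\Omega_\star^{\rm SR}>0$, and that \ref{H:Omega} excludes $\Omega_\star^{\rm SR}=0$.
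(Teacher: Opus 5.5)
Your argument is correct and is essentially the paper's proof: the paper also writes $\NslopeS=f(\Omega_\star^{\rm SR})$ with $f(x)=2(\mathcal B_\star^{\rm SR})^2/x-2/\gs$ and reads off, via Theorem~\ref{thm:intro-stability-sharp}, the set where $\Omega_\star^{\rm SR}>0$ and $\NslopeS<0$. You carry out the same sign bookkeeping algebraically rather than from the graph of $f$, and you state explicitly two points the paper leaves to the figure: that $\mathcal B_\star^{\rm SR}\neq0$ makes the threshold $\gs(\mathcal B_\star^{\rm SR})^2$ strictly positive, and that \ref{H:Omega} excludes $\Omega_\star^{\rm SR}=0$.
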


The proofs of Corollaries~\ref{cor:stability-TR} and~\ref{cor:stability-SR} are below in Section~\ref{ssec:proofof4243}.

\begin{rem}[Geometric perspective]\label{rem:geometric}
Consider the bifurcation curve which emanates from $(E_\star,\mathcal{N}_\star)$.
Since $\nu_\star$ is the limiting slope of the resonance-induced bifurcation curve at $(E_\star,\mathcal{N}_\star)$, the $\nu_\star$-stability and instability intervals displayed in Corollaries~\ref{cor:stability-TR} and~\ref{cor:stability-SR} induce a local partition of the $(E,\mathcal N)$-plane in a neighborhood of $(E_\star,\mathcal N_\star)$. In particular, the horizontal tangent $\nu_\star=0$ and the free-soliton tangent $\nu_\star=-2/\gs$ bound the free-soliton wedge. Entering this wedge, namely
\[
  -\frac{2}{\gs} < \nu_\star < 0,
\]
places the branch in the stable region $\Sreg_2^{\rm TR}$ in the TR case and in the stable region $\Sreg^{\rm SR}$ in the SR case. This geometric picture is highlighted in Section~\ref{ssec:stab-regimes}.
\end{rem}

%

\begin{cor}[Stability for small $\gs$; see Remark \ref{rem:gs-small}]\label{cor:small-gamma-stability}
Assume $\supp V\subseteq[-b,b]$ and $\|V\|_{L^1}\le M$. Let $(\gs, U_\star)$ be a resonance pair satisfying the hypotheses
\ref{H:positivity}--\ref{H:Omega} of Theorem~\ref{thm:intro-stability-sharp}.


\smallskip
\noindent\underline{\emph{SR case}}:\quad
There exists $\Gamma_{\rm SR}(b,M) > 0$, depending only on $b$ and $M$,
such that if $\gs < \Gamma_{\rm SR}(b,M)$, then $\psi_\eps$ is orbitally
stable for all $\eps > 0$ sufficiently small.

\smallskip
\noindent\underline{\emph{TR case}}:\quad
Suppose additionally that either
\begin{equation}\label{eq:TR-Vside-dichotomy}
  U_\star(b) \le 1,
  \quad\text{or}\quad
  U_\star(b)^2 - 1 \;\ge\; A_0 \quad\text{for some } A_0 > 0.
\end{equation}
In the first alternative, there exists $\Gamma_{\rm TR}(b,M) > 0$,
depending only on $b$ and $M$, such that $\gs < \Gamma_{\rm TR}(b,M)$
implies orbital stability of $\psi_\eps$ for all $\eps > 0$ sufficiently
small. In the second alternative, the same conclusion holds with
$\Gamma_{\rm TR}$ replaced by $\Gamma_{\rm TR}(b,M,A_0)$.
\end{cor}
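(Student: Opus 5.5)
The plan is to reduce the statement to the explicit stability intervals of Corollaries~\ref{cor:stability-SR} and~\ref{cor:stability-TR}. Those corollaries follow from Theorem~\ref{thm:intro-stability-sharp}, whose hypotheses \ref{H:positivity}--\ref{H:Omega} are assumed here. So it suffices to show that, for $\gs$ small, $\Omega_\star$ lies in a stable interval. All quantities entering $\Omega_\star$ and $\mathcal B_\star$ are $\int U_\star^2$, $\int U_\star^4$ and $U_\star(b)$. The key idea is that the explicit $1/\gs$ terms dominate as $\gs\to0$, while everything else is bounded by constants depending only on $(b,M)$.

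\emph{Step 1: uniform bounds via Gronwall.} Restrict to $\gs\le1$. The profile solves $U_\star''=(V+\gs^2)U_\star$ on $[a,b]$, with the normalizations \eqref{eq:resonance-normalization}. In the SR case these are $U_\star(0)=0$, $U_\star'(0)=1$. In the TR case they are $U_\star(-b)=1$, $U_\star'(-b)=\gs$. Writing the equation as a first-order system for $(U_\star,U_\star')$ and applying Gronwall gives $|U_\star|+|U_\star'|\le C(b,M)$ on $[a,b]$. Hence $\int U_\star^2$, $\int U_\star^4$ and $U_\star(b)$ are all $\le C(b,M)$.

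In the SR case we also need a lower bound on $U_\star(b)$. Running Gronwall backward from $x=b$ and using $U_\star'(b)=\gs U_\star(b)$ gives
\[1=|U_\star'(0)|\le C(b,M)\bigl(|U_\star(b)|+|U_\star'(b)|\bigr)\le 2C(b,M)\,|U_\star(b)|.\]
Therefore $U_\star(b)^2\ge c(b,M)>0$. I expect this lower bound to be the only slightly delicate point: it is what prevents the $1/\gs$ term from degenerating.

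\emph{Step 2: SR case.} Write $I=\int_0^b U_\star^2$. Expanding $\gs(\mathcal B_\star^{\rm SR})^2$ gives
\[\Omega_\star^{\rm SR}-\gs(\mathcal B_\star^{\rm SR})^2=\frac{U_\star(b)^4}{4\gs}-2\!\int_0^b U_\star^4+U_\star(b)^2 I-\gs I^2\;\ge\;\frac{c^2}{4\gs}-C(b,M).\]
This is positive for $\gs<\Gamma_{\rm SR}(b,M)$, so Corollary~\ref{cor:stability-SR} yields orbital stability.

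\emph{Step 3: TR case.} First, $\Omega_\star^{\rm TR}\ge \frac1{4\gs}-C(b,M)>0$ for $\gs$ small.

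Under the first alternative $U_\star(b)\le1$, we have
\[\Omega_\star^{\rm TR}-\tfrac{3}{4\gs}=-2\!\int_{-b}^b U_\star^4+\frac{U_\star(b)^4-1}{2\gs}<0.\]
The inequality is strict because $U_\star(-b)=1$ forces $\int U_\star^4>0$. Hence $0<\Omega_\star^{\rm TR}<\frac3{4\gs}$.

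Under the second alternative, write $u=U_\star(b)^2$ and $I=\int_{-b}^bU_\star^2$. Expanding $\gs(\mathcal B_\star^{\rm TR})^2=\gs I^2+I(1-u)+\frac{(1-u)^2}{4\gs}$ gives
\[\Omega_\star^{\rm TR}-\tfrac3{4\gs}-\gs(\mathcal B_\star^{\rm TR})^2=\frac{(u-1)(u+3)}{4\gs}+O_{b,M}(1)\ \ge\ \frac{A_0}{\gs}-C(b,M).\]
This is positive for $\gs<\Gamma_{\rm TR}(b,M,A_0)$.

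In either alternative Corollary~\ref{cor:stability-TR}(1) gives orbital stability for all $\eps>0$ sufficiently small.
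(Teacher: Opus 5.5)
Your proposal is correct and takes the route the paper indicates in its remark after the corollary. You reduce to the stable intervals of Corollaries~\ref{cor:stability-SR} and~\ref{cor:stability-TR}, then show that the explicit $\gs^{-1}$ terms dominate the $O_{b,M}(1)$ contributions, giving $\Omega_\star^{\rm SR}>\gs(\mathcal B_\star^{\rm SR})^2$, resp.\ $0<\Omega_\star^{\rm TR}<\tfrac{3}{4\gs}$ or $\Omega_\star^{\rm TR}>\tfrac{3}{4\gs}+\gs(\mathcal B_\star^{\rm TR})^2$. The paper defers the details to the thesis, and your Gronwall bounds (including the backward estimate $U_\star(b)^2\ge c(b,M)$ in the SR case) and algebra supply them correctly; the one implicit step is that the first TR alternative uses $U_\star(b)\ge0$ from \ref{H:positivity} to get $U_\star(b)^4\le1$.
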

\begin{rem}[Proof of Corollary~\ref{cor:small-gamma-stability}]
A full proof, including explicit formulas for $\Gamma_{\rm SR}$ and $\Gamma_{\rm TR}$, is given in~\cite{turner2026columbia-thesis}. The key point is that as $\gs\to0^+$ the positive $\gs^{-1}$ terms dominate the order-one contributions in $\Omega_\star$ and $\nu_\star$. In the SR case this yields $\Omega_\star^{\rm SR}>\gs(\mathcal B_\star^{\rm SR})^2$, while in the TR case, under the dichotomy~\eqref{eq:TR-Vside-dichotomy}, it yields either $0<\Omega_\star^{\rm TR}<\tfrac{3}{4\gs}$ or $\Omega_\star^{\rm TR}>\tfrac{3}{4\gs}+\gs(\mathcal B_\star^{\rm TR})^2$. Hence the branch enters a stable region of Corollaries~\ref{cor:stability-TR} and~\ref{cor:stability-SR}.
\end{rem}
In Section~\ref{ssec:cor-nonvacuity}  we display explicit families (half-line SR square well, and full-line TR $V_\flat$, asymmetric) which satisfy the hypotheses of Corollary \ref{cor:small-gamma-stability}. A 
perspective on global minimizers of $\mathcal H$ at fixed mass is developed in the companion paper~\cite{tw-variational} and treated in full in~\cite{turner2026columbia-thesis}. That material has been deferred from this paper.

\subsection{Proof of Corollaries~\ref{cor:stability-TR} and~\ref{cor:stability-SR}}\label{ssec:proofof4243}

\begin{proof}
Setting $\widetilde\Omega_\star := \Omega_\star - \tfrac{3}{4\gs}$ in the TR case (or $\widetilde\Omega_\star := \Omega_\star$ in the SR case), the slope functional~\eqref{eq:nu} reads
\begin{equation}
  \Nslope \;=\; f(\widetilde\Omega_\star),
  \qquad
  f(x) := \frac{2(\mathcal B_\star)^2}{x} - \frac{2}{\gs}.
\end{equation}
Figure~\ref{fig:f-stability} shows a plot of  $x\mapsto f(x)$ in the TR (panel a) and SR (panel b) cases. By Theorem~\ref{thm:intro-stability-sharp}, stability occurs exactly where $\Omega_\star>0$ and $\Nslope<0$, and instability elsewhere. Marking those sign conditions on the plotted graph yields the labeled regions (four in the TR case, two in the SR case). The complete tables of $\Nslope$-intervals, broken out by case, are:

\noindent\textbf{Transmission resonance (TR):} $\Sreg$ and $\Ureg$ denote stable and unstable regions, respectively.
\begin{center}
\footnotesize
\normalfont
\renewcommand{\arraystretch}{1.35}
\setlength{\tabcolsep}{4pt}
\begin{tabular}{|>{\centering\arraybackslash}m{0.18\linewidth}|>{\centering\arraybackslash}m{0.58\linewidth}|}
\hline
\rowcolor{headerbg}
\textbf{Region} & $\nu_\star^{\rm TR}$ interval \\
\hline
\rowcolor{unstablebg}
$\Ureg_1^{\rm TR}$ & $-\tfrac{2}{\gs} - \tfrac{8\gs(\mathcal B_\star^{\rm TR})^2}{3} < \nu_\star^{\rm TR} < -\tfrac{2}{\gs}$ \\
\hline
\rowcolor{stablebg}
$\Sreg_1^{\rm TR}$ & $\nu_\star^{\rm TR} < -\tfrac{2}{\gs} - \tfrac{8\gs(\mathcal B_\star^{\rm TR})^2}{3}$ \\
\hline
\rowcolor{unstablebg}
$\Ureg_2^{\rm TR}$ & $\nu_\star^{\rm TR} > 0$ \\
\hline
\rowcolor{stablebg}
$\Sreg_2^{\rm TR}$ & $-\tfrac{2}{\gs} < \nu_\star^{\rm TR} < 0$ \\
\hline
\end{tabular}
\end{center}
\bigskip

\medskip

\noindent\textbf{Scattering resonance (SR):}

\begin{center}
\footnotesize
\normalfont
\renewcommand{\arraystretch}{1.35}
\setlength{\tabcolsep}{4pt}
\begin{tabular}{|>{\centering\arraybackslash}m{0.18\linewidth}|>{\centering\arraybackslash}m{0.58\linewidth}|}
\hline
\rowcolor{headerbg}
\textbf{Region} & $\nu_\star^{\rm SR}$ interval \\
\hline
\rowcolor{unstablebg}
$\Ureg^{\rm SR}$ & $\nu_\star^{\rm SR} > 0$ \ or\ $\nu_\star^{\rm SR} < -\tfrac{2}{\gs}$ \\
\hline
\rowcolor{stablebg}
$\Sreg^{\rm SR}$ & $-\tfrac{2}{\gs} < \nu_\star^{\rm SR} < 0$ \\
\hline
\end{tabular}
\end{center}
\noindent The endpoint $-2/\gs$ is the slope of the full-line free-soliton mass curve $\mathcal N[\mathcal S(\,\cdot\,;E)]$ at $E = E_\star$; see Remark~\ref{rem:geometric}. In the transition cases $\Omega_\star=0$ or $\Nslope=0$ the leading-order criterion is inconclusive.\medskip
\end{proof}

\begin{figure}[htbp]
\centering
\begin{subfigure}[t]{0.48\textwidth}
\centering
\begin{tikzpicture}
\begin{axis}[
  width=\linewidth, height=6.5cm,
  axis lines=middle,
  xlabel={$\Omega_\star$}, ylabel={$\Nslope$},
  xlabel style={anchor=west},
  ylabel style={anchor=south},
  xmin=-1.5, xmax=4.2,
  ymin=-8, ymax=4.5,
  xtick={0, 1.75},
  xticklabels={$0$, $\tfrac{3}{4\gs}{+}\gs\mathcal B_\star^2$},
  ytick={0},
  yticklabels={$0$},
  extra y ticks={-4.667},
  extra y tick labels={$-\tfrac{2}{\gs}{-}\tfrac{8\gs\mathcal B_\star^2}{3}$},
  tick label style={font=\footnotesize, text=black!55},
  major tick length=4.5pt,
  tick style={line width=0.7pt, black},
  samples=200,
  clip=false,
  restrict y to domain=-8:4.5,
  legend style={
    at={(0.97,0.97)},
    anchor=north east,
    font=\footnotesize,
    draw=black!50,
    fill=white,
    legend cell align=left,
  },
]
  \addplot[dashed, gray, domain=-1.5:4.2, samples=2, forget plot] {-2};
  \node[anchor=east, font=\footnotesize, text=black!55, inner sep=3pt] at (axis cs:-1.5,-2) {$-\tfrac{2}{\gs}$};
  \draw[dashed, gray] (axis cs:0.75,-8) -- (axis cs:0.75,4.5);
  \node[anchor=south, font=\footnotesize, text=black!55, inner sep=3pt] at (axis cs:0.75,4.5) {$\tfrac{3}{4\gs}$};
  \fill (axis cs:0,0) circle (1.5pt);
  \fill (axis cs:0.75,0) circle (1.5pt);
  \fill (axis cs:1.75,0) circle (1.5pt);

  \addplot[green!60!black, very thick, domain=0.02:0.72] {2/(x-0.75) - 2};
  \addlegendentry{stable}
  \addplot[red, dashed, very thick, domain=-1.5:-0.02] {2/(x-0.75) - 2};
  \addlegendentry{unstable}
  \addplot[red, dashed, very thick, domain=0.78:1.749, forget plot] {2/(x-0.75) - 2};
  \addplot[green!60!black, very thick, domain=1.751:4.2, forget plot] {2/(x-0.75) - 2};

  \node[font=\small, fill=white, inner sep=1.5pt] at (axis cs:-1.05, -3.07) {$\Ureg_1$};
  \node[font=\small, fill=white, inner sep=1.5pt] at (axis cs: 0.36, -6.90) {$\Sreg_1$};
  \node[font=\small, fill=white, inner sep=1.5pt] at (axis cs: 1.20, 2.44) {$\Ureg_2$};
  \node[font=\small, fill=white, inner sep=1.5pt] at (axis cs: 3.00, -1.11) {$\Sreg_2$};
\end{axis}
\end{tikzpicture}
\subcaption{Transmission-resonance (TR) case. The graph of $\nu_\star$ has a vertical asymptote at $\Omega_\star=\tfrac{3}{4\gs}$. Reading the $\Omega_\star$-axis from left to right gives $\Ureg_1^{\rm TR},\Sreg_1^{\rm TR},\Ureg_2^{\rm TR},\Sreg_2^{\rm TR}$.}\label{fig:f-stability-TR}
\end{subfigure}\hfill
\begin{subfigure}[t]{0.48\textwidth}
\centering
\begin{tikzpicture}
\begin{axis}[
  width=\linewidth, height=6.5cm,
  axis lines=middle,
  xlabel={$\Omega_\star$}, ylabel={$\Nslope$},
  xlabel style={anchor=west},
  ylabel style={anchor=south},
  xmin=-2, xmax=4.2,
  ymin=-8, ymax=4.5,
  xtick={1},
  xticklabels={$\gs\mathcal B_\star^2$},
  ytick={0},
  yticklabels={$0$},
  tick label style={font=\footnotesize, text=black!55},
  major tick length=4.5pt,
  tick style={line width=0.7pt, black},
  samples=200,
  clip=false,
  restrict y to domain=-8:4.5,
  legend style={
    at={(0.97,0.97)},
    anchor=north east,
    font=\footnotesize,
    draw=black!50,
    fill=white,
    legend cell align=left,
  },
]
  \addplot[dashed, gray, domain=-2:4.2, samples=2, forget plot] {-2};
  \node[anchor=east, font=\footnotesize, text=black!55, inner sep=3pt] at (axis cs:-2,-2) {$-\tfrac{2}{\gs}$};
  \fill (axis cs:0,0) circle (1.5pt);
  \fill (axis cs:1,0) circle (1.5pt);

  \addplot[green!60!black, very thick, domain=1.001:4.2] {2/x - 2};
  \addlegendentry{stable}
  \addplot[red, dashed, very thick, domain=-2:-0.02] {2/x - 2};
  \addlegendentry{unstable}
  \addplot[red, dashed, very thick, domain=0.02:0.999, forget plot] {2/x - 2};

  \node[font=\small, fill=white, inner sep=1.5pt] at (axis cs:-1.00, -4.00) {$\Ureg^{\rm SR}$};
  \node[font=\small, fill=white, inner sep=1.5pt] at (axis cs: 0.50, 2.00) {$\Ureg^{\rm SR}$};
  \node[font=\small, fill=white, inner sep=1.5pt] at (axis cs: 2.60, -1.23) {$\Sreg^{\rm SR}$};
\end{axis}
\end{tikzpicture}
\subcaption{Scattering-resonance (SR) case. The vertical asymptote is $\Omega_\star=0$. There are only two regions: the unstable region $\Ureg^{\rm SR}$ (a single $\Omega_\star$-interval, covered by the $2$ distinct branches of $f$) and the stable region $\Sreg^{\rm SR}$.}\label{fig:f-stability-SR}
\end{subfigure}
\caption{Schematic plots of $\nu_\star$ as a function of $\Omega_\star$ in the TR and SR formulas, shown here with $\gs=1$ and $\mathcal B_\star=1$. Green curve segments satisfy the stability sign conditions of Theorem~\ref{thm:intro-stability-sharp}, namely $\Omega_\star>0$ and $\nu_\star<0$, while red segments violate at least one of them. Each segment is labeled by the region $\Sreg,\Ureg$ of Corollaries~\ref{cor:stability-TR} and~\ref{cor:stability-SR} it represents (four in the TR case, two in the SR case).}\label{fig:f-stability}
\end{figure}


\subsection{Strategy for the proof of Theorem~\ref{thm:intro-stability-sharp}}
The proof of Theorem~\ref{thm:intro-stability-sharp} makes use of the following three propositions in order to verify the hypotheses of Theorem~\ref{thm:W-GSS}.

\begin{prop}[Perturbed ``translation mode'']\label{prop:mu}
For $\eps > 0$ sufficiently small, the eigenvalue problem
\begin{equation} \Lop\phi = \mu \phi,\quad \phi\in  X\end{equation} has exactly one real eigenvalue $\mu(\eps)$ in a neighborhood of zero. Its corresponding  eigenfunction $\phi(\,\cdot\,;\eps) \in X$ has the following spatial structure:
\begin{enumerate}[leftmargin=2em, label=\textup{(\roman*)}]
  \item For both the TR and SR cases and   $| x | \leq b$, $\phi(x) = U_\star(x) + h(x)$, where  $\|h\|_{C^1} = \mathcal O(\eps)$.
  \item  For both the TR and SR cases and $x > b$, $\phi$ is a scalar multiple of the $\ell=2$ P\"oschl--Teller Jost solution $f_+$ of Lemma~\ref{lem:Jost}, shifted by $y_R(\eps)$ . 
  \\  For the TR case, and $x < -b$, $\phi$   is  a multiple of $f_-$ shifted by $y_L(\eps)$.
  \item The perturbed ``translation eigenvalue''  has the following behavior for $\eps>0$ and small:
  \begin{equation}\label{eq:mu-final}
    \mu(\eps) \;=\; \frac{3\eps^2}{4\gs}\,\Omega_\star + \mathcal O(\eps^3),
  \end{equation} 
  where $\Omega_\star$ is the expression in  \eqref{eq:Omega}.
\end{enumerate}
\end{prop}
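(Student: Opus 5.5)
The plan is to determine $\mu(\eps)$ as the zero of a matching (Evans-type) function. The solution in the outer region is explicit, so the eigenvalue condition reduces to one scalar equation in $(\mu,\eps)$. We then expand that equation to second order in $\eps$.

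\emph{Step 1 (outer solutions).} For $x>b$ we have $V=0$ and $\psi_\eps=\Scal(x-y_R(\eps);E(\eps))$. Hence $\Lop=-\partial_x^2-6\la_0^2\sech^2(\la_0(x-y_R))+\la_0^2$ with $\la_0=\sqrt{-E(\eps)}$. The equation $\Lop\phi=\mu\phi$ then reads $H_{\rm PT}(\la_0,2,y_R)\phi=-\la^2\phi$ with $\la=\sqrt{\la_0^2-\mu}$. So for $|\mu|$ small the decaying solution is $f_+(x-y_R,i\la;\la_0)$ from Lemma~\ref{lem:Jost}, with $\ell=2$ and $m=\la/\la_0=1-\mu/(2\la_0^2)+\dots$. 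Symmetrically, for $x<-b$ in the TR case the decaying solution is $f_-(x-y_L,i\la;\la_0)$. In the SR case we impose $\phi(0)=0$ instead.

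We evaluate the logarithmic derivative at $\xi=b-y_R$. Since $y_R\to\infty$, we have $t=\tanh(\la_0\xi)=-1+2q+\mathcal O(q^2)$ with $q:=e^{-2\la_0(y_R-b)}$. By \eqref{eq:yRL-tanh} and $\psi_\eps^2(b)=\eps U_\star(b)^2+\mathcal O(\eps^2)$, the small parameter satisfies $q=\eps U_\star(b)^2/(8\gs^2)+\mathcal O(\eps^2)$. At $m=1,t=-1$ the polynomial $p(m,t)$ of \eqref{eq:p-def} vanishes; this is the translation zero mode. So $f_+'/f_+=-\la+\la_0(1-t^2)p_t/p$ is a ratio of two small quantities, with $p\approx 2(m-1)-3(m-1)+\dots+c\,q$. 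I will expand $f_+'/f_+$ at $x=b$ as an explicit function of $(\mu,q)$ through second order. The same is done at $x=-b$ using $\psi_\eps^2(-b)=\eps+\mathcal O(\eps^2)$.

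\emph{Step 2 (inner problem and matching function).} On $[a,b]$ we write $\Lop=H_V+\gs^2-\eps(\dot E+3\phi_\eps^2)+\mathcal O(\eps^2)-\mu$. Let $\phi_{\rm in}(x;\mu,\eps)$ be the solution normalized at $x=a$ so as to match the left outer solution (in SR: $\phi(0)=0$, $\phi'(0)=1$). Define $D(\mu,\eps)$ as the Wronskian mismatch between $\phi_{\rm in}$ and the outer solution at $x=b$. By Theorem~\ref{thm:bifurcation} and analytic ODE dependence, $D$ is real analytic near $(0,0)$.

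Green's identity against $U_\star$ on $[a,b]$, with the boundary conditions \eqref{eq:TR-BC}/\eqref{eq:SR-BC}, converts the solvability condition into
\[
  \text{(boundary terms from the outer log-derivatives)}=\int_a^b\bigl(3\eps U_\star^4+\eps\dot E U_\star^2+\mu U_\star^2\bigr)\,dx+\dots
\]
The boundary contributions produce the explicit terms $U_\star(b)^4/(2\gs)$, $1/(4\gs)$ and $U_\star(b)^2/(2\gs)$ that appear in $\mathcal B_\star$ and $\Omega_\star$. Establishing uniqueness of the zero near $\mu=0$ (part of the statement) uses a Rouché argument on $D(\cdot,\eps)$. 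Since $D(\mu,0)$ has a simple zero at $\mu=0$, this yields exactly one real eigenvalue. The eigenfunction structure (i)--(ii) is then immediate from the construction, with $h=\mathcal O(\eps)$ in $C^1$.

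\emph{Step 3 (the expansion; main obstacle).} The $\mathcal O(\eps)$ terms of $D$ must cancel, so that $\mu=\mathcal O(\eps^2)$. I expect this cancellation to be precisely the identity \eqref{eq:Edot-solved} defining $\dot E$. It can be seen structurally: differentiating the profile equation in $\eps$ gives $\Lop\partial_\eps\psi_\eps=\dot E\psi_\eps$, and outside $[a,b]$, $\partial_\eps\psi_\eps$ is dominated by $-y_R'\Scal'$. This is a near-kernel element, which shows that $\mu$ is second order.

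I would therefore write $\mu=\eps^2\mu_2+\mathcal O(\eps^3)$ and carry the inner correction $h$ to first order. The integral terms involving $h$ should be eliminated using $\partial_\eps\phi_\eps$ and the identity above, rather than solved for explicitly. Collecting the $\eps^2$ terms then gives $\mu_2\cdot(\text{normalization}\sim 4\gs/3)=\Omega_\star$. The factor $3/(4\gs)$ comes from the PT normalization $\|\Scal'\|^2$ relative to the boundary amplitude $q$. This bookkeeping is the delicate part: signs, the $\eps$-dependence of $\la_0$, $y_R$, $y_L$, and the $\mathcal O(q^2)$ terms of $t$ all enter at the same order. I would cross-check the result against the SR case, where the left boundary terms are absent and the $1/(4\gs)$ term correctly disappears.
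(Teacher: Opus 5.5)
Your setup is the paper's. You use the $\ell=2$ P\"oschl--Teller Jost solutions outside $[a,b]$ and an inner IVP normalized by the left matching. Your matching function at $x=b$ has the denominator $p(m,t_R)$ cleared, so it is analytic at the origin; at $\eps=0$ it is proportional to $p(m,-1)(\win^R+\la)$ with $p(m,-1)=(m-1)(m-2)$, which gives the simple zero you need. You then use the Wronskian identity against $U_\star$ and the implicit function theorem (Rouch\'e works equally well).

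The gap is in part (iii), which is the real content. You do not compute the $\eps^2$ coefficient, and the plan you give for it misreads the structure. For an analytic (cleared) $D$, the vanishing of the $\mathcal O(\eps)$ term has nothing to do with \eqref{eq:Edot-solved}. At $\mu=0$ one has $m\equiv1$, so $p(1,t_R)=3t_R(1+t_R)$ and $\la_0(1-t_R^2)(3+6t_R)$ share the factor $1+t_R=\frac{U_\star(b)^2}{4\gs^2}\eps+\mathcal O(\eps^2)$. Hence
\[
  D(\eps,0)=-12\gs^2\,(1+t_R)\bigl[t_R(\win^R+\la_0)-\la_0(1-t_R)(1+2t_R)\bigr].
\]
The bracket vanishes at $\eps=0$ only because $\win^R(0,0)=\gs$, i.e.\ by the resonance condition $U_\star'(b)=\gs U_\star(b)$. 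In the uncleared mismatch $\win^R-\woutR$, the $\mathcal O(\eps)$ term at $\mu=0$ does not cancel at all: it equals $\Omega_\star\eps/U_\star(b)^2$. There $\mu=\mathcal O(\eps^2)$ holds because the $\mu$-coefficient is of size $1/\eps$.

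Your heuristic via $\partial_\eps\psi_\eps$ does not show $\mu=\mathcal O(\eps^2)$ either. The function $\eps\partial_\eps\psi_\eps$ has $L^2$ norm of order one, while $\Lop(\eps\partial_\eps\psi_\eps)=\eps E'(\eps)\psi_\eps$ is of order $\eps$. This only locates an eigenvalue within $\mathcal O(\eps)$ of zero.

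The missing idea is what the factorization buys. The $\eps^2$ coefficient of $D(\eps,0)$ is the product of two first-order coefficients, so only first-order data are needed: $\la_0$, $t_R$, and $\win^R(\eps,0)$ to $\mathcal O(\eps)$.
\begin{itemize}
\item The expansion of $\win^R(\eps,0)$ follows from the Wronskian identity with source $[\eps(3U_\star^2+\dot E)+\mu]U_\star$, together with $\alpha=-(\eps\dot E+\mu)/(2\gs)+\mathcal O(\eps^2+\mu^2)$ from the left matching.
\item The part of the source involving $h$ is $\mathcal O((\eps+|\mu|)^2)$ and is simply discarded. So there are no $h$-integrals to eliminate, and neither the $\mathcal O(q^2)$ terms of $t_R$ nor the higher-order behavior of $y_L$ enter.
\end{itemize}
$\dot E$ appears only at this stage. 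With $I_4:=\int_{-b}^bU_\star^4\,dx$, $U_\star(b)^2$ times the bracket's $\eps$-coefficient equals $3I_4+\dot E\,\mathcal B_\star-\frac{3U_\star(b)^4}{4\gs}$. Substituting $2\dot E\,\mathcal B_\star=\frac{U_\star(b)^4-1}{2\gs}-2I_4$ turns this into $-\Omega_\star$. Hence $D(\eps,0)=3\Omega_\star\eps^2+\mathcal O(\eps^3)$. Combined with $\partial_\mu D(0,0)=-4\gs^2\cdot\frac{1}{2\gs^2}\cdot2\gs=-4\gs$, this gives $\mu(\eps)=\frac{3\Omega_\star}{4\gs}\eps^2+\mathcal O(\eps^3)$.
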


\begin{prop}[Point spectrum of $\Lop$]\label{prop:Lp-count}
  Assume $U_\star\geq0$. For $\eps > 0$ sufficiently small, 
  \begin{enumerate}
      \item $\Lop$ has exactly one negative eigenvalue which is $\mathcal{O}(\eps)$-distant from $\mu_0^{\rm sol}:=-3\gs^2$, the ground state eigenvalue of $L^{\rm sol}_+$.
      \item  $\mu(\eps)$, whose small $\eps$ asymptotics in \eqref{eq:mu-final}, is the second eigenvalue of $\Lop$. Thus,
      \item  $n_-(\Lop) = 1$ if and only if $\mu(\eps) > 0$, that is if $\Omega_\star>0$;
       recall $\gs>0$ by \eqref{eq:assume-gs-pos}.
  \end{enumerate}  
\end{prop}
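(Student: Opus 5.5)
The plan is a Sturm oscillation count for the one-dimensional Schr\"odinger operator $\Lop$ on $X$ (full line for TR; half line with Dirichlet condition at $0$ for SR). The $k$-th eigenfunction has exactly $k-1$ interior zeros. So it suffices to do two things: locate all eigenvalues of $\Lop$ below a small positive threshold, and count the zeros of the corresponding eigenfunctions.

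\emph{Step 1: localization of the low spectrum.} I would first show that every eigenvalue $\mu\le c\gs^2$ of $\Lop$ (with $c<1$ fixed) lies within $\mathcal O(\eps)$ of $\{-3\gs^2,0\}$. On $[a,b]$, the potential $V-3\psi_\eps^2$ is an $\mathcal O(\eps)$ perturbation of $V$. Outside $[a,b]$, $\Lop$ coincides with the P\"oschl--Teller operator $L_+^{\rm sol}$ centered at $y_R(\eps)$, respectively $y_L(\eps)$, which lie at distance $\sim\log(1/\eps)$. I would use an IMS-type localization: a partition of unity separating $[a,b]$ from the soliton core(s), with localization error $\mathcal O(1/\log(1/\eps)^2)$. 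Combined with Lemma~\ref{prop:PT-spectrum} and \eqref{eq:Lp-spectrum}, this shows that near-core eigenvalues cluster at $-3\gs^2$ and $0$.

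To rule out spurious eigenvalues coming from the inner region, I would argue with the shooting/matching construction instead. For $\mu<\gs^2$, an $L^2$ eigenfunction must equal a multiple of the $\ell=2$ Jost solution $f_+(\cdot-y_R,i\sqrt{\gs^2+\mu})$ for $x>b$ (and similarly $f_-$ on the left for TR). The eigenvalue condition then reduces to a Wronskian (Evans function) matching condition at $x=\pm b$ between these Jost solutions and the inner solution on $[a,b]$.

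\emph{Step 2: evaluation of the matching condition.} The factor $e^{-\la(x-y_R)}$ is huge at $x=b$ since $y_R\to\infty$. Using the explicit form \eqref{eq:Jost} with $t\approx -1$ near $x=b$, I would show that the matching condition has, near $\mu\approx-3\gs^2$ (where $p(m,t)$ at $m=2$ is the ground state), exactly one root, namely the ground state of the soliton perturbed by $\mathcal O(\eps)$. Near $\mu\approx 0$, by Proposition~\ref{prop:mu}, the only root is $\mu(\eps)$. Away from these two points no root exists, because the inner problem at energy $E(\eps)+\mu$ with Robin-type data $\approx\la$ is non-degenerate: its only degeneracy occurs at $\la=\gs$, $\mu=0$, which is the resonance itself. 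This yields part (1) together with the claim that the spectrum below $c\gs^2$ consists of exactly two eigenvalues.

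\emph{Step 3: ordering via nodes.} The ground state has no zeros. To identify $\mu(\eps)$ as the second eigenvalue, I would count zeros of the eigenfunction $\phi$ from Proposition~\ref{prop:mu}:
\begin{itemize}[leftmargin=2em]
\item On $[a,b]$, $\phi=U_\star+\mathcal O(\eps)$ in $C^1$. Here I would use $U_\star\ge0$; together with $U_\star(b)\neq0$ and simple zeros forced by $C^1$ control (plus $U_\star'(0)=1$ at the SR Dirichlet endpoint, or $U_\star(-b)=1$ for TR), this gives no interior zeros in $[a,b]$ for small $\eps$. Degenerate touching zeros of $U_\star$ would need a separate perturbative check, which I would treat using the positivity and the ODE.
\item For $x>b$, $\phi\propto f_+$ with $\ell=2$ and $m\approx1$. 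Then $p(1,t)=3t(1+t)$ vanishes once, at $t=0$, i.e.\ at the soliton center, just as for $\Scal'$.
\item On the TR left side, $f_-$ with $t\approx1$ near $-b$ and $m\approx1$ gives $p(1,-t)=3t(t-1)$. This vanishes at $t=1$ only asymptotically, so I need the sign of the $\mathcal O(\eps)$ correction to $m$ and the matching amplitude. I would expect the correct count on the left to be zero zeros in total, with consistency fixed by the fact that the overall count across all pieces is odd/one.
\end{itemize}
Hence $\phi$ has exactly one zero, so $\mu(\eps)$ is the second eigenvalue. Part (3) then follows immediately from \eqref{eq:mu-final} and $\gs>0$.

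\emph{Main obstacle.} I expect the hardest step to be the node count in the outer regions, in particular showing that the tail of $f_\pm$ does not create an extra zero near $\pm b$ or before the soliton center, since $m$ differs from $1$ only at order $\eps^2$. I would try first to verify directly that $\phi$ and $\phi'$ have matching signs at $x=\pm b$ relative to the Jost factor $p(m,t)$, using the exact matching relations.
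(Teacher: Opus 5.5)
Your overall frame matches the paper's: count the nodes of the eigenfunction $\phi$ from Proposition~\ref{prop:mu} and apply Sturm oscillation. The gap is that the step which actually carries the proof, the zero count on the two exterior intervals, is not done, and on the left it rests on an incorrect picture.

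\textbf{Left exterior.} In the TR case with $\gs>0$ there is no soliton core on $(-\infty,-b)$. There $\psi_\eps$ is only the small tail of a soliton whose virtual center satisfies $y_L(\eps)\to+\infty$ (recall $\psi_\eps(-b)^2=\eps$). So $t=\tanh(\la_0(x-y_L))$ is near $-1$, not $+1$, at $x=-b$. As $x$ runs over $(-\infty,-b)$, $t$ sweeps $(-1,t_L]$ with $t_L=-1+\eps/(4\gs^2)+\mathcal O(\eps^2)$, by \eqref{eq:1ptL}. In your picture $t$ would sweep essentially all of $(-1,1)$, and $p(1,-t)=3t(t-1)$ would vanish at the interior point $t=0$. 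That gives a node on the left too, so your setup contradicts the conclusion you ``expect.'' Closing this by appeal to the total count being ``odd/one'' is circular, since that count is what must be proved.

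\textbf{Right exterior.} You set $m=1$ and kept only the root $t=0$ of $3t(1+t)$. You discarded the second root at $t=-1$, which moves once $m\neq1$. Whether it enters $[t_R,1)$ is exactly the question you then leave open as the ``main obstacle.''

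\textbf{Missing idea: separation of scales.} This resolves, rather than causes, that obstacle.
\begin{itemize}
\item Because $\mu(\eps)=\mathcal O(\eps^2)$, \eqref{eq:m-expand-b} gives $m=1+\mathcal O(\eps^2)$.
\item Hence the roots of $p(m,t)=3t^2+3mt+m^2-1$ are $\mathcal O(\eps^2)$ and $-1+\mathcal O(\eps^2)$. The roots of $p(m,-t)$ are $\mathcal O(\eps^2)$ and $1+\mathcal O(\eps^2)$.
\item The interval endpoints, by contrast, sit at distance of order $\eps$ from $-1$, with positive coefficients: $1+t_R=U_\star(b)^2\eps/(4\gs^2)+\mathcal O(\eps^2)$ with $U_\star(b)\neq0$, and $1+t_L=\eps/(4\gs^2)+\mathcal O(\eps^2)$.
\item So on the right, the root $-1+\mathcal O(\eps^2)$ lies below $t_R$, while the root $\mathcal O(\eps^2)$ lies in $[t_R,1)$. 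By monotonicity of $x\mapsto t$, there is exactly one node on $(b,\infty)$.
\item On the left, neither root of $p(m,-t)$ lies in $(-1,t_L]$, so there is no node on $(-\infty,-b)$.
\end{itemize}
Your interior argument is fine: $U_\star>0$ on $[-b,b]$, since a nonnegative nontrivial solution cannot have a touching zero by ODE uniqueness. So $\phi$ has exactly one zero. Sturm oscillation then makes $\mu(\eps)$ the second eigenvalue, with exactly one eigenvalue below it.

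In particular, your Steps~1--2 are unnecessary for parts (2)--(3). That covers the IMS localization and the unproved claim that the matching condition has no roots away from $-3\gs^2$ and $0$.
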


\begin{prop}[Slope of $\mathcal{N}$ vs. $E$]\label{prop:slope}
  For sufficiently small $\eps > 0$,
  \begin{equation}\label{eq:slope-asymp}
    \frac{d\mathcal N}{dE}(\eps) \;=\; \Nslope + \mathcal O(\eps),
  \end{equation}

  where $\Nslope$ is given by the expression in \eqref{eq:nu}.
\end{prop}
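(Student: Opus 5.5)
The plan is to compute $\mathcal N(\eps)=\|\psi_\eps\|_{L^2}^2$ explicitly as a function of $\eps$ up to $\mathcal O(\eps^2)$, compute $E(\eps)$ to the same order using $\dot E$ from \eqref{eq:Edot-solved}, and then take $d\mathcal N/dE=(d\mathcal N/d\eps)/(dE/d\eps)$. Both maps are real analytic in $\eps$ near $0$ by Theorem~\ref{thm:bifurcation}(i) together with the explicit formulas \eqref{eq:yRL} for the soliton centres. Since $\dot E\ne 0$, the quotient is then $\Nslope+\mathcal O(\eps)$, with $\Nslope=\dot{\mathcal N}/\dot E$. Note that $\dot E\ne 0$ is exactly the non-vertical tangency hypothesis: for TR, $2\mathcal B_\star\dot E=\Omega_\star-\tfrac{3}{4\gs}$, and for SR, $2\mathcal B_\star\dot E=\Omega_\star$. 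This is checked by comparing \eqref{eq:Edot-solved} with \eqref{eq:Omega}.

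I treat the TR case; the SR case is the same computation with only the right exterior region. Split $\mathcal N$ into the inner part on $[-b,b]$ and the two exterior soliton tails:
\[
\mathcal N=\eps\!\int_{-b}^b\phi_\eps^2+\int_b^\infty \Scal(x-y_R;E)^2dx+\int_{-\infty}^{-b}\Scal(x-y_L;E)^2dx .
\]
The inner part is $\eps\int U_\star^2+\mathcal O(\eps^2)$. For the tails, with $\la=\sqrt{-E}$, use the exact primitive $\int\Scal^2=2\la\tanh(\la(x-y))$. This gives a tail mass of $2\la\bigl(1+\tanh(\la(y_R-b))\bigr)$, and by \eqref{eq:yRL-tanh},
\[
\tanh(\la(y_R-b))=\sqrt{1-\psi_\eps(b)^2/(2\la^2)} .
\]
Hence the right tail equals $2\la\bigl(1+\sqrt{1-\eps U_\star(b)^2/(2\la^2)+\mathcal O(\eps^2)}\bigr)$, which is $4\la-\eps U_\star(b)^2/(2\la)+\mathcal O(\eps^2)$. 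The left tail is treated the same way, with $U_\star(-b)^2=1$. The $\mathcal O(\eps^2)$ remainder inside the square root must be kept honest: it is analytic in $\eps$, so it contributes only at order $\eps^2$.

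Expanding $\la=\gs-\eps\dot E/(2\gs)+\mathcal O(\eps^2)$, the two tail terms $4\la$ total $8\la$ for the full line. This must be reconciled with the threshold $\mathcal N_\star=4\gs$ in Theorem~\ref{thm:bifurcation}(ii). Since $\gs>0$, the sign in \eqref{eq:yL-tanh} places the left soliton centre so that only the decaying half of that soliton lies in $x<-b$. So the left integral is $2\la\bigl(1-\tanh(\cdot)\bigr)$, which gives $\eps/(2\la)+\mathcal O(\eps^2)$ rather than $4\la$. Getting this sign and orientation right, consistent with the normalization $U_\star(-b)=1$ and the boundary conditions \eqref{eq:TR-BC}, is the step I expect to be the main bookkeeping obstacle. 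I would check it against the consistency requirement $\mathcal N(0)=4\gs$. With that settled,
\[
\mathcal N=4\gs+\eps\Bigl(\int_{-b}^bU_\star^2-\tfrac{U_\star(b)^2}{2\gs}+\tfrac{1}{2\gs}-\tfrac{2\dot E}{\gs}\Bigr)+\mathcal O(\eps^2)=4\gs+\eps\bigl(\mathcal B_\star-\tfrac{2\dot E}{\gs}\bigr)+\mathcal O(\eps^2).
\]
Therefore $\dot{\mathcal N}/\dot E=\mathcal B_\star/\dot E-2/\gs$, and substituting $\dot E=(\Omega_\star-\tfrac{3}{4\gs})/(2\mathcal B_\star)$ gives $2\mathcal B_\star^2/(\Omega_\star-\tfrac{3}{4\gs})-2/\gs$, which is \eqref{eq:nu}.

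For SR, the left tail is absent and $\dot E=\Omega_\star/(2\mathcal B_\star)$, which yields the second line of \eqref{eq:nu}.

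The $\mathcal O(\eps)$ error in \eqref{eq:slope-asymp} then follows from analyticity of $\eps\mapsto(E,\mathcal N)$. The proof of \eqref{eq:Edot-solved} itself, by a Fredholm solvability condition on the inner boundary value problem with nonlinear matching conditions at $\pm b$, is assumed from Section~\ref{ssec:proof-Edot}.
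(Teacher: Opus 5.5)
Your proposal is correct and follows essentially the same route as the paper: you split $\mathcal N$ into the inner part plus two soliton tails, evaluate the tails with the exact $\sech^2$ primitive and the $\tanh$ centering formulas to get $\mathcal N=4\gs+\eps(\mathcal B_\star-2\dot E/\gs)+\mathcal O(\eps^2)$, and then divide by $\dot E=(\Omega_\star-\tfrac{3}{4\gs})/(2\mathcal B_\star)$, whose nonvanishing is hypothesis (H3). One wording point: since $y_L\to+\infty$, the region $x<-b$ contains only an exponentially small far tail of the left soliton, not its ``decaying half''; this is exactly why $N_{\rm out}^L=\eps/(2\gs)+\mathcal O(\eps^2)$, which is the value you correctly compute.
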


\subsection*{Proof of Theorem~\ref{thm:intro-stability-sharp}}\label{ssec:proof-main-thm}
\smallskip
\noindent\emph{TR case.}
By hypothesis~\ref{H:positivity}, $U_\star\ge0$ on $[-b,b]$ and
\[
  U_\star'' \;=\; (V+\gs^2)U_\star
\]
there. A nontrivial nonnegative solution of a second-order linear ODE cannot have an interior zero: if $U_\star(x_0)=0$ for some $x_0\in(-b,b)$, then $x_0$ is a local minimum, so $U_\star'(x_0)=0$, and uniqueness for the ODE would force $U_\star\equiv0$, contradicting the normalization $U_\star(-b)=1$. Hence $U_\star>0$ on $[-b,b]$. By Theorem~\ref{thm:bifurcation}, on $[-b,b]$ we have $\psi_\eps=\sqrt{\eps}\,[U_\star+\mathcal O(\eps)]$, so $\psi_\eps>0$ there for $\eps>0$ sufficiently small. On the two exterior regions $(-\infty,-b)$ and $(b,\infty)$, the same theorem identifies $\psi_\eps$ with translates of the free soliton $\Scal(\,\cdot\,;E(\eps))$, which is strictly positive. Therefore $\psi_\eps>0$ on all of $\R$.

Since $L_-^\eps\psi_\eps=0$ and $\psi_\eps$ is nodeless, Sturm oscillation implies that $\psi_\eps$ is the ground-state eigenfunction of $L_-^\eps$. Thus
\[
  L_-^\eps \;\ge\; 0,\qquad \ker L_-^\eps \;=\; \spn\{\psi_\eps\},\qquad n_-(L_-^\eps) \;=\; 0.
\]
Turning to $L_+^\eps$, Proposition~\ref{prop:Lp-count} shows that for $\eps>0$ small its point spectrum consists of the negative soliton ground-state eigenvalue near $-3\gs^2$, together with the small eigenvalue $\mu(\eps)$ from Proposition~\ref{prop:mu}, and that there are no other eigenvalues between them. Consequently,
\[
  n_-(L_+^\eps) \;=\; 1 \iff \mu(\eps) \;>\; 0,
\]
while $\mu(\eps) \;<\; 0$ gives $n_-(L_+^\eps) \;=\; 2$. Proposition~\ref{prop:mu} gives
\[
  \sgn \mu(\eps) \;=\; \sgn \Omega_\star
\]
for $\eps>0$ sufficiently small, and Proposition~\ref{prop:slope} gives
\[
  \sgn\!\left(\frac{d\mathcal N}{dE}(\eps)\right) \;=\; \sgn \Nslope .
\]
Therefore:
\begin{itemize}
\item If $\Omega_\star \;>\; 0$ and $\Nslope \;<\; 0$, then $n_-(L_+^\eps) \;=\; 1$, $L_-^\eps \;\ge\; 0$ with $\ker L_-^\eps \;=\; \spn\{\psi_\eps\}$, and $\frac{d\mathcal N}{dE}(\eps) \;<\; 0$. The hypotheses of Theorem~\ref{thm:W-GSS}(i) hold, so $\psi_\eps$ is orbitally stable.
\item If $\Omega_\star \;<\; 0$, then $\mu(\eps) \;<\; 0$, so $n_-(L_+^\eps) \;=\; 2$ while $n_-(L_-^\eps) \;=\; 0$. Hence
\[
  n_-(L_+^\eps)-n_-(L_-^\eps) \;=\; 2 \;\notin\; \{0,1\},
\]
and Theorem~\ref{thm:W-GSS}(ii)(\ref{wgss-b}) yields instability.
\item If $\Nslope \;>\; 0$, then $\frac{d\mathcal N}{dE}(\eps) \;>\; 0$, and Theorem~\ref{thm:W-GSS}(ii)(\ref{wgss-a}) yields instability.
\end{itemize}

\smallskip
\noindent\emph{SR case.}
The argument is the same, with the left exterior replaced by the Dirichlet endpoint $x=0$. Hypothesis~\ref{H:positivity} gives $U_\star\ge0$ on $[0,b]$, and the boundary conditions $U_\star(0)=0$, $U_\star'(0)=1$ imply $U_\star>0$ on $(0,b]$. Theorem~\ref{thm:bifurcation} then gives $\psi_\eps>0$ on $(0,\infty)$, with $\psi_\eps(0)=0$. Hence $\psi_\eps$ is again the ground state of $L_-^\eps$ on $X=H_0^1(\R_+)$, so $n_-(L_-^\eps)=0$. Proposition~\ref{prop:Lp-count} reduces the count of negative eigenvalues of $L_+^\eps$ to the sign of $\mu(\eps)$, Proposition~\ref{prop:mu} identifies the sign of $\mu(\eps)$ with the sign of $\Omega_\star$, and Proposition~\ref{prop:slope} identifies the sign of $\frac{d\mathcal N}{dE}$ with the sign of $\Nslope$. The same three applications of Theorem~\ref{thm:W-GSS} therefore prove Theorem~\ref{thm:intro-stability-sharp} in the SR case as well.
\qed



\section{Proofs of Propositions~\ref{prop:mu},~\ref{prop:Lp-count} and ~\ref{prop:slope}}\label{sec:proofs}

We present the proofs in detail for the transmission resonance (TR) case; hence the support of $V(x)$ is contained in $-b\le x\le b$. In Remark~\ref{rem:SR-stability} we indicate the modifications that are required to adapt the argument to the scattering resonance (SR) case.

\subsection{The perturbed translation eigenvalue; proof of
Proposition~\ref{prop:mu}}\label{ssec:proof-mu}
Recall $\Lop$, given in ~\eqref{eq:Lpm-eps}, is the linearized operator about the resonance-induced nonlinear bound state branch $\eps\mapsto (E(\eps),\psi_\eps(\cdot;E(\eps)))$; see Theorem~\ref{thm:bifurcation}.

For $\eps$ small and positive, we seek solutions of the eigenvalue problem 
\begin{equation}\label{eq:X-evp}
    \Lop\phi = \mu\ \phi,\quad \phi\in X=H^1(\R),\quad \textrm{with spectral parameter $\mu$ near zero.}
\end{equation}

By parts (ii) and (iii) of Theorem~\ref{thm:bifurcation}, the transmission (TR) resonance-induced nonlinear bound state, $(\psi_\varepsilon,E(\varepsilon))$, has the form:
\begin{align}\label{eq:psi-piecewise}
  \psi_\eps(x) &= \begin{cases}
    \Scal\bigl(x - y_L(\eps);\,E(\eps)\bigr) & x < -b, \\[2pt]
    \sqrt\eps\bigl[\,U_\star(x) + \mathcal O(\eps)\bigr] & -b\le x\le b, \\[2pt]
    \Scal\bigl(x - y_R(\eps);\,E(\eps)\bigr) & x > b,
  \end{cases}\\
  E(\eps) &= -\gs^2 + \eps\dot E + \mathcal O(\eps^2)\ . \label{eq:E-expand}
\end{align}
Here, $\mathcal{S}(\cdot;E)$ is given by \eqref{eq:free-soliton-intro}, $\dot E$ by~\eqref{eq:Edot-solved}, and the free soliton centerings $y_{R}(\eps)$ and $y_{L}(\eps)$ by~\eqref{eq:yRL-tanh} and~\eqref{eq:yL-tanh}. Note, for the (TR) case, both $y_{R}(\eps)$ and $y_{L}(\eps) \to +\infty$ as $\eps \to 0^+$.

For $x<-b$ and $x>b$, $\psi_\eps$ is equal to a translated free soliton, and hence the potential in the linearized operator $\Lop = -\partial_x^2 + V - 3\psi_\eps^2 - E(\eps)$ is easily seen to be equal to  a P\"oschl--Teller potential (Section~\ref{ssec:PT}) with parameter $\ell=2$ and centerings at:  $x=y_L(\eps)$, for $x<-b$, and $x=y_R(\eps)$, for $x>b$. Further, on the interval $-b\le x\le b$, $\Lop$ is equal to $H_V + \gs^2$ plus  an order $\eps$ correction:
\begin{equation}\label{eq:Lop-pieces}
  \Lop \;=\;
  \begin{cases}
    -\partial_x^2 - 6\la_0(\eps)^2\sech^2\!\bigl(\la_0(\eps)(x - y_L(\eps))\bigr) + \la_0(\eps)^2
      & x < -b, \\[4pt]
    H_V -3\psi_\eps^2(x) +\la_0^2(\eps)
      & -b\le x\le b, \\[4pt]
    -\partial_x^2 - 6\la_0(\eps)^2\sech^2\!\bigl(\la_0(\eps)(x - y_R(\eps))\bigr) + \la_0(\eps)^2
      & x > b\ ,
  \end{cases}
\end{equation}
where, by \eqref{eq:E-expand}, 
\begin{equation}\label{eq:la0}
  \la_0(\eps) \;:=\; \sqrt{|E(\eps)|} \;=\; \gs - \frac{\eps\dot E}{2\gs} + \mathcal O(\eps^2)\ .
\end{equation}
 On the interior interval, $-b\le x\le b$, we substitute $\psi_\eps^2(x) = \eps U_\star(x)^2 + \mathcal O(\eps^2)$ from~\eqref{eq:psi-piecewise} and $\la_0^2(\eps) = \gs^2 - \eps\dot E + \mathcal O(\eps^2)$ from~\eqref{eq:la0} to obtain 
\begin{equation}\label{eq:Lop-interior-expanded}
  \Lop\Big|_{[-b,b]} \;=\; (H_V + \gs^2) - \eps\bigl(3U_\star^2 + \dot E\bigr) + \mathcal O(\eps^2)\ .
\end{equation}
Hence, on $[-b,b]$, $\Lop$ is a small perturbation of the operator $H_V + \gs^2$, whose kernel is spanned by the resonance eigenmode $U_\star$.

Noting now that for $|x|>b$, we have that $\Lop$ is equal to $H_{\rm PT}(\la_0(\eps),2,y) + \la_0(\eps)^2$ with $y=y_{R}(\eps)$ or $y_{L}(\eps)$;  see \eqref{eq:PTpot_y}. So the eigenvalue problem $\Lop\phi=\mu\phi$ with $\phi\in L^2(\R)$ takes the form
\begin{subequations}
\label{eq:evp-mu}
\begin{align}
    H_{\rm PT}\big(\la_0(\eps),2,y_L(\eps)\big)\ \phi &= -\lambda^2(\eps,\mu)\ \phi,\quad x<-b\ {\rm and}\ \phi(x)\to0\ \textrm{ as $x\to-\infty$}\label{eq:evp-mu-lt-}\\
   \big(\ H_V -3\psi_\eps^2(x)  \big)\  \phi &= -\lambda^2(\eps,\mu)\ \phi \quad -b\le x\le b\ {\rm and}\
    \textrm{$\phi(x)$, $\phi'(x)$ continuous at $x=\pm b$}\label{eq:evp-mu-lt0}\\
   H_{\rm PT}\big(\la_0(\eps),2,y_R(\eps)\big)\ \phi &= -\lambda^2(\eps,\mu)\ \phi,\ \quad x>b\ {\rm and}\ \phi(x)\to0\ \textrm{ as $x\to+\infty$}\ . \label{eq:evp-mu-lt+}
\end{align}
\end{subequations}
In \eqref{eq:evp-mu}, by \eqref{eq:la0}, we have
\begin{equation}\label{eq:la-expand}
  \la(\eps,\mu) \;:=\; +\sqrt{\la_0(\eps)^2-\mu} \;=\; \gs - \frac{\eps\dot E}{2\gs} - \frac{\mu}{2\gs} + \mathcal O(\eps^2+\mu^2)\ .
\end{equation}
Let us now solve the eigenvalue problem \eqref{eq:evp-mu} (equivalently \eqref{eq:X-evp}) for $\mu$ near zero.

In the region where $|x|>b$, we use the Jost solutions for  P\"oschl--Teller potentials ($f_\pm$ in Lemma~\ref{lem:Jost}), and impose the spatial decay  conditions in~\eqref{eq:evp-mu-lt-} and \eqref{eq:evp-mu-lt+}:
\begin{equation}\label{eq:phi-cases}
  \phi(x) = \begin{cases}
    c_-\,f_-\!\bigl(x - y_L(\eps);\,i\la(\eps,\mu);\,\la_0(\eps)\bigr) & x < -b, \\[4pt]
    c_+\,f_+\!\bigl(x - y_R(\eps);\,i\la(\eps,\mu);\,\la_0(\eps)\bigr) & x > b.
  \end{cases}
\end{equation}

Next, to construct the solution on the interval $-b\le x\le b$, we let
\begin{equation} \phi(x)=\phi_{\rm in}(x)=\phi_{\rm in}(x;\mu,\eps,\alpha)\label{eq:phi_in}
\end{equation}
denote the unique solution of the initial-value problem with initial data prescribed at $x=-b$:
\begin{subequations}
\label{eq:phi-ivp}
\begin{align}
  \Lop\phi_{\rm in} &= \mu\phi_{\rm in} \quad  -b<x<b \label{eq:phi-ivp-1}\\
\phi_{\rm in}(-b) &= 1,\quad \phi_{\rm in}'(-b) = \gs + \alpha\ .\label{eq:phi-ivp-2}
\end{align}
\end{subequations}
Using \eqref{eq:Lop-interior-expanded} we see that   $\phi_{\rm in}(x;0,0,0)=U_\star(x)$ and that the mapping $(\mu,\eps,\alpha)\mapsto \phi_{\rm in}(x;\mu,\eps,\alpha)$ is real analytic in a neighborhood of $(0,0,0)$.

Summarizing the previous discussion, we have the following characterization of solutions of the eigenpairs $(\mu,\phi)$ for $\mu$ near zero:
\begin{prop}
The pair $(\phi(x),\mu)$, with $\mu\in\R$ small and $\phi$ defined by \eqref{eq:phi-cases} for $|x|>b$ and by \eqref{eq:phi-ivp} for $|x|\le b$, is a solution of the  $X=H^1(\mathbb R)$ eigenvalue problem \eqref{eq:X-evp} for $\Lop$ if and only if $[\phi]_{x=\pm b}=0$ and $[\phi']_{x=\pm b}=0$. 

Equivalently, $(\mu, c_-, c_+, \alpha)$ is such that the following matching conditions at $x=\pm b$ hold:
\begin{subequations}\label{eq:continuity}
\begin{align}
  c_-\,f_-\!\bigl(-b-y_L(\eps);\,i\la;\,\la_0\bigr)  &= \phi_{\rm in}(-b) \;=\; 1,                  \label{eq:continuity-Lval}\\
  c_-\,f_-'\!\bigl(-b-y_L(\eps);\,i\la;\,\la_0\bigr) &= \phi_{\rm in}'(-b) \;=\; \gs + \alpha,       \label{eq:continuity-Lder}\\
  c_+\,f_+\!\bigl(b-y_R(\eps);\,i\la;\,\la_0\bigr)   &= \phi_{\rm in}(b),                            \label{eq:continuity-Rval}\\
  c_+\,f_+'\!\bigl(b-y_R(\eps);\,i\la;\,\la_0\bigr)  &= \phi_{\rm in}'(b).                           \label{eq:continuity-Rder}
\end{align}
\end{subequations}
In \eqref{eq:continuity}, $\lambda=\lambda(\mu,\alpha)$, $\lambda_0=\lambda_0(\eps)$, and $\phi_{\rm in}=\phi_{\rm in}(\alpha,\mu,\eps)$.
\end{prop}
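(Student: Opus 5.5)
This characterization is essentially a gluing statement for a piecewise-defined ODE solution, and I would prove both directions directly.

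\emph{Sufficiency.} Assume the four matching conditions \eqref{eq:continuity} hold. On each of the three regions $x<-b$, $(-b,b)$ and $x>b$, the function $\phi$ solves $\Lop\phi=\mu\phi$.
\begin{itemize}
\item On the exterior regions this holds because $f_\pm(\cdot-y;\,i\la;\,\la_0)$ solve $H_{\rm PT}(\la_0,2,y)f=-\la^2 f$ (Lemma~\ref{lem:Jost}), and $-\la^2=\mu-\la_0^2$ by \eqref{eq:la-expand}.
\item On the interior interval it holds by construction \eqref{eq:phi-ivp}.
\end{itemize}
Continuity of $\phi$ and $\phi'$ at $x=\pm b$ shows that $\phi\in C^1(\R)$ is $H^2_{\rm loc}$. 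It is in fact a global weak solution, since the coefficient $V-3\psi_\eps^2$ is in $L^1_{\rm loc}$ and the equation holds in the distributional sense across the junctions (there are no jump terms).

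For $\mu$ small we have $\la>0$. The Jost asymptotics $f_+\sim e^{-\la\xi}$ as $\xi\to+\infty$, and $f_-\sim e^{\la\xi}$ as $\xi\to-\infty$ via \eqref{eq:f_pm}, then give exponential decay of $\phi$ and $\phi'$. Hence $\phi\in H^1(\R)$ and $\phi\not\equiv0$ (since $\phi_{\rm in}(-b)=1$), so $(\phi,\mu)$ is an eigenpair.

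\emph{Necessity.} Let $\Lop\phi=\mu\phi$ with $\phi\in H^1(\R)$ nontrivial and $\mu$ near zero. Standard ODE regularity gives $\phi\in C^1$, so $\phi$ and $\phi'$ are continuous at $\pm b$. On $x>b$ the solution space of the $\ell=2$ P\"oschl--Teller equation is two-dimensional and is spanned by $f_+$ and a second solution growing like $e^{\la x}$. Membership in $L^2$ forces $\phi=c_+f_+$, and likewise $\phi=c_-f_-$ on $x<-b$.

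Next I need $\phi(-b)\neq0$, so that I can normalize $\phi(-b)=1$ and define $\alpha:=\phi'(-b)-\gs$. This gives $\phi|_{[-b,b]}=\phi_{\rm in}(\cdot;\mu,\eps,\alpha)$ by uniqueness for the initial-value problem, and continuity then yields \eqref{eq:continuity}. To see $\phi(-b)\neq0$: if $\phi(-b)=0$, then $f_-(-b-y_L)=0$. But for $\mu$ small and $\eps$ small, the argument $-b-y_L\to-\infty$. In that regime $f_-(\xi)=e^{\la\xi}\,p(m,-t)/((m+1)(m+2))$, and with $t\to-1$ and $m\approx1$ the factor $p(m,-t)\to m^2-1+3m+3>0$, so $f_-$ is nonvanishing there and $c_-=0$. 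Then $\phi\equiv0$, a contradiction.

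\emph{Expected obstacle.} The main subtlety is the non-vanishing of the Jost solutions at the matching points. This is also what guarantees that $c_\pm$ are determined by the value equations \eqref{eq:continuity-Lval} and \eqref{eq:continuity-Rval}. It follows from the explicit formula \eqref{eq:Jost} together with $y_{L},y_R\to+\infty$ from Theorem~\ref{thm:bifurcation}. The remaining steps are routine.
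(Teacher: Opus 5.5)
Your argument is correct and is essentially the paper's own route; the paper presents this proposition as a summary of the preceding setup. Decay selects $c_\pm f_\pm$ on the exterior, the interior piece is the IVP solution, and the eigenvalue equation is equivalent to $C^1$ gluing at $\pm b$. Your proof that $\phi(-b)\neq0$, via $p(m,-t_L)\to p(1,1)=6$, is exactly the computation in Claim~(1a).

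One caution about your closing remark. The analogous nonvanishing at the right matching point does \emph{not} follow from $y_R\to+\infty$. Since $t_R\to-1$, one gets $p(m,t_R)\to p(1,-1)=0$. In fact $p(m,t_R)\approx\bigl(2\mu-3U_\star(b)^2\eps\bigr)/(4\gs^2)$, so $f_+(b-y_R)$ vanishes along $\mu\approx\tfrac32U_\star(b)^2\eps$. This is why the paper obtains $f_+(b-y_R)\neq0$ only a posteriori: from $\phi_{\rm in}(b)\neq0$ when $\mu$ is an eigenvalue (Claim~(1b)), and from $\mu(\eps)=\mathcal O(\eps^2)$ (Step~2). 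Your proof of the equivalence is unaffected, because $c_+$ remains an unknown in the matching conditions.
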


The system~\eqref{eq:continuity} consists of four non-homogeneous equations for the four unknowns $c_-,\,c_+,\,\alpha,\,\mu$. For $\eps$ positive and small, we solve this system  for a unique eigenvalue $\mu(\eps)$ near zero,  together with $c_\mp(\eps)$ and $\alpha(\eps)$. These in turn determine the eigenfunction $\phi_\eps(x)$ on all $\R$ via the piecewise representation~\eqref{eq:phi-cases},~\eqref{eq:phi_in}.

The proof of Proposition~\ref{prop:mu} is in two steps.
\begin{itemize}
  \item \textbf{Step~1.} Assume that $\mu$ is an eigenvalue for $0<\eps,|\mu|<C_1$. We prove that $D(\eps,\mu)=0$, where $(\eps,\mu)\mapsto D(\eps,\mu)$ is a scalar function defined and analytic on the open domain $\mathcal{Q}=\{(\eps,\mu):|\eps|<C_1,\,|\mu|<C_1\}$ of $(0,0)\in\R^2$.
  \item \textbf{Step~2.} We prove that for $:|\eps|<C_1$, the equation $D(\eps,\mu)=0$ has a unique solution curve $\eps\mapsto\mu(\eps)$ in $\mathcal{Q}$. Moreover, for each $\eps$ such that $0<\eps<C_1$, $\mu(\eps)$ is an eigenvalue of $\Lop$ with corresponding eigenfunction $\phi_\eps\in X$ reconstructed from  $c_\mp(\eps)$, $\alpha(\eps)$.
\end{itemize}

\bigskip
\noindent\textbf{Implementation of Step~1.} Suppose   for $0<\eps<C_1$ that $\mu$, with $|\mu|<C_1$, a sufficiently small constant, is an eigenvalue of \eqref{eq:X-evp}, equivalently \eqref{eq:evp-mu}.
%
%

\noindent\textbf{Claim (1a).}\claimlabel{1a} The continuity conditions at $x=-b$:
\[ [\phi]_{x=-b}=0\quad \textrm{ and}\quad  [\phi']_{x=-b}=0\] imply the relations:
 \begin{align}
 \label{eq:fminus-ne0}
 &f_-\!\bigl(-b-y_L(\eps);\,i\la;\,\la_0\bigr) \ne 0\\
  &c_-(\eps,\mu) \;=\; \frac{1}{f_-\!\bigl(-b-y_L(\eps);\,i\la;\,\la_0\bigr)},
 \label{eq:c-minus}\end{align}
Furthermore, $(\mu,\eps)\mapsto \alpha(\mu,\eps)$ is analytic in a neighborhood of $(0,0)$, with
\begin{equation}\label{eq:alpha}
  \alpha \;=\; \alpha(\eps,\mu) \;=\; -\frac{\eps\,\dot E + \mu}{2\gs} + \mathcal O(\eps^2 + \mu^2) \;=\; \mathcal O(\eps + |\mu|).
\end{equation}
It follows from the analyticity of 
$(\mu,\alpha,\eps)\mapsto \phi_{\rm in}(x;\mu,\alpha,\eps)$, that 
\[ (\mu,\alpha,\eps)\mapsto \phi_{\rm in}(x,\eps,\mu):=\phi_{\rm in}(x;\mu,\eps,\alpha(\mu,\eps))\] 
is analytic in an $(\mu,\eps)$ neighborhood of $(0,0)$.
\begin{proof}
We first prove the non-vanishing property \eqref{eq:fminus-ne0}. We use Lemma~\ref{lem:Jost}; first, by \eqref{eq:f_pm},  $f_-(\xi)=f_+(-\xi)$,   and  hence by \eqref{eq:Jost} and \eqref{eq:p-def}:
\begin{equation}
  f_-(-b-y_L;\,i\la;\,\la_0) \;=\; e^{\la(-b-y_L)}\,\frac{p(m,-t_L)}{(m+1)(m+2)},
\end{equation}
where   \[ p(m,t):=m^2-1+3mt+3t^2\quad {\rm and}\quad m = m(\mu,\eps) := \la(\mu,\eps)/\la_0(\eps).\] 
 By \eqref{eq:m_and_t} and ~\eqref{eq:yL-tanh}, for $\eps$ small
\begin{equation}
  t_L(\eps) := t(\xi)\big|_{\xi=-b-y_L(\eps)} 
  = -\sqrt{1 - \tfrac{\eps+\mathcal{O}(\eps^2)}{2\la_0^2(\eps)}} = -1 + \frac{1}{4\gs^2}\,\eps + \mathcal O(\eps^2) \ .
\label{eq:1ptL} \end{equation}
Further, ~\eqref{eq:la0} and ~\eqref{eq:la-expand} imply
\begin{equation}\label{eq:m-expand-b}
  m(\mu,\eps) \;=\; 1 - \frac{\mu}{2\gs^2} + \mathcal O(\eps\mu + \mu^2).
\end{equation}
Substituting into the expression for $p(m,t)$ yields
\begin{equation}\label{eq:wout-den-L}
  p(m, -t_L) \;=\; p(1,1) + \mathcal O(\eps + \mu) = 6 + \mathcal O(\eps + \mu).
\end{equation}
This proves the non-vanishing property \eqref{eq:fminus-ne0} for  $0<\eps,|\mu|<C_1$ sufficiently small. 

Now using $\phi_{\rm in}(-b) = 1$ (eqn \eqref{eq:phi-ivp-2}) and \eqref{eq:fminus-ne0}, we have that $[\phi]_{x=-b}$ (eqn \eqref{eq:continuity-Lval}) implies the expression \eqref{eq:c-minus} for $c_-(\eps,\mu)$. Next, substitute this expression into the condition $[\phi']_{x=-b}$ (eqn \eqref{eq:continuity-Lder}) and we obtain
\begin{equation}\label{eq:wout-def}
    \woutL(\eps,\mu)   := \frac{f_-'\!\bigl(-b-y_L;\,i\la(\eps,\mu);\,\la_0(\eps)\bigr)}{f_-\!\bigl(-b-y_L;\,i\la(\eps,\mu);\,\la_0(\eps)\bigr)}\ = \gs + \alpha.
\end{equation}
The relation \eqref{eq:wout-def} determines $\alpha$ as a function of $\eps$ and $\mu$.
To obtain an expansion of  $\alpha(\eps,\mu)$ observe from~\eqref{eq:Jost} that
\begin{equation}\label{eq:outerlogform}
  \frac{f_-'(\xi;i\la;\la_0)}{f_-(\xi;i\la;\la_0)}
  \;=\; \la + \la_0\big(1-t^2(\xi)\big)\,\frac{-3m + 6t(\xi)}{m^2 - 1 - 3mt(\xi) + 3t^2(\xi)}. 
\end{equation}
Evaluating \eqref{eq:outerlogform} at  $\xi = -b - y_L(\eps)$  we obtain using  ~\eqref{eq:la0}, ~\eqref{eq:la-expand},~\eqref{eq:1ptL} and~\eqref{eq:m-expand-b}:
\begin{equation}\label{eq:wout-mb}
  \woutL(\eps,\mu) \;=\; \gs - \frac{\eps\,\dot E}{2\gs} - \frac{\mu}{2\gs} + \mathcal O(\eps^2 + \mu^2).
\end{equation}
Finally, substituting \eqref{eq:wout-mb} into \eqref{eq:wout-def} yields the expansion of $\alpha(\eps,\mu)$ in \eqref{eq:alpha}. 
 This completes the proof of \claimref{1a}.
\end{proof}

We continue under the assumption that $\mu$ is an eigenvalue for $0<\eps<C_1$ and $|\mu|<C_1$, where $C_1$ is a sufficiently small constant.

\noindent\textbf{Claim (1b).}\claimlabel{1b}
The continuity conditions at $x=+b$ \[ [\phi]_{x=+b}=0\quad  \textrm{and}\quad  [\phi']_{x=+b}=0
\] imply the relations: 
\begin{subequations}
\label{eq:c+f+}
\begin{align}
\label{eq:f+ne0}
&f_+(b-y_R(\eps);i\la(\eps,\mu),\la_0(\eps))\neq 0\\
&{\ }\nonumber\\
&  c_+(\eps,\mu) \;=\; \frac{\phi_{\rm in}\big(b;\eps,\mu\big)}{f_+\!\big(b-y_R(\eps);i\la(\eps,\mu),\la_0(\eps)\big)}\ \ne\ 0.
\label{eq:c-plus}
\end{align}
\end{subequations}
Here, $\phi_{\rm in}(x;\eps,\mu,\alpha)$ is the solution of the IVP \eqref{eq:phi-ivp} and $\alpha=\alpha(\eps,\mu)$ is given by \eqref{eq:alpha}.

\noindent\textbf{Claim (1c).}\claimlabel{1c}
There is a function $(\eps,\mu)\mapsto D(\eps,\mu)$ (see \eqref{eq:Ddef} below), defined and analytic for
\[ \mathcal{Q}:\ |\eps|<C_1,\ |\mu|<C_1,\]
with $C_1>0$ chosen sufficiently small, such that  
\begin{equation}\label{eq:D-parabolic}
  D(\eps,\mu) \;=\; -\,4\gs\,\mu \;+\; 3\,\Omega_\star\,\eps^2 \;+\; \mathcal O(\eps^3 + \eps\mu + \mu^2)\quad \textrm{for}\quad (\eps,\mu)\in\mathcal Q,
\end{equation}
and  
\[ \textrm{if $\mu$ is an eigenvalue, then $D(\eps,\mu)=0$}.\]

We next present the proofs of \claimref{1b} and \claimref{1c}.

{
\begin{proof}[Proof of \claimref{1b}]
We must verify the relations \eqref{eq:c+f+}. Suppose  $0<\eps,|\mu|<C_1$ are sufficiently small, as above.  
 We begin by writing 
\begin{align}\label{eq:h-expand}
\phi_{\rm in} &:= U_\star + h,\quad -b\le x\le b
\end{align}
Since $U_\star$ satisfies $(H_V+\gs^2)U_\star = 0$ with $U_\star(-b)=1$ and $U_\star'(-b)=\gs$ (see \eqref{eq:TR-BC}), we have
\begin{align}
 h(-b)&=0\quad {\rm and}\quad h'(-b)=\alpha(\eps,\mu)\ ,\label{eq:h-bc}
 \end{align}
 since $\phi_{\rm in}(-b;\eps,\mu)=1$ and $\phi_{\rm in}'(-b;\eps,\mu)=\gs+\alpha(\eps,\mu)$.
 Substituting \eqref{eq:h-expand} into~\eqref{eq:phi-ivp-1} and using the expansion~\eqref{eq:Lop-interior-expanded} of $\Lop$, we obtain that $h=h(x;\eps,\mu)$ solves the second order ODE 
\begin{equation}\label{eq:delta-eq}
  \big(H_V + \gs^2 \big)\,h \;=\; \bigl[\eps(3U_\star^2 + \dot E + O(\eps)\ ) + \mu\bigr]U_\star  + \mathcal{O}(|\eps| + |\mu|) h
\end{equation}
with initial conditions \eqref{eq:h-bc}.  We note that the mapping $(\eps,\mu)\mapsto h(x;\eps,\mu)$ is real-analytic in a neighborhood of $(0,0)$.

We claim that 
\begin{equation}\label{eq:hbound}
\|h(\cdot,\eps,\mu)\|_{C^1[-b,b]} \lesssim  \eps + |\mu|,
\end{equation} and hence, via \eqref{eq:h-expand}, that 
\begin{equation}\label{eq:phi-in-pm-b}
  \phi_{\rm in}(b;\eps,\mu) = \Ub + {\mathcal O}(\eps + |\mu|) \ne0.
\end{equation}
To prove \eqref{eq:hbound}, let $H=(h,h')^\top$. Note that $H(-b)=(0,\alpha(\eps,\mu))^\top$.  $H(x)$ satisfies a system of the form $H'=A_\star(x)H + \mathcal{O}_{C^0}(\eps+|\mu|)H +  G_\eps$, where $\|H(-b)\|\lesssim \eps+|\mu|$ and  $\|G_\eps\|_{C^0[-b,b]}\lesssim \eps+|\mu|$. 
 Therefore, $H(x)= H(-b)+\int_{-b}^x U(x,y)[G_\eps(y)+ \mathcal{O}_{C^0}(\eps+|\mu|)H(y) dy$, where  $U'(x,y)=A_\star(x)U(x,y)$ and $U(y,y)={\rm Id}$. Hence, $\|H\|_{C^0[-b,b]}\lesssim \eps+|\mu| + \|G_\eps\|_{C^0[-b,b]}+ (\eps+|\mu|)\|H\|_{C^0[-b,b]}$, implying \eqref{eq:hbound}.

Since $\Ub\neq 0$, by  further constraining $C_1$ we ensure, via \eqref{eq:phi-in-pm-b},  that  $\phi_{\rm in}(b;\eps,\mu)\neq 0$ for $0<\eps,|\mu|<C_1$. Then, since $\mu$ is assumed to be an eigenvalue, the continuity condition  $[\phi]_{x=b}=0$ in~\eqref{eq:continuity-Rval} holds and gives:  \[ c_+(\eps,\mu)\,f_+(b-y_R(\eps);i\la(\eps,\mu),\la_0(\eps)) = \phi_{\rm in}(b;\eps,\mu)\neq 0.\]
 This proves the relations \eqref{eq:c+f+}, and in particular
 \begin{equation}\label{f+phin-ne0}
  f_+(b-y_R(\eps);i\la(\eps,\mu),\la_0(\eps))\ne0\quad \textrm{and}\quad 
  \phi_{\rm in}(b;\eps,\mu)\neq 0.
 \end{equation}
The proof of \claimref{1b} is complete.\end{proof}
}

\begin{proof}[Proof of \claimref{1c}]

{
Substituting the expression for  $c_+(\eps,\mu)$ in ~\eqref{eq:c-plus}. into~\eqref{eq:continuity-Rder} (continuity of the derivatives at $x=b$) and simplifying yields
\begin{equation}\label{eq:log-match-at-b}
  \frac{\phi_{\rm in}'(b;\eps,\mu)}{\phi_{\rm in}(b;\eps,\mu)} \;=\; \frac{f_+'\!\bigl(b-y_R;\,i\la(\eps,\mu);\,\la_0(\eps)\bigr)}{f_+\!\bigl(b-y_R;\,i\la(\eps,\mu);\,\la_0(\eps)\bigr)},
\end{equation}
where we recall (\claimref{1a}) that 
 $\phi_{\rm in}(x,\eps,\mu):=\phi_{\rm in}(x;\mu,\eps,\alpha(\mu,\eps))$. 
We denote the two ratios in \eqref{eq:log-match-at-b} by
\begin{equation}\label{eq:wR-defs}
  \win^R(\eps,\mu) \;:=\; \frac{\phi_{\rm in}'(b;\eps,\mu)}{\phi_{\rm in}(b;\eps,\mu)}\quad\textrm{and}
  \quad
  \woutR(\eps,\mu) \;:=\; \frac{f_+'\!\bigl(b-y_R(\eps);\,i\la(\eps,\mu);\,\la_0(\eps)\bigr)}{f_+\!\bigl(b-y_R(\eps);\,i\la(\eps,\mu);\,\la_0(\eps)\bigr)}\ .
\end{equation}
Using \eqref{f+phin-ne0}, we see that  $ \win^R(\eps,\mu) $ and $\woutR(\eps,\mu)$ are well-defined analytic functions in a neighborhood of $(0,0)$.
The continuity of $\phi$ and $\phi'$ at $x=b$  now boils down to the single equation:
\begin{equation}
\win^R(\eps,\mu) = \woutR(\eps,\mu).
    \label{eq:winR_eq_woutR}
\end{equation}

Let us now expand the left and right hand sides of \eqref{eq:winR_eq_woutR} for $\eps$ and $|\mu|$ small.}

\noindent\underline{Expansion of  $\win^R(\eps,\mu)$:}\  Writing $\phi_{\rm in} = U_\star + h$, substituting into the first expression in \eqref{eq:wR-defs}, and using $U_\star'(b) = \gs\Ub$ from~\eqref{eq:TR-BC}, we obtain
\begin{equation}\label{eq:winR-decompose}
  \win^R(\eps,\mu) \;=\; \frac{\gs\Ub + h'(b)}{\Ub + h(b)}
  \;=\; \gs + \frac{h'(b) - \gs\,h(b)}{\Ub} + \mathcal O(( \eps+|\mu|)^2 ),
\end{equation}
where we have used the bound 
$\|h\|_{C^1[-b,b])}\lesssim \eps+|\mu|$ from \eqref{eq:hbound}. Note that $\win^R(0,0)=\gs$.

We obtain an expression for $h'(b,\eps,\mu) - \gs\,h(b,\eps,\mu)$ using the following identity, proved below:

\begin{lem}\label{prop:boundary-integral}
  Let $f$ solve $(H_V + \gs^2) f = g$ on $[-b,b]$ with $f(-b) = 0$, $f'(-b) = \alpha$. Then
  \begin{equation}\label{eq:boundary-integral}
    f'(b) - \gs\,f(b) \;=\; \frac{1}{\Ub}\Bigl[\alpha - \int_{-b}^b g(y)\,U_\star(y)\,dy\Bigr].
  \end{equation}
\end{lem}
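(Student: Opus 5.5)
The plan is a Lagrange/Green-type identity: pair the equation for $f$ with the resonance mode $U_\star$ and integrate by parts over $[-b,b]$. Since $(H_V+\gs^2)U_\star=0$ and $H_V=-\partial_x^2+V$ is formally symmetric, we have
\begin{equation*}
\int_{-b}^b g\,U_\star\,dy=\int_{-b}^b\bigl[(H_V+\gs^2)f\bigr]U_\star\,dy-\int_{-b}^b f\,(H_V+\gs^2)U_\star\,dy=\Bigl[\,f\,U_\star'-f'\,U_\star\,\Bigr]_{-b}^{b}.
\end{equation*}
The potential terms $Vf U_\star$ and $\gs^2 fU_\star$ cancel identically. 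Only the second-derivative terms survive, and they produce the Wronskian-type boundary term $W[f,U_\star]$ evaluated at the endpoints.

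We then evaluate the boundary terms using the data we have. At $x=-b$, the conditions $f(-b)=0$, $f'(-b)=\alpha$ and the normalization $U_\star(-b)=1$ from \eqref{eq:resonance-normalization} give $f U_\star'-f'U_\star\big|_{-b}=-\alpha$. At $x=b$, the TR boundary condition $U_\star'(b)=\gs\Ub$ from \eqref{eq:TR-BC} gives $f U_\star'-f'U_\star\big|_{b}=\Ub\bigl(\gs f(b)-f'(b)\bigr)$. Subtracting the two endpoint values yields
\begin{equation*}
\int_{-b}^b gU_\star\,dy=\Ub\bigl(\gs f(b)-f'(b)\bigr)+\alpha .
\end{equation*}
Since $\Ub\neq0$, as noted after \eqref{eq:TR-BC}, we can divide by $\Ub$, and this is exactly \eqref{eq:boundary-integral}.

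The only technical point is justifying the integration by parts when $V$ is merely in $L^1$. In that case $f,U_\star\in W^{2,1}$ on $[-b,b]$, or more precisely $f',U_\star'$ are absolutely continuous. The Wronskian $fU_\star'-f'U_\star$ is then absolutely continuous, and its derivative is $-gU_\star$ almost everywhere, so the fundamental theorem of calculus applies. Apart from this, the argument is a one-line computation and presents no real obstacle.
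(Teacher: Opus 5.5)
Your proof is correct and is essentially the paper's argument. The paper differentiates the Wronskian $W[U_\star,f]=U_\star f'-U_\star' f$ to get $W'=-U_\star g$ and integrates over $[-b,b]$, which is exactly your Green's-identity computation (your boundary term is $-W[U_\star,f]$), with the same endpoint evaluations via $U_\star(-b)=1$ and $U_\star'(b)=\gs U_\star(b)$; your absolute-continuity remark for $V\in L^1$ is a small justification the paper leaves implicit.
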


\begin{proof}[Proof of Lemma \ref{prop:boundary-integral}] Let
\begin{equation}
  W(x) := W[U_\star,f](x) = U_\star(x)f'(x) - U_\star'(x)f(x).
\end{equation}
Differentiate, using $(H_V+\gs^2)U_\star=0$ from~\eqref{eq:resonance-eq} and $(H_V+\gs^2)f=g$ by hypothesis:
\begin{equation}
  W'(x)
  \;=\; U_\star(x)f''(x) - U_\star''(x)f(x)
  \;=\; -\,U_\star(x)g(x).
\end{equation}
Integrating from $-b$ to $b$,
\begin{equation}
  W(b) - W(-b) \;=\; -\int_{-b}^b U_\star(x)g(x)\,dx.
\end{equation}
Evaluate $W$ at the endpoints. At $x=b$, the resonance boundary condition $U_\star'(b)=\gs U_\star(b)$ from~\eqref{eq:TR-BC} gives
\begin{equation}
  W(b) \;=\; U_\star(b)\bigl(f'(b)-\gs f(b)\bigr).
\end{equation}
At $x=-b$, the hypotheses $f(-b)=0$ and $f'(-b)=\alpha$ give
\begin{equation}
  W(-b) \;=\; \alpha\,U_\star(-b) = \alpha,
\end{equation}
due to the normalization $U_\star(-b) =1$.
Substituting 
\begin{equation}
  U_\star(b)\bigl(f'(b)-\gs f(b)\bigr) - \alpha \;=\; -\int_{-b}^b U_\star\,g\,dx,
\end{equation}
which implies~\eqref{eq:boundary-integral}.

Finally, we remark that for the scattering resonance (SR) case, the same identity holds with $-b$ replaced by $0$; the zero boundary condition at $x=0$ kills the $\alpha\,U_\star(0)$ term. 
\end{proof}

\medskip

Now apply the identity  \eqref{eq:boundary-integral} with $f(x)=h(x,\eps,\mu)$, the solution of ~\eqref{eq:delta-eq} with 
$g = [\eps(3U_\star^2 + \dot E) + \mu]U_\star + \mathcal O(\eps + |\mu|)h$, and the bound \eqref{eq:hbound}. We obtain


%
\begin{equation}\label{eq:hp-gamh}
  h'(b) - \gs\,h(b) \;=\; \frac{1}{\Ub}\Bigl[\alpha - \eps\!\int_{-b}^b\!(3U_\star^4 + \dot E\,U_\star^2)\,dx - \mu\!\int_{-b}^b U_\star^2\,dx\Bigr] + \mathcal O(\eps^2 + \mu^2 + \eps\mu).
\end{equation}
Substituting \eqref{eq:hp-gamh} into~\eqref{eq:winR-decompose} and inserting $\alpha(\eps,\mu)$ from~\eqref{eq:alpha} yields
\begin{equation}\label{eq:win-expand}
\begin{aligned}
  \win^R(\eps,\mu) \;=\; \gs
  &- \frac{\eps}{\Ub^2}\Bigl(3\!\int_{-b}^b U_\star^4\,dx + \dot E\Bigl[\int_{-b}^b U_\star^2\,dx + \frac{1}{2\gs}\Bigr]\Bigr) \\
  &- \frac{\mu}{\Ub^2}\Bigl(\int_{-b}^b U_\star^2\,dx + \frac{1}{2\gs}\Bigr)
  + \mathcal O(\eps^2 + \mu^2).
\end{aligned}
\end{equation}

\noindent

\underline{Expansion of  $\woutR(\eps,\mu)$:} Recall that $\woutR(\eps,\mu)$ is the logarithmic-derivative expression in \eqref{eq:wR-defs}. It can be evaluated using the following identity, using the expression for $f_+$ in \eqref{eq:Jost}:
\begin{equation}\label{eq:fplus-logderiv-b}
  \frac{f_+'(\xi;i\la;\la_0)}{f_+(\xi;i\la;\la_0)}
  \;=\; -\la + \la_0(1-t^2(\xi))\,\frac{3m + 6t(\xi)}{p\big(m,t(\xi)\big)} \ .
\end{equation}
Here, $p(m,t(\xi)) = m^2 - 1 + 3mt(\xi) + 3t^2(\xi)$. Note, by \eqref{eq:wR-defs}, that $\woutR(\eps,\mu)$ is equal to the left-hand side of \eqref{eq:fplus-logderiv-b}  evaluated at $\xi=b-y_R(\eps)$, $\la=\la(\eps,\mu)$, $\la_0=\la_0(\eps)$, and $m=m(\mu,\eps)=\la(\eps,\mu)/\la_0(\eps)$.
Therefore, the right hand side of \eqref{eq:fplus-logderiv-b} yields an explicit expression for $\woutR(\eps,\mu)$.


Next, we claim that 
\begin{equation}\label{eq:fplus-p-equiv}
  p\big(m(\eps,\mu),t_R(\eps)\big) \neq 0.
\end{equation}

By Lemma~\ref{lem:Jost} (with $\ell=2$):
\begin{equation}\label{eq:f_+}
  f_+\big(b-y_R(\eps);\,i\la(\eps,\mu);\,\la_0(\eps)\big) \;=\; e^{-\la(\eps,\mu)\big(b-y_R(\eps)\big)}\,\frac{p\big(m(\eps,\mu),t_R(\eps)\big)}{\big(m(\eps,\mu)+1\big)\big(m(\eps,\mu)+2\big)}.
\end{equation}
The exponential factor is nonzero for $\eps$ positive and small, and $(m+1)(m+2) \to 6$ as $(\eps,\mu)\to(0,0)$. It follows that if $\eps$ and $|\mu|$ are small and $\mu=\mu(\eps)$ is an eigenvalue \eqref{eq:f_+} then since $f_+\big(b-y_R(\eps);\,i\la(\eps,\mu);\,\la_0(\eps)\big)  \neq 0$ we have \eqref{eq:fplus-p-equiv}.

Therefore, \eqref{eq:fplus-logderiv-b} evaluated at $\xi=b-y_R(\eps)$, etc. implies
\begin{equation}\label{eq:p-woutR-identity}  p\big(m(\eps,\mu),t_R(\eps)\big)\,\bigl(\woutR(\eps,\mu) + \la(\eps,\mu)\bigr) \;=\; \la_0(\eps)\big(1-t_R^2(\eps)\big)\,\big(3m(\eps,\mu) + 6t_R(\eps)\big) \ .
\end{equation}
{
Therefore, for $\eps>0$ and $\mu$ small:
\[\textrm{$\mu$ is an eigenvalue}\quad\implies\quad\quad  \wout^R(\eps,\mu) = \win^R(\eps,\mu)\quad\implies\quad D(\eps,\mu)=0,\]
where
\begin{equation}\label{eq:Ddef}
  D(\eps,\mu):=  -4\gs^2\ \Bigl[\, p\big(m(\eps,\mu),t_R(\eps)\big)\,\bigl[\win^R(\eps,\mu) + \la(\eps,\mu) \bigr] \ -\ \la_0(\eps)\big(1-t_R^2(\eps)\big)\,\big(3m(\eps,\mu) + 6t_R(\eps)\big) \,\Bigr].
\end{equation}
The non-zero factor, $-4\gs^2$, is included so that $D$ has the expansion~\eqref{eq:D-parabolic} below. Alternatively, we have
\begin{equation}\label{eq:D-factorization}
  D(\eps,\mu) \;=\; -4\gs^2\,p\big(m(\eps,\mu),t_R(\eps)\big)\,\bigl(\win^R(\eps,\mu) - \woutR(\eps,\mu)\bigr).
\end{equation}
 }
By the previous discussion, $D(\eps,\mu)$ is clearly analytic in a neighborhood of $(0,0)$.
{
To complete the proof of \claimref{1c}, it remains to prove the expansion of $D(\eps,\mu)$, for $\eps$ and $\mu$ small, displayed in \eqref{eq:D-parabolic}. The derivation  of this expansion is carried out in Proposition~\ref{prop:D-expand} of Appendix~\ref{sec:D-expand}. }
\end{proof}

\bigskip
\noindent\textbf{Implementation of Step~2.}


We begin by showing that the equation $D(\eps,\mu)=0$ has a unique solution $\eps\mapsto\mu(\eps)$ defined in a two-sided neighborhood $(-C_3,C_3)$ of $\eps=0$. We then show that if  $\eps\in(0,\eps_0)$, where $\eps_0$ is sufficiently small, then  $\mu(\eps)$ is an eigenvalue of $\Lop$ with a corresponding eigenfunction $\phi_\eps\in X$ constructed explicitly via the expressions for $\phi_\eps$, restricted to $|x|>b$ and to $|x|\le b$,  in Step 1. 

By \claimref{1c}, $D$ is analytic on $\mathcal{Q}$ with $D(0,0)=0$ and $\partial_\mu D(0,0)=-4\gs\neq0$ (see eqn \eqref{eq:D-parabolic}). By the implicit function theorem, there exists $C_3>0$ and a unique analytic map $\mu:(-C_3,C_3)\to\R$ with $\mu(0)=0$ and $D(\eps,\mu(\eps))=0$, the unique zero of $D$ in $\mathcal{Q}\cap\{|\eps|<C_3\}$. By  \eqref{eq:D-parabolic}, $\partial_\eps D(0,0)=0$, and hence 
\begin{equation}
  \mu'(0) \;=\; -\,\frac{\partial_\eps D(0,0)}{\partial_\mu D(0,0)} \;=\; 0;
\end{equation}
so $\mu(\eps)=\mathcal O(\eps^2)$. Evaluating \eqref{eq:D-parabolic} at $\mu=\mu(\eps)$ then gives $0 = -4\gs\,\mu(\eps) + 3\Omega_\star\eps^2 + \mathcal O(\eps^3)$, and therefore
\begin{equation}\label{eq:mu-branch}
  \mu(\eps) \;=\; \frac{3\Omega_\star}{4\gs}\eps^2 + \mathcal{O}(\eps^3).
\end{equation}


It remains to construct an eigenfunction with corresponding eigenvalue  $\mu(\eps)$. Since any eigenvalue in a neighborhood of $\mu=0$ is a solution of $D(\mu,\eps)=0$, by the local uniqueness of solutions of $D(\mu,\eps)=0$, $\mu(\eps)$ is the only eigenvalue in a neighborhood of zero.

To construct  $\phi_\eps\in H^1(\R)$ such that  $L_+^\eps\phi_\eps=\mu(\eps)\phi_\eps$, we define $\phi_\eps$ via the expressions  \eqref{eq:phi-cases} and \eqref{eq:phi_in}, which satisfies the differential equation for $x\in\R\setminus\{-b,b\}$. It suffices, to check that with $\mu=\mu(\eps)$ satisfying $D(\mu(\eps),\eps)=0$, the jump conditions $[\phi_\eps]_{x=\pm b}=0$ and $[\phi'_\eps]_{x=\pm b}=0$ hold. This requires reversing the steps in \claimref{1a} and \claimref{1b} of Step 1. Along the way, it will be necessary to establish the following claim:

{
{\it Let $\mu=\mu(\eps)$, having the expansion \eqref{eq:mu-branch} denote the unique solution of $D(\mu,\eps)=0$ in a neighborhood of $(0,0)$. Then, 
\begin{subequations}
\begin{align}
   &f_-\big(-b-y_L(\eps);i\la(\eps,\mu(\eps)\big),\la_0(\eps))\ne0 \label{eq:fminus_at_mu}\\ 
  &\phi_{\rm in}\big(b;\eps,\mu(\eps)\big)\neq 0\label{eq:phi-in_at_mu}\\
  &f_+\big(b-y_R(\eps);i\la(\eps,\mu(\eps)\big),\la_0(\eps))\ne0
\label{eq:fplus_at_mu}\end{align}
\end{subequations}
}
}

{We first verify  \eqref{eq:fplus_at_mu} and then use \eqref{eq:c-plus} to define $c_+(\mu(\eps),\eps)$. By \eqref{eq:Jost}, to prove \eqref{eq:fplus_at_mu} it suffices to show that 
$p\bigl(m(\eps,\mu(\eps)),\,t_R(\eps)\bigr)\ne0$. We have, for small $\eps$ using \eqref{eq:mu-branch} and \eqref{eq:wout-den-R} that 
\begin{equation}
  p\bigl(m(\eps,\mu(\eps)),\,t_R(\eps)\bigr) \;=\; -\,\frac{3\Ub^2}{4\gs^2}\,\eps\,\bigl(1+\mathcal O(\eps)\bigr) \ne 0 . 
\end{equation}

We now verify \eqref{eq:fminus_at_mu}. Here, by \eqref{eq:Jost}, we require that $ p(m(\mu(\eps),\eps), -t_L(\eps))\ne0$. This is a consequence of the \eqref{eq:wout-den-L} of Step 1, which holds for all $(\mu,\eps)$ near $(0,0)$. Thus, we define $c_-(\mu(\eps),\eps)$ via \eqref{eq:c-minus}.

Finally, we observe that  \eqref{eq:phi-in_at_mu} holds by \eqref{eq:phi-in-pm-b}.

For $\eps>0$, we  define
\begin{equation}\label{eq:pertzeromode}
  \phi(x) \;:=\; \begin{cases}
    c_-(\eps,\mu(\eps))\,f_-\!\bigl(x-y_L(\eps);\,i\la(\mu(\eps),\eps);\,\la_0(\eps)\bigr) & x<-b,\\
    \phi_{\rm in}(x;\eps,\mu(\eps)) & -b\le x\le b,\\
    c_+(\eps,\mu(\eps))\ \,f_+\!\bigl(x-y_R(\eps);\,i\la(\mu(\eps),\eps);\,\la_0(\eps)\bigr) & x>b\ ,
  \end{cases}
\end{equation}
which satisfies $L_+^\eps\phi=\mu(\eps)\phi$.

Further, with $c_+(\mu(\eps),\eps)$ now given by \eqref{eq:c-plus}, we see that $[\phi]_{x=b}=0$,  \eqref{eq:continuity-Rval},  is satisfied. Further, since $D(\mu(\eps),\eps)=0$, we have $[\phi']_{x=b}=0$.

Thus, for $\eps>0$ and sufficiently small, $(\mu(\eps),\phi)$ is an eigenpair of $L_+^\eps$.
 This completes the implementation of Step 2 and hence the proof of Proposition \ref{prop:mu} for the case of transmission resonance-induced nonlinear bound states.


}

\begin{rem}[Adapting the proof to scattering resonance-induced nonlinear bound states]\label{rem:SR-stability}
The scattering-resonance (SR) branch on the half-line is treated in a manner analogous to transmission-resonance (TR) case. Recall that in the SR case, the underlying Hilbert space is $H^1_0(\R_+)$. (Equivalently, we may work in  $H^1_{\rm odd}(\R)$.) The following is a list of modifications required in the proof.
\begin{enumerate}[leftmargin=2em,label=(\roman*)]
  \item \emph{Domain.} $\R_+=[0,b]\cup (b,\infty)$, where $\sup\{x\in\supp(V)\}\le b$. Hence, there are no contributions from $(-\infty,0)$, $y_L$, and $N_{\rm out}^L$.
  \item \emph{Boundary data.} The Dirichlet condition $U_\star(0) = 0$ from~\eqref{eq:SR-BC} replaces the Robin condition $U_\star'(-b) = \gs U_\star(-b)$ of TR, normalized instead to $U_\star'(0) = 1$. The boundary conditions at $x=0$,  $h(0)=0$ and $h'(0)=\alpha(\mu,\eps)$ in~\eqref{eq:delta-eq},  then give $\phi(0) = 0$ and $\phi'(0) = 1 + \alpha(\mu, \eps)$.
  \item $\Omega_\star$ and $ \mathcal B_\star$: Setting $U_\star(0) = 0$ in the TR formulas~\eqref{eq:Omega} yields $\Omega_\star=\Omega_\star^{\rm SR}$ and $ \mathcal B_\star=\mathcal B_\star^{\rm SR}$ from~\eqref{eq:Bstar}, and  the expressions for the translation eigenvalue, $\mu(\eps)$, and $\Nslope$ reduce to \eqref{eq:mu-final} and ~\eqref{eq:nu}, respectively.
  \item The $L^2(\R_+)$-threshold for the SR bifurcation is $\mathcal N_\star = 4\gs$, as in the TR case; for $\eps>0$ and small, the dominant part of the ``mass'' ($L^2$-norm) of $\psi_\eps$ is concentrated in a single soliton centered at $y_R(\eps)\gg b$.
\end{enumerate}
The conclusions of Theorem~\ref{thm:intro-stability-sharp} and Corollary~\ref{cor:small-gamma-stability} hold verbatim with the SR functionals.
\end{rem}

This completes the proof of Proposition~\ref{prop:mu}.\qed

\bigskip

\subsection{Point spectrum of $\Lop$; proof of Proposition~\ref{prop:Lp-count}}\label{ssec:proof-Lp-count}

{
By Proposition~\ref{prop:mu}, for $\eps$ small, $\Lop$ has exactly one simple eigenvalue $\mu(\eps)$ near zero with corresponding eigenspace spanned by $\phi(x)$, displayed in \eqref{eq:pertzeromode}. Proposition~\ref{prop:Lp-count} follows if we can prove that $\mu(\eps)$ is the second eigenvalue of $\Lop$.

We begin by noting that by an analogous construction to that in the proof of Proposition~\ref{prop:mu}, that for $\eps>0$ and small, $\Lop$ has a strictly negative ground state eigenvalue 
$\mu_g(\eps) = -3\gs^2 + o(\eps)$,  which is order one distant from $\mu(\eps)$. Here, $-3\gs^2$ is the ground state energy of the linearized operator $L_+^{\rm sol}$, associated with the free soliton; we omit the details, which run parallel to Section~\ref{ssec:proof-mu}. Further, note that the one-dimensional eigenspace corresponding to $\mu_g(\eps)$ is spanned by $\phi_g(x)$, which can be taken to be strictly positive on $\R$.

To prove Proposition~\ref{prop:Lp-count}, it suffices to prove that $\mu(\eps)$ is the second eigenvalue of $\Lop$. By Sturm oscillation theory, this follows if we can prove, for all $\eps>0$ small, that $\phi(x)$ vanishes at exactly one point in $\R$.
 For this, we split $\R$ into the three regions $\R= (-\infty,-b)\cup [-b,b]\cup (b,\infty)$ and count zeros of $\phi(x)$ in each subregion. 

Note first that for  $x\in (-b,b)$, the interior interval,  $\phi(x) = \phi_{\rm in}(x) = U_\star(x) + \mathcal O_{C^0}(\eps)$, which is strictly positive for all $\eps>0$ and small because $U_\star$ is positive on $(-b,b)$.

We turn now to the exterior intervals $(-\infty,-b]$ and $[b,\infty)$, where $\phi(x)$ is given by the first and third of the Jost solution expressions in  \eqref{eq:pertzeromode}, respectively:
By Lemma \ref{lem:Jost} with $\ell=2$, the zeros of $\phi(x)$ on these intervals are determined by the zeros of:
\begin{subequations}
\begin{align}
 &p\big(m(\mu,\eps),-t(\xi)\big)\Big|_{\xi = x-y_L}\quad \textrm{for $x\in(-\infty,-b)$, and}\label{eq:p_left}\\
    &p\big(m(\mu,\eps),t(\xi)\big)\Big|_{\xi = x-y_R}\quad \textrm{for $x\in(b,\infty)$} \label{eq:p_right}
\end{align}
\end{subequations}
where (from \eqref{eq:p-def}):
\begin{equation}
  p\big(m(\mu,\eps),t(\xi)\big) \;=\; m^2(\mu,\eps) - 1 + 3m(\mu,\eps)t(\xi) + 3t^2(\xi),\quad t(\xi) := \tanh(\la_0\xi),
\end{equation}
with $\mu=\mu(\eps)$ in these expressions. To obtain \eqref{eq:p_left}
we used that $f_-(\xi) = f_+(-\xi)$.

In order to complete the proof of Proposition~\ref{prop:Lp-count}, we establish the following two assertions:

\begin{subequations}
\begin{align}
  &\textrm{$p\big(m(\mu,\eps),-t(\xi)\big)\Big|_{\xi = x-y_L}$ and therefore $\phi(x)$ has no zeros in the interval $-\infty<x<-b$} \label{eq:phi_left}\\ 
&\textrm{$p\big(m(\mu,\eps),t(\xi)\big)\Big|_{\xi = x-y_R}$ and therefore $\phi(x)$ has exactly one  zero in the interval $b<x<\infty$.}
\label{eq:phi_right}
\end{align}
\end{subequations}

Note that since $\mu(\eps)=\mathcal{O}(\eps^2)$, we have by \eqref{eq:m-expand-b} that $m\big(\mu(\eps),\eps\big)= 1 + \mathcal O(\eps^2)$.

Since $x\mapsto t(\xi(x))$ is monotone in $x$, counting zeros of $\phi$ in a sub-interval of an exterior reduces to counting roots of the relevant polynomial in the corresponding $t$-interval. 

\noindent\underline{\it Proof of \eqref{eq:phi_left}:} For $-\infty<x<-b$, set
$  t_x \;:=\; \tanh\!\bigl(\la_0(\eps)(x - y_L(\eps))\bigr)$,
which is monotone increasing and sweeps out the interval 
$ -1=t_{b=-\infty}<t_x< t_{x=-b}= t_L(\eps)$, 
 where by~\eqref{eq:1ptL},  $t_L(\eps) = -1 + \tfrac{\eps}{4\gs^2} + \mathcal O(\eps^2)$.  The zeros of $\phi(x)$ in the interval $-\infty<x<-b$, correspond to solutions $t_x$ of $p(m,-t_x) \;=\; 3t_x^2 - 3m\,t_x + (m^2-1)=0$
 with $t_x\in \big(-1, t_L(\eps)\big)$. These solutions $t_x^\pm$ are given by
 \begin{equation}
  t_x^\pm \;=\; \frac{3m \pm \sqrt{12-3m^2}}{6}.
\end{equation}
Using that $m = 1+\mathcal O(\eps^2)$ we have 
\begin{equation}
  t_x^- \;=\; \mathcal O(\eps^2), \qquad t_x^+ \;=\; 1 + \mathcal O(\eps^2),
\end{equation}
neither of which is in the interval 
$(-1, t_L(\eps)] = \left(-1,-1+\mathcal O(\eps)\right]$. Hence, $\phi(x)$ is non-zero for $-\infty<x<-b$. This completes the proof of \eqref{eq:phi_left}.

\noindent\underline{\it Proof of \eqref{eq:phi_right}:} For $b<x<\infty$, set 
$t_x \;:=\; \tanh\!\bigl(\la_0(\eps)(x - y_R(\eps))\bigr)$, which is monotonically increasing and sweeps out the interval $t_R(\eps)=t_{x=b}<t_x<t_{x=\infty}=1$. Since $t_R = -1 + \frac{\Ub^2\eps}{4\gs^2} + \mathcal O(\eps^2)$ by~\eqref{eq:1ptR}, 
\[\textrm{$x\mapsto t_x$ sweeps out an interval $\big[-1+\mathcal O(\eps),1\big)$}. \]
The zeros of $\phi(x)$ in the interval $b<x<\infty$, correspond to solutions $t_x$ of $p(m,t_x) \;=\; 3t_x^2 + 3m\,t_x + (m^2-1)=0$
 with $t_x\in \big[t_R(\eps), 1\big)= \big[-1+\mathcal O(\eps),1\big)$. These solutions $t_x^\pm$ are given by
 \begin{equation}
  t_x^\pm \;=\; \frac{-3m \pm \sqrt{12 - 3m^2}}{6}\ .
\end{equation}
Again we use that $m = 1 + \mathcal O(\eps^2)$ (see ~\eqref{eq:m-expand-b}) to obtain 
\begin{equation}
  t_x^- \;=\; -1 + \mathcal O(\eps^2)\qquad {\rm and} \qquad t_x^+ \;=\; \mathcal O(\eps^2).
\end{equation}
For $\eps>0$ and small, $t_x^-= -1 + \mathcal O(\eps^2)<t_R(\eps)=-1+\mathcal O(\eps)$. On the other hand, $t_x^+\in \big[t_R, 1\big)$. Hence, there exists $x_0\in (b,\infty)$ such that $\phi(x_0)=0$. This value of $x_0$ is unique by monotonicity of $t_x$. This completes the proof of \eqref{eq:phi_right} and therewith Proposition~\ref{prop:Lp-count} for the 
TR case.  The proof is easily adapted to cover the SR case.

}

\subsection{Slope of $\mathcal{N}$ vs $E$ near the bifurcation point; proof of Proposition~\ref{prop:slope}}\label{ssec:proof-slope}

Split the mass as $\mathcal N(\eps) = N_{\rm in} + N_{\rm out}^R + N_{\rm out}^L$.

By Theorem~\ref{thm:bifurcation}, $\psi_\eps = \sqrt\eps[U_\star + \mathcal O(\eps)]$ on $[-b,b]$, so
\begin{equation}\label{eq:Nin}
  N_{\rm in} \;=\; \int_{-b}^b \psi_\eps(x)^2\,dx
  \;=\; \eps\!\int_{-b}^b U_\star(x)^2\,dx + \mathcal O(\eps^2).
\end{equation}

Substitute $\xi = x - y_R$ in $\psi_\eps = \Scal(x - y_R;E(\eps)) = \sqrt 2\,\la_0\sech(\la_0\xi)$:
\begin{equation}
  N_{\rm out}^R
  \;=\; \int_b^\infty 2\la_0^2 \sech^2(\la_0(x-y_R))\,dx
  \;=\; 2\la_0 \int_{\la_0(b-y_R)}^\infty \sech^2 \eta\,d\eta
  \;=\; 2\la_0\bigl(1-\tanh(\la_0(b-y_R))\bigr).
\end{equation}
With $t_R = \tanh(\la_0(b-y_R))$ and $1+t_R = \Ub^2\eps/(4\gs^2) + \mathcal O(\eps^2)$ from~\eqref{eq:1ptR},
\begin{equation}
  1 - \tanh(\la_0(b-y_R)) \;=\; 1 - t_R \;=\; 2 - (1 + t_R) \;=\; 2 - \frac{\Ub^2\,\eps}{4\gs^2} + \mathcal O(\eps^2).
\end{equation}
Hence
\begin{equation} \label{eq:NR}
  N_{\rm out}^R \;=\; 2\la_0\Bigl[2 - \frac{\Ub^2\eps}{4\gs^2} + \mathcal O(\eps^2)\Bigr]
  \;=\; 4\la_0 - \frac{\Ub^2\,\eps}{2\gs} + \mathcal O(\eps^2).
\end{equation}

The same computation at $-b$, using $1 + t_L = \eps/(4\gs^2) + \mathcal O(\eps^2)$ from~\eqref{eq:1ptL}, gives
\begin{equation}\label{eq:NL}
  N_{\rm out}^L
  \;=\; 2\la_0\bigl(1 - \tanh(\la_0(y_L + b))\bigr)
  \;=\; \frac{\eps}{2\gs} + \mathcal O(\eps^2).
\end{equation}

Summing the three contributions of \eqref{eq:Nin}--\eqref{eq:NL},
\begin{equation}
  \mathcal N(\eps) \;=\; 4\la_0 + \eps\!\int_{-b}^b U_\star^2\,dx - \frac{\Ub^2\eps}{2\gs} + \frac{\eps}{2\gs} + \mathcal O(\eps^2).
\end{equation}
Substituting $\la_0 = \gs - \eps\dot E/(2\gs) + \mathcal O(\eps^2)$ from~\eqref{eq:la0} and grouping the $\mathcal O(\eps)$ terms via $\mathcal B_\star = \int_{-b}^b U_\star^2\,dx - (\Ub^2 - 1)/(2\gs)$ from~\eqref{eq:Bstar},
\begin{equation}\label{eq:N-eps}
  \mathcal N(\eps) \;=\; 4\gs + \eps\Bigl(\mathcal B_\star - \frac{2\dot E}{\gs}\Bigr) + \mathcal O(\eps^2).
\end{equation}

With $E(\eps) = -\gs^2 + \eps\dot E + \mathcal O(\eps^2)$ and $\dot E \neq 0$ (hypothesis~\ref{H:Omega}), the chain rule gives
\begin{equation}
  \frac{d\mathcal N}{dE}(\eps) \;=\; \frac{d\mathcal N/d\eps}{dE/d\eps} \;=\; \frac{\mathcal B_\star}{\dot E} - \frac{2}{\gs} + \mathcal O(\eps).
\end{equation}
Substituting $\dot E = (\Omega_\star - 3/(4\gs))/(2\mathcal B_\star)$ from~\eqref{eq:Edot-solved},
\begin{equation}
  \frac{d\mathcal N}{dE}(\eps) \;=\; \frac{2\mathcal B_\star^2}{\Omega_\star - 3/(4\gs)} - \frac{2}{\gs} + \mathcal O(\eps).
\end{equation}
Evaluating at $\eps = 0$ gives~\eqref{eq:nu}. For SR, $N_{\rm out}^L$ is absent (Remark~\ref{rem:SR-stability}); the same computation with $U_\star(0) = 0$ yields the SR slope formula~\eqref{eq:nu}. \qed


\section{Stability and instability regions in the $(E,\mathcal{N})$-plane near the bifurcation point}
\label{ssec:stab-regimes}

By Theorem~\ref{thm:W-GSS}, the slope of the nonlinear bound-state curve
\[
  E \mapsto \mathcal N[\psi(\cdot;E)]
\]
encodes information about orbital stability under the NLS/GP flow. As discussed in Remark~\ref{rem:geometric}, the $\nu_\star$-stability and instability intervals displayed in Corollaries~\ref{cor:stability-TR} and~\ref{cor:stability-SR} induce a local partition of the $(E,\mathcal N)$-plane near $(E_\star,\mathcal N_\star)$. The tables above list the corresponding regions, and Figure~\ref{fig:schematic} visualizes them.
\begin{figure}[htbp]
\centering
\setlength{\abovecaptionskip}{6pt}
\begin{subfigure}[t]{0.48\linewidth}
\centering
\begin{tikzpicture}[>=Stealth, font=\normalsize]
  \draw[->] (-4.3,3.2) -- (0.4,3.2) node[right] {$E$};
  \draw[->] (0,3.1) -- (0,6.5) node[above] {$\mathcal{N}$};
  \draw[orange, dashed, thick, domain=0.04:4.0, samples=80]
    plot ({-\x}, {3.2+1.5*sqrt(\x)});
  \node[orange, font=\footnotesize, anchor=east] at (-3.6, 6.35) {$\mathcal{N}[\mathcal{S}_E]$};
  \fill (-1, 4.7) circle (2pt);
  \draw (-1,3.2) -- (-1,3.1);
  \node[font=\footnotesize, anchor=north] at (-1,3.1) {$E_\star$};
  \draw (0,4.7) -- (0.1,4.7);
  \node[font=\footnotesize, anchor=west] at (0.1,4.7)
    {$\mathcal{N}_\star = \mathcal{N}[\mathcal{S}_{E_\star}]$};
  \draw[thick, domain=0:2.6, samples=40]
    plot ({-1-\x}, {4.7 + 0.5*\x - 0.015*\x*\x});
  \draw[dashed, thick] (-1.55, 4.2) rectangle (-0.45, 5.2);

  \pgfmathsetmacro{\bx}{-2.3}
  \pgfmathsetmacro{\by}{1.35}
  \pgfmathsetmacro{\Lx}{-4.45}
  \pgfmathsetmacro{\Rx}{-0.15}
  \pgfmathsetmacro{\By}{0.0}
  \pgfmathsetmacro{\Ty}{2.95}
  \pgfmathsetmacro{\msol}{0.55}
  \pgfmathsetmacro{\soly}{\by + (\bx - \Lx)*\msol}
  \pgfmathsetmacro{\solyr}{\by - (\Rx - \bx)*\msol}

  \begin{scope}
    \clip (\Lx,\By) rectangle (\Rx,\Ty);
    \fill[red!15] (\bx,\by) rectangle (\Rx,\Ty);
    \fill[red!15] (\bx,\by) rectangle (\Lx,\By);
    \fill[red!15] (\bx,\by) -- (\Lx,\soly) -- (\Lx,\Ty) -- (\bx,\Ty) -- cycle;
    \fill[red!15] (\bx,\by) -- (\Rx,\solyr) -- (\Rx,\By) -- (\bx,\By) -- cycle;
    \fill[green!25] (\bx,\by) -- (\Lx,\soly) -- (\Lx,\by) -- cycle;
    \fill[green!25] (\bx,\by) -- (\Rx,\by) -- (\Rx,\solyr) -- cycle;
  \end{scope}

  \begin{scope}
    \clip (\Lx,\By) rectangle (\Rx,\Ty);
    \draw[thin, gray] (\Lx,\by) -- (\Rx,\by); 
    \draw[orange, dashed, thick] (\Lx,\soly) -- (\Rx,\solyr); 
    \draw[gray, thin] (\bx,\By) -- (\bx,\Ty); 
  \end{scope}

  \begin{scope}
    \clip (\Lx,\By) rectangle (\Rx,\Ty);
    \draw[very thick, black, ->, domain=0:1.0, samples=30]
      plot ({\bx - 1.4*\x}, {\by + 0.38*\x - 0.02*\x*\x});
  \end{scope}

  \node[font=\small] at (-4.00, 1.84) {$\Sreg$};   
  \node[font=\small] at (-0.95, 2.52) {$\Ureg$};   
  \node[font=\small] at (-3.65, 0.18) {$\Ureg$};   
  \node[font=\small] at (-0.60, 0.86) {$\Sreg$};   

  \node[orange, font=\footnotesize, rotate=-28, anchor=center, fill=white, inner sep=1pt]
    at (-3.38, 1.94)
    {$\mathcal{N}[\mathcal{S}_E]$};
  \node[font=\footnotesize, anchor=north] at (\bx, \By-0.03) {$E_\star$};
  \node[font=\footnotesize, anchor=west] at (\Rx+0.07, \by) {$\mathcal{N}_\star$};

  \fill (\bx,\by) circle (1.5pt);
  \draw[thick] (\Lx,\By) rectangle (\Rx,\Ty);

  \draw[thin, gray] (-1.55, 4.2) -- (\Lx, \Ty);
  \draw[thin, gray] (-0.45, 4.2) -- (\Rx, \Ty);
\end{tikzpicture}
\vspace{-0.5em}
\subcaption{SR, $X=H^1_0(\R_+)$: stable $\Sreg^{\rm SR}$ and unstable $\Ureg^{\rm SR}$.}
\end{subfigure}\hfill
\begin{subfigure}[t]{0.48\linewidth}
\centering
\begin{tikzpicture}[>=Stealth, font=\normalsize]
  \draw[->] (-4.3,3.2) -- (0.4,3.2) node[right] {$E$};
  \draw[->] (0,3.1) -- (0,6.5) node[above] {$\mathcal{N}$};
  \draw[orange, dashed, thick, domain=0.04:4.0, samples=80]
    plot ({-\x}, {3.2+1.5*sqrt(\x)});
  \node[orange, font=\footnotesize, anchor=east] at (-3.6, 6.35) {$\mathcal{N}[\mathcal{S}_E]$};
  \fill (-1, 4.7) circle (2pt);
  \draw (-1,3.2) -- (-1,3.1);
  \node[font=\footnotesize, anchor=north] at (-1,3.1) {$E_\star$};
  \draw (0,4.7) -- (0.1,4.7);
  \node[font=\footnotesize, anchor=west] at (0.1,4.7)
    {$\mathcal{N}_\star = \mathcal{N}[\mathcal{S}_{E_\star}]$};
  \draw[thick, domain=0:2.4, samples=40]
    plot ({-1-\x}, {4.7 + 1.05*\x - 0.04*\x*\x});
  \draw[dashed, thick] (-1.55, 4.2) rectangle (-0.45, 5.2);

  \pgfmathsetmacro{\bx}{-2.3}
  \pgfmathsetmacro{\by}{1.35}
  \pgfmathsetmacro{\Lx}{-4.45}
  \pgfmathsetmacro{\Rx}{-0.15}
  \pgfmathsetmacro{\By}{0.0}
  \pgfmathsetmacro{\Ty}{2.95}
  \pgfmathsetmacro{\msol}{0.42}
  \pgfmathsetmacro{\mom}{0.85}
  \pgfmathsetmacro{\soly}{\by + (\bx - \Lx)*\msol}
  \pgfmathsetmacro{\solyr}{\by - (\Rx - \bx)*\msol}
  \pgfmathsetmacro{\omy}{\by + (\bx - \Lx)*\mom}
  \pgfmathsetmacro{\omyr}{\by - (\Rx - \bx)*\mom}
  \pgfmathsetmacro{\omxtop}{\bx - (\Ty - \by)/\mom}
  \pgfmathsetmacro{\omxbot}{\bx + (\by - \By)/\mom}

  \begin{scope}
    \clip (\Lx,\By) rectangle (\Rx,\Ty);
    \fill[red!15] (\bx,\by) rectangle (\Rx,\Ty);
    \fill[red!15] (\bx,\by) rectangle (\Lx,\By);
    \fill[green!25] (\bx,\by) -- (\bx,\Ty) -- (\omxtop,\Ty) -- cycle;
    \fill[green!25] (\bx,\by) -- (\bx,\By) -- (\omxbot,\By) -- cycle;
    \fill[red!15] (\bx,\by) -- (\Lx,\soly) -- (\Lx,\Ty) -- (\omxtop,\Ty) -- cycle;
    \fill[red!15] (\bx,\by) -- (\Rx,\solyr) -- (\Rx,\By) -- (\omxbot,\By) -- cycle;
    \fill[green!25] (\bx,\by) -- (\Lx,\soly) -- (\Lx,\by) -- cycle;
    \fill[green!25] (\bx,\by) -- (\Rx,\by) -- (\Rx,\solyr) -- cycle;
  \end{scope}

  \begin{scope}
    \clip (\Lx,\By) rectangle (\Rx,\Ty);
    \draw[thin, gray] (\Lx,\by) -- (\Rx,\by); 
    \draw[orange, dashed, thick] (\Lx,\soly) -- (\Rx,\solyr); 
    \draw[gray, dashed, thick] (\omxtop,\Ty) -- (\omxbot,\By); 
    \draw[gray, thin] (\bx,\By) -- (\bx,\Ty); 
  \end{scope}

  \begin{scope}
    \clip (\Lx,\By) rectangle (\Rx,\Ty);
    \draw[very thick, black, ->, domain=0:1.0, samples=30]
      plot ({\bx - 0.85*\x}, {\by + 1.05*\x - 0.05*\x*\x});
  \end{scope}

  \node[font=\small] at (-4.00, 1.73) {$\Sreg_2$};   
  \node[font=\small, text=red!80!black] at (-4.00, 2.52) {$\Ureg_1$};   
  \node[font=\small] at (-2.93, 2.52) {$\Sreg_1$};   
  \node[font=\small, text=red!80!black] at (-0.95, 2.52) {$\Ureg_2$};   
  \node[font=\small] at (-0.60, 0.97) {$\Sreg_2$};   
  \node[font=\small, text=red!80!black] at (-0.60, 0.18) {$\Ureg_1$};   
  \node[font=\small] at (-1.67, 0.18) {$\Sreg_1$};   
  \node[font=\small, text=red!80!black] at (-3.65, 0.18) {$\Ureg_2$};   

  \node[orange, font=\footnotesize, rotate=-23, anchor=center, fill=white, inner sep=1pt]
    at (-1.08, 0.86)
    {$\mathcal{N}[\mathcal{S}_E]$};
  \node[font=\footnotesize, anchor=north] at (\bx, \By-0.03) {$E_\star$};
  \node[font=\footnotesize, anchor=west] at (\Rx+0.07, \by) {$\mathcal{N}_\star$};

  \fill (\bx,\by) circle (1.5pt);
  \draw[thick] (\Lx,\By) rectangle (\Rx,\Ty);

  \draw[thin, gray] (-1.55, 4.2) -- (\Lx, \Ty);
  \draw[thin, gray] (-0.45, 4.2) -- (\Rx, \Ty);
\end{tikzpicture}
\vspace{-0.5em}
\subcaption{TR, $X=H^1(\R)$: $\Ureg_1^{\rm TR},\Sreg_1^{\rm TR},\Ureg_2^{\rm TR},\Sreg_2^{\rm TR}$}
\end{subfigure}
\caption{Stability regions near the bifurcation point $(E_\star,\mathcal N_\star)$;  Corollaries~\ref{cor:stability-TR} and~\ref{cor:stability-SR}. In each panel, the upper plot shows the free-soliton  curve $E\mapsto\mathcal{N}[\mathcal{S}_E]$ (orange dashed) and a representative resonance-induced branch $\eps \mapsto (E(\eps),\mathcal N[\psi_\eps])$ (solid black), emanating from the bifurcation point. The dashed box is magnified below, and the black arrow indicates the direction of increasing $\eps$ along the branch as it leaves the bifurcation point $(E_\star,\mathcal N_\star)$. In the zoom, the orange dashed line is the tangent to $\mathcal{N}[\mathcal{S}_E]$ at $(E_\star,\mathcal N_\star)$; in the TR panel, the gray dashed line is the $\Omega_\star=0$ separator. Stable sectors are shaded green and unstable sectors are shaded red; the labels $\Sreg,\Ureg$ mark the regions of Corollaries~\ref{cor:stability-TR} and~\ref{cor:stability-SR} (four in the TR panel, two in the SR panel), as in Figure~\ref{fig:f-stability}.}
\label{fig:schematic}
\end{figure}

\subsection{Numerical exploration of $(E, \mathcal N)$ plane near the bifurcation point}\label{ssec:branch-examples}

We exhibit two resonance-induced branches, one in $\Sreg^{\rm SR}$ and one in $\Ureg_1^{\rm TR}$, and use Corollaries~\ref{cor:stability-TR} and~\ref{cor:stability-SR} together with the computed values of $(\gs,\Omega_\star,\mathcal B_\star,\Nslope)$ to read off their stability.
\noindent\underline{SR-induced branch.}
On the half-line $\R_+$, take the smooth-bump potential
\begin{equation}\label{eq:sr-pot}
  V(x) \;=\; V_0\,\exp\!\Bigl(-\tfrac{1}{1-(2x-1)^2}\Bigr)\,\mathbf 1_{(0,1)}(x), \qquad V_0 \;=\; -7,
\end{equation}
i.e.\ the standard $C^\infty_c$ bump on $(0,1)$ scaled to amplitude $V_0=-7$. Figure~\ref{fig:example} displays the corresponding strictly positive {\it scattering} resonance  (SR) mode $U_\star$ with $\gs\approx 1.353$ and $\Omega_\star\approx 0.022 > 0$. Panel~(b) shows the branch entering $\Sreg^{\rm SR}$ of Figure~\ref{fig:schematic}, in agreement with Corollaries~\ref{cor:stability-TR} and~\ref{cor:stability-SR}; this confirms stability near the bifurcation point.

\noindent\underline{TR-induced branch.}
Figure~\ref{fig:trans-example} shows a \emph{transmission} resonance (TR) on the full line. The sine potential $V(x) = 3\sin(\pi x)$ on $[-1,1]$ has $\Omega_\star \approx -3.01 \;<\; 0$, so the branch enters $\Ureg_1^{\rm TR}$ of Figure~\ref{fig:schematic} and is unstable. The nonlinear consequence is tracked dynamically in Section~\ref{sec:dyn-E0}. A stable transmission example is given in Subsection~\ref{ssec:cor-nonvacuity}, where it arises naturally from the construction of Corollary~\ref{cor:small-gamma-stability}.

\begin{figure}[!htbp]
    \centering
    \setlength{\abovecaptionskip}{2pt}
    \setlength{\belowcaptionskip}{0pt}
    \begin{subfigure}{0.32\linewidth}
        \includegraphics[width=\linewidth]{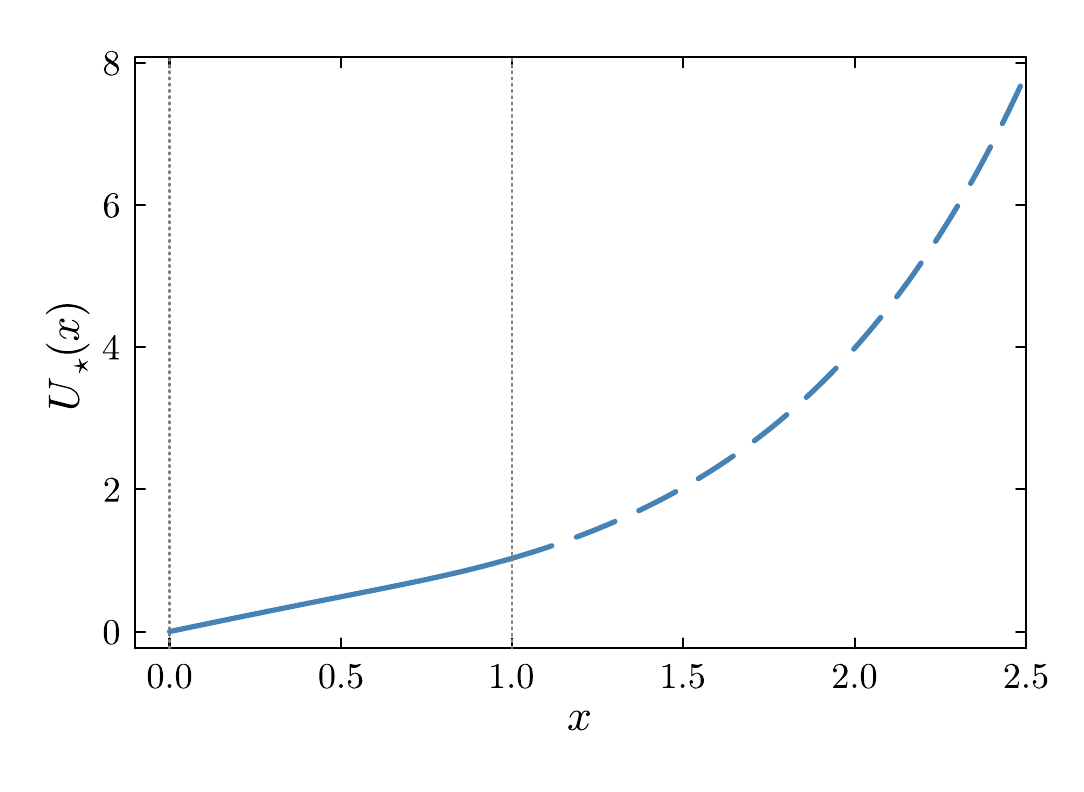}
        \vspace{-1.6em}\subcaption{}
    \end{subfigure}\hfill
    \begin{subfigure}{0.32\linewidth}
        \includegraphics[width=\linewidth]{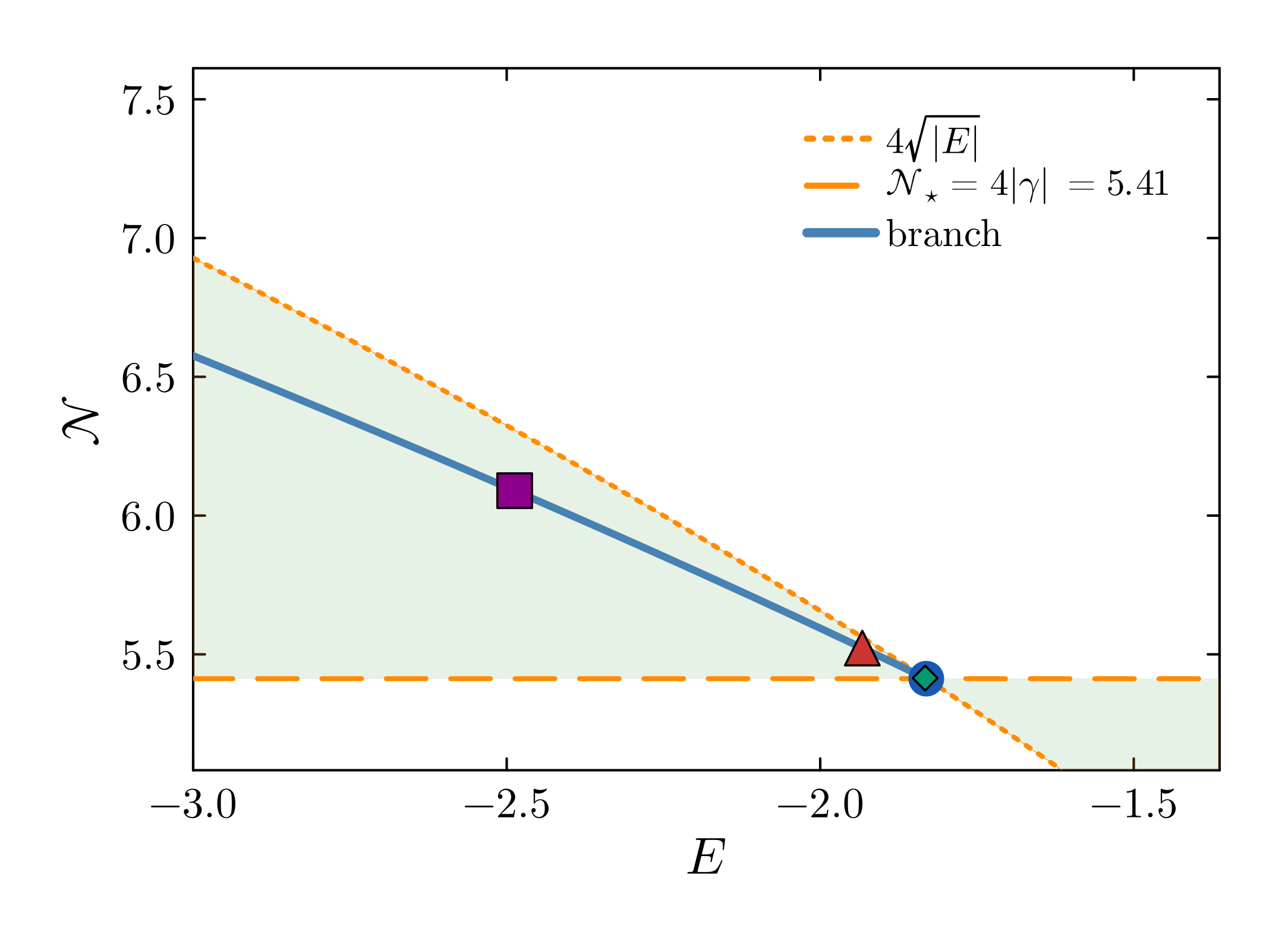}
        \vspace{-1.6em}\subcaption{}
    \end{subfigure}\hfill
    \begin{subfigure}{0.32\linewidth}
        \includegraphics[width=\linewidth]{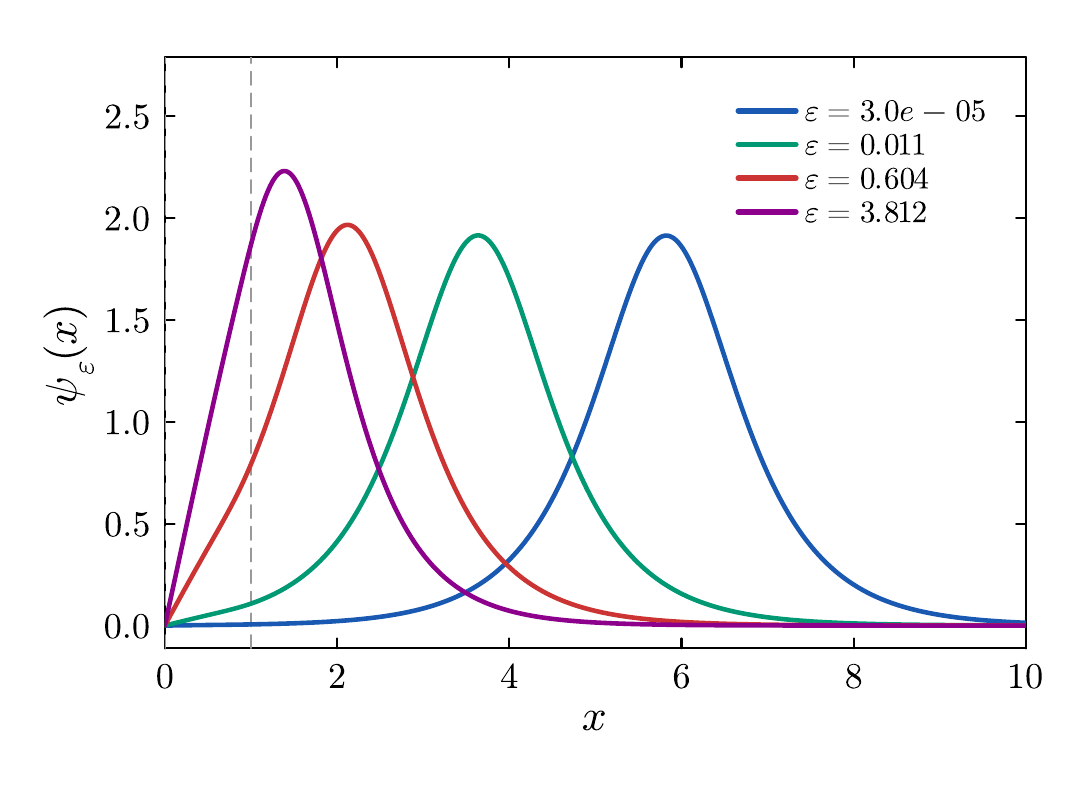}
        \vspace{-1.6em}\subcaption{}
    \end{subfigure}
    \caption{ $X=H_0^1(\R_+)$, scattering resonance for $V(x)$ in \eqref{eq:sr-pot}, $\gs \approx 1.353$. (a) Scattering resonance mode ~$U_\star(x)$. (b)~$\eps\mapsto(E(\eps),\mathcal N[\psi_\eps])$;  branch enters $\Sreg^{\rm SR}$, regions shaded as in Fig.~\ref{fig:schematic}; markers correspond to spatial $\psi_\eps(x)$ profiles in (c)}
    \label{fig:example}
\end{figure}

\begin{figure}[!htbp]
    \centering
    \setlength{\abovecaptionskip}{2pt}
    \setlength{\belowcaptionskip}{0pt}
    \begin{subfigure}{0.32\linewidth}
        \includegraphics[width=\linewidth]{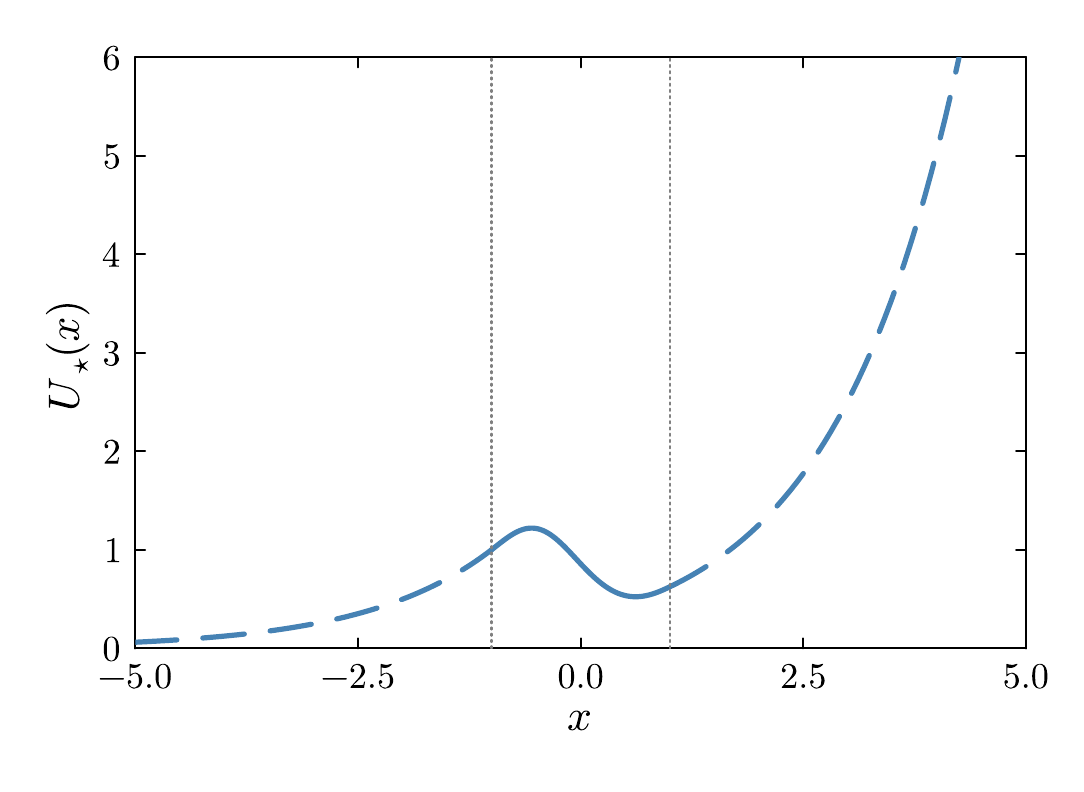}
        \vspace{-1.6em}\subcaption{}
    \end{subfigure}\hfill
    \begin{subfigure}{0.32\linewidth}
        \includegraphics[width=\linewidth]{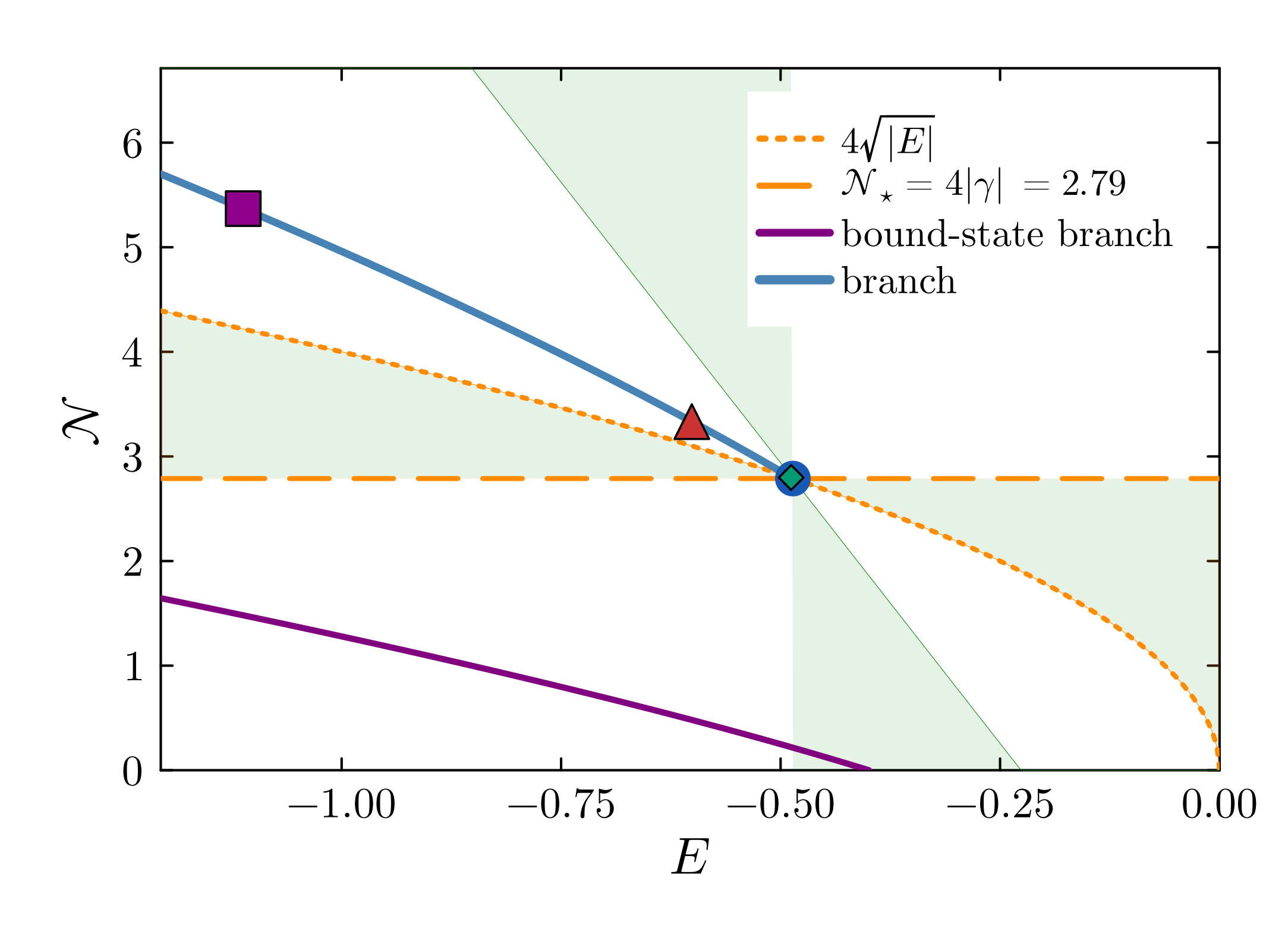}
        \vspace{-1.6em}\subcaption{}
    \end{subfigure}\hfill
    \begin{subfigure}{0.32\linewidth}
        \includegraphics[width=\linewidth]{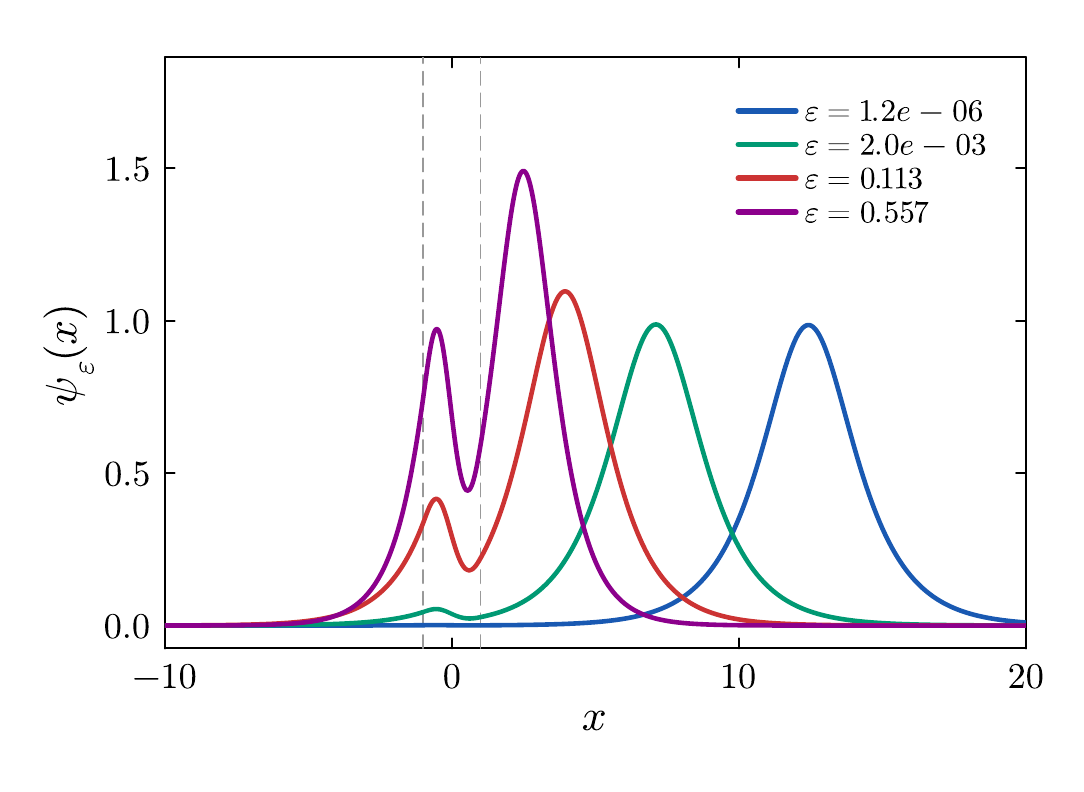}
        \vspace{-1.6em}\subcaption{}
    \end{subfigure}
    \caption{$X=H^1(\R)$, transmission resonance for $V(x) = 3\sin(\pi x)$ on $[-1,1]$, $\gs \approx 0.697$, $\Omega_\star \approx -3.01$. (a) Transmission resonance mode ~$U_\star(x)$. (b)~Bifurcation curves, regions shaded as in Figure~\ref{fig:schematic}; the resonance branch (blue) lies in $\Ureg_1^{\rm TR}$. Purple:  branch from the $L^2$ eigenvalue $E_0 \approx -0.399$. (c)~Profiles $\psi_\eps(x)$.}
    \label{fig:trans-example}
\end{figure}

\clearpage
\subsection{\texorpdfstring{Validation of the perturbed translation eigenvalue asymptotics in Proposition~\ref{prop:mu}}{Corroboration of the asymptotic formula for mu(eps)}}\label{ssec:mu-verification}

We numerically verify our approximation $\mu(\eps) = (3\eps^2/(4\gs))\,\Omega_\star + \mathcal O(\eps^3)$ of the perturbed translation eigenvalue. The reference value of $\mu(\eps)$ is computed by solving the inner IVP~\eqref{eq:phi-ivp} for $\phi_{\rm in}$ on $[-b,b]$ with $\alpha$ determined by the left match exactly (\claimref{1a}); then applying a shooting method to match it with the right side at $x=b$, then gluing to a Jost solution on either side.
Figure~\ref{fig:verify} compares the predicted $\tfrac{3\Omega_\star}{4\gs}\,\eps^2$ with computed $\mu(\eps)$ in the $\eps$-window where the leading-order asymptotic is dominant. 

For the half-line scattering resonance (Figure~\ref{fig:example}), $\Omega_\star\approx 0.022 > 0$ gives $\mu>0$ and stability. For the sine transmission resonance (Figure~\ref{fig:trans-example}), $\Omega_\star\approx -3.01<0$ gives $\mu(\eps)<0$ and spectral instability.

\begin{figure}[!htbp]
    \centering
    \setlength{\abovecaptionskip}{4pt}
    \setlength{\belowcaptionskip}{0pt}
    \begin{subfigure}{0.49\linewidth}
        \includegraphics[width=\linewidth]{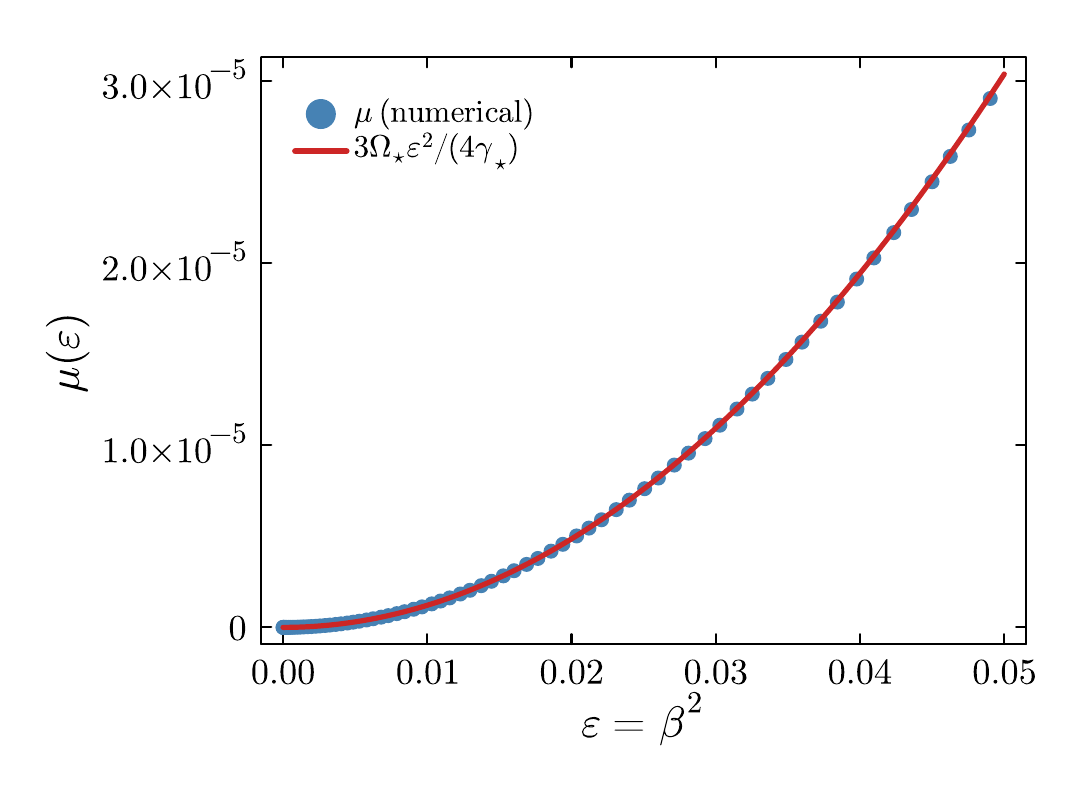}
        \vspace{-2em}\subcaption{}
    \end{subfigure}\hfill
    \begin{subfigure}{0.49\linewidth}
        \includegraphics[width=\linewidth]{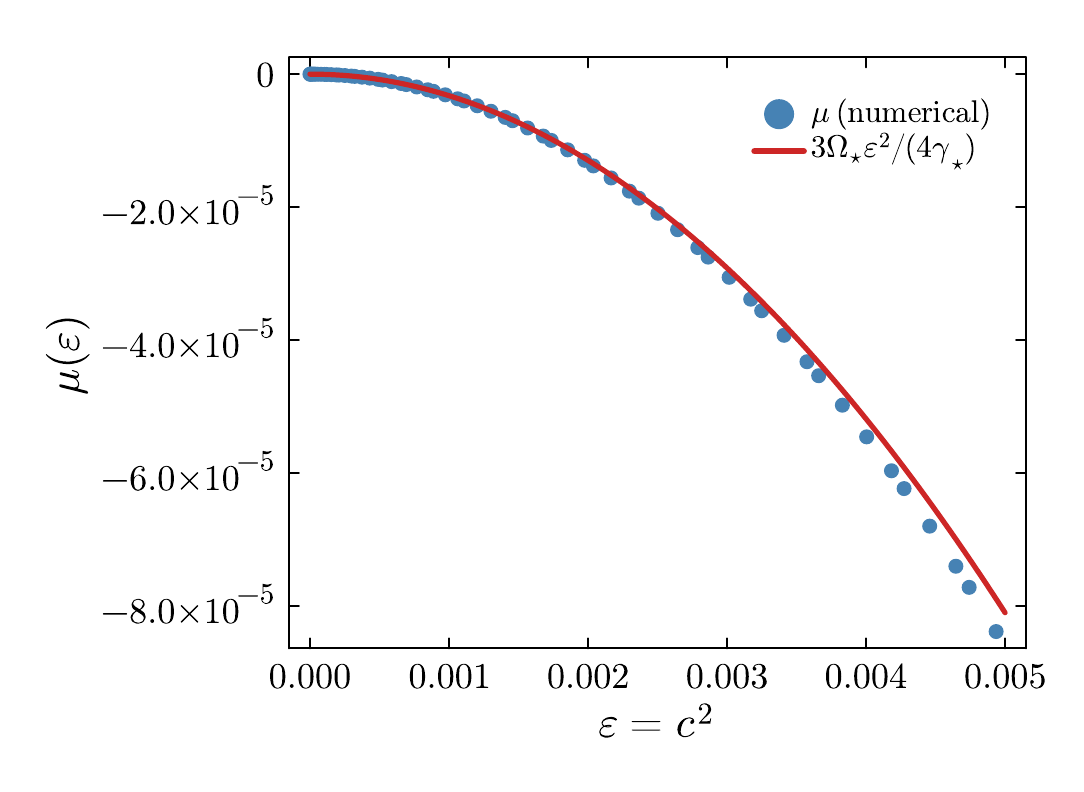}
        \vspace{-2em}\subcaption{}
    \end{subfigure}
\vspace{1em}
      \caption{Validity of approximate expression for $\mu(\eps)$, Proposition~\ref{prop:mu}: direct numerical simulation of $\mu(\eps)$ (dots) vs.\ asymptotic expression $\tfrac{3\Omega_\star}{4\gs}\,\eps^2$ (solid). (a)~Half-line scattering case (Fig.~\ref{fig:example}). (b)~Sine transmission case (Fig.~\ref{fig:trans-example}).}
        \label{fig:verify}
\end{figure}


\section{Long-time dynamics after instability}\label{sec:dynamics}

{
We present finite-difference numerical computations of the spectra of $H_V$ and long-time dynamics of the time-dependent NLS/GP equation~\eqref{eq:nlse} for two choices of linear potentials, $V$, and initial data chosen to be near corresponding unstable transmission resonance-induced bound states. 
The results are summarized in Figure~\ref{fig:dyn-E0} and  Figure~\ref{fig:dyn-E1}, which display (a) the density $x\mapsto|\Psi(x,t)|^2$, over a long time-interval and (b) a nonlinear rescaling, $x\mapsto \operatorname{asinh}(|\Psi(x,t)|^2/\rho_0)$ (with $\rho_0 := 10^{-7}\max_{x,t}|\Psi|^2$), which amplifies the low-amplitude features, such as the small radiative tails of $|\Psi(x,t)|^2$. The Julia code, the bifurcation-branch and dynamics datasets, and animated mp4s of the spacetime evolution for our examples are available in the  repository~\cite{turner2026nls-resonance-stability}.

}

{
\subsection*{Example 1:\ $V(x) = 3\sin(\pi x)\,\mathbf 1_{(-1,1)}(x)$, Figures~\ref{fig:spectral-E0}--\ref{fig:dyn-E0}.}\label{sec:dyn-E0}\quad
\medskip

Numerical solution of the bound state and resonance spectral problems for  $H_V$ yields:
\begin{itemize}
    \item a bound state pole of $r(k)$ at $i\kappa_0$, with $\kappa_0\approx\sqrt{0.399}$, and a corresponding  ground state eigenfunction $U_0>0$ satisfying $H_V U_0=E_0U_0$ with $U_0\in L^2(\R)$, where  $E_0=(i\kappa_0)^2\approx-0.399$. The eigenpair $(U_0,E_0)$ seeds a nonlinearly stable branch of nonlinear bound states \cite{rose1988bound}. 
    \item a transmission resonance zero of $r(k)$ at $-i\gs$, where $\gs\approx 0.697$, with corresponding transmission resonance mode $U_\star>0$, satisfying $H_V U_\star=E_\star U_\star$ with $U_\star\in L^2_{\rm loc}$, where  $E_\star=(i\gs)^2\approx-0.486$. By Theorem \ref{thm:intro-bifurcation} (see also \cite{turner2026resonance}) the transmission resonance pair $(\gs,U_\star)$ seeds a branch of resonance-induced nonlinear bound states. 
     We find that $\Omega_\star \approx -3.01 \;<\; 0$ and hence $n_-(L_+^\eps)=2$. Further, we have $n_-(L_-^\eps)=0$, since $\psi_\eps>0$ satisfies $L_-^\eps\psi_\eps=0$. Therefore, by Theorem~\ref{thm:intro-stability-sharp}, states along the resonance-induced branch which are near the bifurcation point are unstable.
\end{itemize}

Figure~\ref{fig:spectral-E0} displays the potential, the spectral data on the $E=k^2$ axis, and the profiles $U_0$ and $U_\star$. Figure~\ref{fig:dyn-E0} displays the NLS-GP time-evolution for initial data taken to be a numerical approximation to an unstable transmission-resonance induced mode,  $\psi_\eps$, near the bifurcation point. The perturbed (by numerical error) state $\psi_\eps$ appears to resolve, as time advances, into a free soliton part moving off to $+\infty$, a part whose energy is captured by the nonlinear ground state branch, which is asymptotically stable \cite{sw1990,sw1992,pillet-wayne1997},  
and dispersive radiation. 

}
\medskip

\begin{figure}[!htbp]
\centering
\setlength{\abovecaptionskip}{3pt}
\setlength{\belowcaptionskip}{0pt}
\begin{subfigure}[b]{0.42\linewidth}
\parbox[b][4.0cm][c]{\linewidth}{%
\centering
\begin{tikzpicture}[scale=0.88, transform shape]
  \fill[green!10] (0,-0.40) rectangle (1.9,0.40);
  \draw[-{Stealth[length=4pt,width=3pt]}, thin] (-5.0,0) -- (2.2,0)
        node[right, font=\footnotesize] {$E = k^2$};
  \draw[thin] (0,-0.24) -- (0,0.24);
  \node[below, font=\scriptsize] at (0,-0.26) {$0$};
  \node[font=\scriptsize, text=green!60!black, above] at (1.0,0.40) {cont.};
  \filldraw[figgreen, draw=black!75, line width=0.5pt] (-3.19,0) circle (3.0pt);
  \draw[gray!45, line width=0.4pt] (-3.19,0.16) -- (-3.19,0.88);
  \node[above, font=\scriptsize, text=figgreen!85!black,
        fill=white, draw=gray!30, thin, inner sep=1.2pt] at (-3.19,0.88) {$k=i\kappa_0$};
  \node[below, font=\scriptsize] at (-3.19,-0.26) {$E_0$};
  \draw[zeroblue, line width=1.6pt, line cap=round]
    (-3.89-0.16,-0.16)--(-3.89+0.16,0.16)
    (-3.89-0.16, 0.16)--(-3.89+0.16,-0.16);
  \draw[gray!45, line width=0.4pt] (-3.89,-0.16) -- (-3.89,-0.88);
  \node[below, font=\scriptsize, text=zeroblue,
        fill=white, draw=gray!30, thin, inner sep=1.2pt] at (-3.89,-0.88) {$k={-}i\gs$};
  \node[above, font=\scriptsize] at (-3.89,0.26) {$E_\star$};
\end{tikzpicture}%
}%
\subcaption{}
\end{subfigure}\hfill
\begin{subfigure}[b]{0.55\linewidth}
\centering
\includegraphics[width=\linewidth]{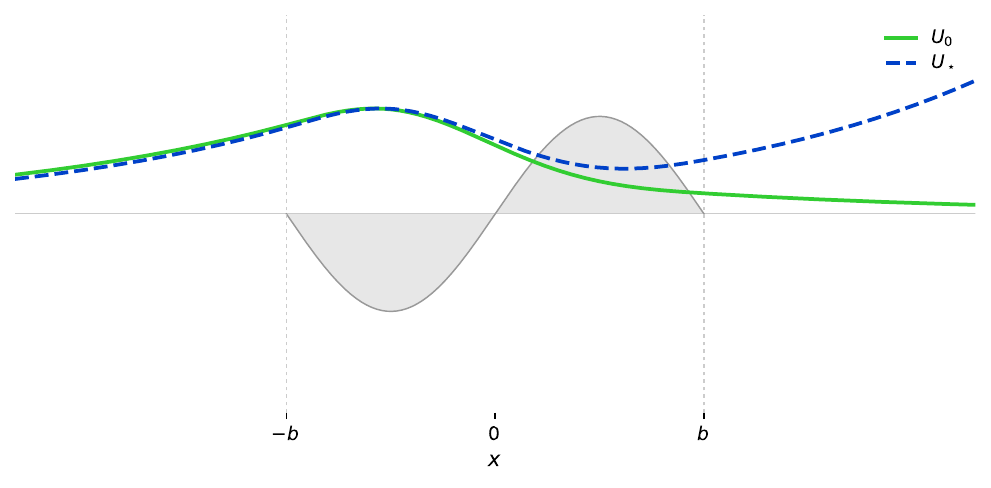}
\subcaption{}
\end{subfigure}
\caption{%
  $V(x)=3\sin(\pi x)\,\mathbf{1}_{(-1,1)}(x)$, $b=1$.
  \textbf{(a)}~Spectral line $E=k^2$: \specpole\ $=$ $L^2$ bound state
  at $k=i\kappa_0$ ($E_0$); \speczero\ $=$ TR zero at $k=-i\gs$ ($E_\star$);
  shaded band $=$ continuum $E\ge0$.
  \textbf{(b)}~Potential $V$ (gray, scaled), ground state $U_0$ (solid),
  and resonance mode $U_\star$ (dashed).%
}
\label{fig:spectral-E0}
\end{figure}

\begin{figure}[!htbp]
    \centering
    \setlength{\abovecaptionskip}{4pt}
    \setlength{\belowcaptionskip}{0pt}
    \includegraphics[width=0.62\linewidth]{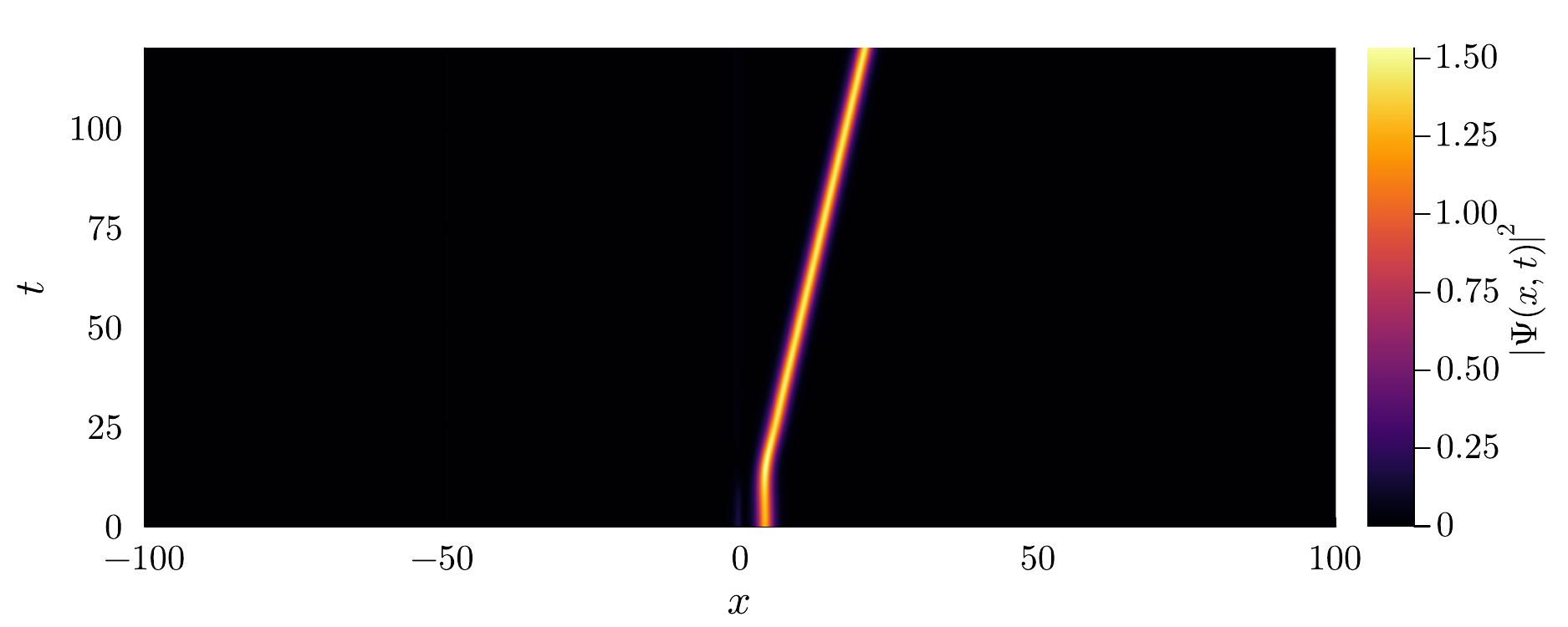}\\[1pt]
    \includegraphics[width=0.62\linewidth]{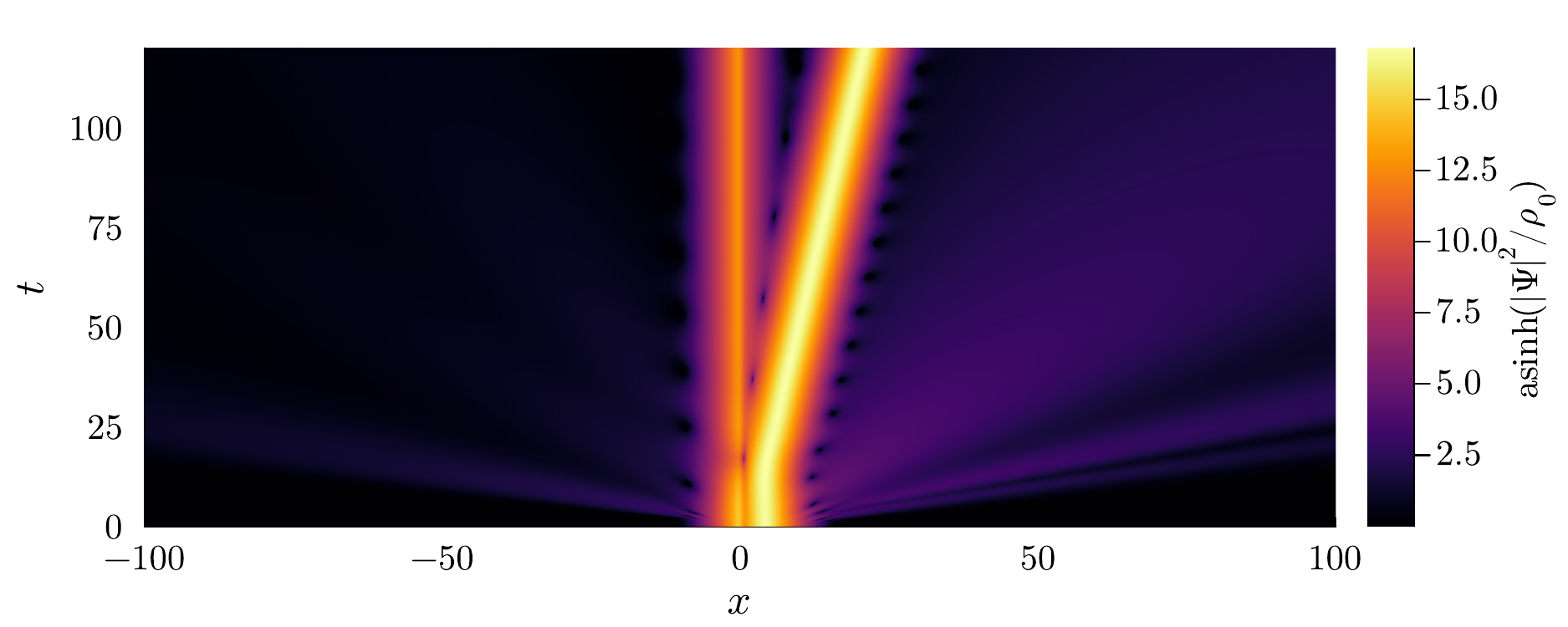}
    \caption{$V=3\sin(\pi x)\mathbf 1_{(-1,1)}$ at $T=120$; $\mathcal N_{\mathrm{rad}} \approx 4\!\times\!10^{-4}$. Top: raw density $|\Psi(x,t)|^2$. Bottom: rescaled $\operatorname{asinh}(|\Psi|^2/\rho_0)$ exposing low-amplitude radiation.}
    \label{fig:dyn-E0}
\end{figure}

{
\subsection*{Example 2: $V(x) = 5\cos(\tfrac{3\pi}{2}x)\,\mathbf 1_{(-1,1)}(x)$, Figures~\ref{fig:spectral-E1}--\ref{fig:dyn-E1}.}
 \label{sec:dyn-E1}\quad
\medskip

Numerical solution of the bound state and resonance spectral problems for  $H_V$ yields:
\begin{itemize}
    \item two bound state poles of $r(k)$ at $i\kappa_0$ and $i\kappa_1$ with $\kappa_0\approx\sqrt{0.727}$ and  
    $\kappa_1\approx\sqrt{0.071}$ and corresponding  ground state and excited state eigenfunctions $U_0>0$ and $U_1$, respectively,  satisfying $H_V U_j=E_jU_j$ with $U_j\in L^2(\R)$, where  $E_0=(i\kappa_0)^2\approx-0.727$ and $E_1=(i\kappa_1)^2\approx-0.071$. Each bound state seeds a branch of  nonlinear bound states \cite{rose1988bound}. 
    
    For NLS-GP, generic small data are expected to evolve, as $t\to\infty$, toward a nonlinear ground state and outgoing radiation; the nonlinear excited state branch is unstable; see  \cite{sw2004,weinstein-survey:15} for a study of a three-dimensional setting.
    \item a transmission resonance zero of $r(k)$ at $-i\gs$, where $\gs\approx 0.49347$, with corresponding transmission resonance mode $U_\star$,  satisfying $H_V U_\star=E_\star U_\star$ with $U_\star\in L^2_{\rm loc}$, where  $E_\star=(i\gs)^2\approx-0.244$. In this case, $U_\star$ changes its sign once. By Theorem \ref{thm:intro-bifurcation} (see also \cite{turner2026resonance}) the transmission resonance pair $(\gs,U_\star)$ seeds a branch of resonance-induced nonlinear bound states $\eps\mapsto\big(E(\eps),\psi_\eps\big)$. To investigate the stability of this branch near the bifurcation point, note that since $U_\star(x)$ changes sign once, $\psi_\eps(x)$  changes sign once. And since $L_-^\eps\psi_\eps=0$, zero is the second eigenvalue of $L_-^\eps$ and we have that $n_-(L_-^\eps)=1$.

    Turning now to $\Lop$, for $\eps>0$ and small we have, by perturbation arguments, that $\Lop$ has one negative eigenvalue very near to the ground state eigenvalue of  $L_+^{\rm sol}$ (Section \ref{ssec:free-soliton}). Further, since $\Omega_\star \approx -0.749 <0$, the perturbed  zero energy (translation mode) eigenvalue of $L_+^{\rm sol}$ is strictly negative. And finally, for $\eps>0$ and small,  there is a strictly negative eigenvalue of $\Lop$ near one of the two eigenvalues of $H_V$ (since $E_0+\gs^2 \approx -0.484 \;<\; 0$ and $E_1+\gs^2 \approx 0.172 \;>\; 0$). Therefore,  $n_-(L_+^\eps) \;\ge\; 3$. Therefore,
\[
  n_-(L_+^\eps)-n_-(L_-^\eps) \ge 3-1 = 2 \;\notin\; \{0,1\},
\]
Theorem~\ref{thm:W-GSS}(ii)(\ref{wgss-b}) implies instability of the transmission-resonance branch.
\end{itemize}

Figure~\ref{fig:spectral-E1} displays the potential, the spectral data on the $E=k^2$ axis, and the profiles $U_0$, $U_1$, and $U_\star$. Figure~\ref{fig:dyn-E1} presents a simulation analogous to that in Example 1. The solution evolves away from the transmission-resonance branch and resolves into: a part which is localized around the support of $V$, appearing to be captured by the nonlinear excited state branch (seeded by the eigenstate $(E_1,U_1)$ of $H_V$), and transporting free-soliton and dispersive  radiation parts. We believe that, by a mechanism of  nonlinear resonance of the excited state with radiation modes, the nonlinear excited state component observed in  Figure~\ref{fig:dyn-E1} will eventually decay into the nonlinear ground state channel and dispersive radiation; see \cite{sw2004,weinstein-survey:15}.

}

\begin{figure}[!htbp]
\centering
\setlength{\abovecaptionskip}{3pt}
\setlength{\belowcaptionskip}{0pt}
\begin{subfigure}[b]{0.42\linewidth}
\parbox[b][4.0cm][c]{\linewidth}{%
\centering
\begin{tikzpicture}[scale=0.73, transform shape]
  \fill[green!10] (0,-0.40) rectangle (1.9,0.40);
  \draw[-{Stealth[length=4pt,width=3pt]}, thin] (-6.5,0) -- (2.2,0)
        node[right, font=\footnotesize] {$E = k^2$};
  \draw[thin] (0,-0.24) -- (0,0.24);
  \node[below, font=\scriptsize] at (0,-0.26) {$0$};
  \node[font=\scriptsize, text=green!60!black, above] at (1.0,0.40) {cont.};
  \filldraw[figgreen, draw=black!75, line width=0.5pt] (-5.82,0) circle (3.0pt);
  \draw[gray!45, line width=0.4pt] (-5.82,0.16) -- (-5.82,0.88);
  \node[above, font=\scriptsize, text=figgreen!85!black,
        fill=white, draw=gray!30, thin, inner sep=1.2pt] at (-5.82,0.88) {$k=i\kappa_0$};
  \node[below, font=\scriptsize] at (-5.82,-0.26) {$E_0$};
  \draw[zeroblue, line width=1.6pt, line cap=round]
    (-1.95-0.16,-0.16)--(-1.95+0.16,0.16)
    (-1.95-0.16, 0.16)--(-1.95+0.16,-0.16);
  \draw[gray!45, line width=0.4pt] (-1.95,-0.16) -- (-1.95,-0.88);
  \node[below, font=\scriptsize, text=zeroblue,
        fill=white, draw=gray!30, thin, inner sep=1.2pt] at (-1.95,-0.88) {$k={-}i\gs$};
  \node[above, font=\scriptsize] at (-1.95,0.26) {$E_\star$};
  \filldraw[figgreen, draw=black!75, line width=0.5pt] (-0.57,0) circle (3.0pt);
  \draw[gray!45, line width=0.4pt] (-0.57,0.16) -- (-0.57,0.88);
  \node[above, font=\scriptsize, text=figgreen!85!black,
        fill=white, draw=gray!30, thin, inner sep=1.2pt] at (-0.57,0.88) {$k=i\kappa_1$};
  \node[below, font=\scriptsize] at (-0.57,-0.26) {$E_1$};
\end{tikzpicture}%
}%
\subcaption{}
\end{subfigure}\hfill
\begin{subfigure}[b]{0.55\linewidth}
\centering
\includegraphics[width=\linewidth]{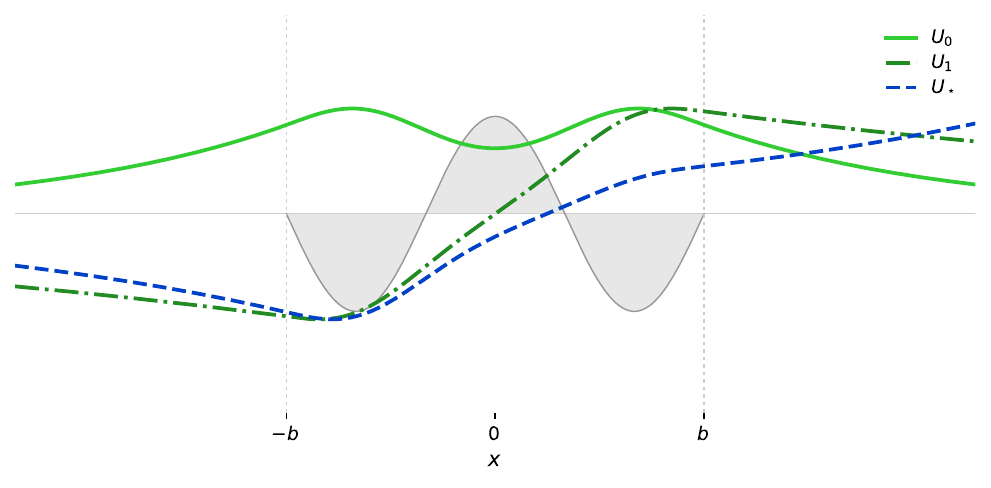}
\subcaption{}
\end{subfigure}
\caption{%
  $V(x)=5\cos\!\bigl(\tfrac{3\pi}{2}x\bigr)\,\mathbf{1}_{(-1,1)}(x)$, $b=1$.
  \textbf{(a)}~Spectral line $E=k^2$: \specpole\ $=$ $L^2$ bound states at
  $k=i\kappa_0$ ($E_0$) and $k=i\kappa_1$ ($E_1$); \speczero\ $=$
  TR zero at $k=-i\gs$ ($E_\star$); shaded band $=$ continuum $E\ge0$.
  \textbf{(b)}~Potential $V$ (gray, scaled), eigenfunctions $U_0$ (solid),
  $U_1$ (dash-dot), and resonance mode $U_\star$ (dashed);
  $U_\star$ changes sign once inside $\operatorname{supp}V$.%
}
\label{fig:spectral-E1}
\end{figure}

\begin{figure}[!htbp]
    \centering
    \setlength{\abovecaptionskip}{4pt}
    \setlength{\belowcaptionskip}{0pt}
    \includegraphics[width=0.62\linewidth]{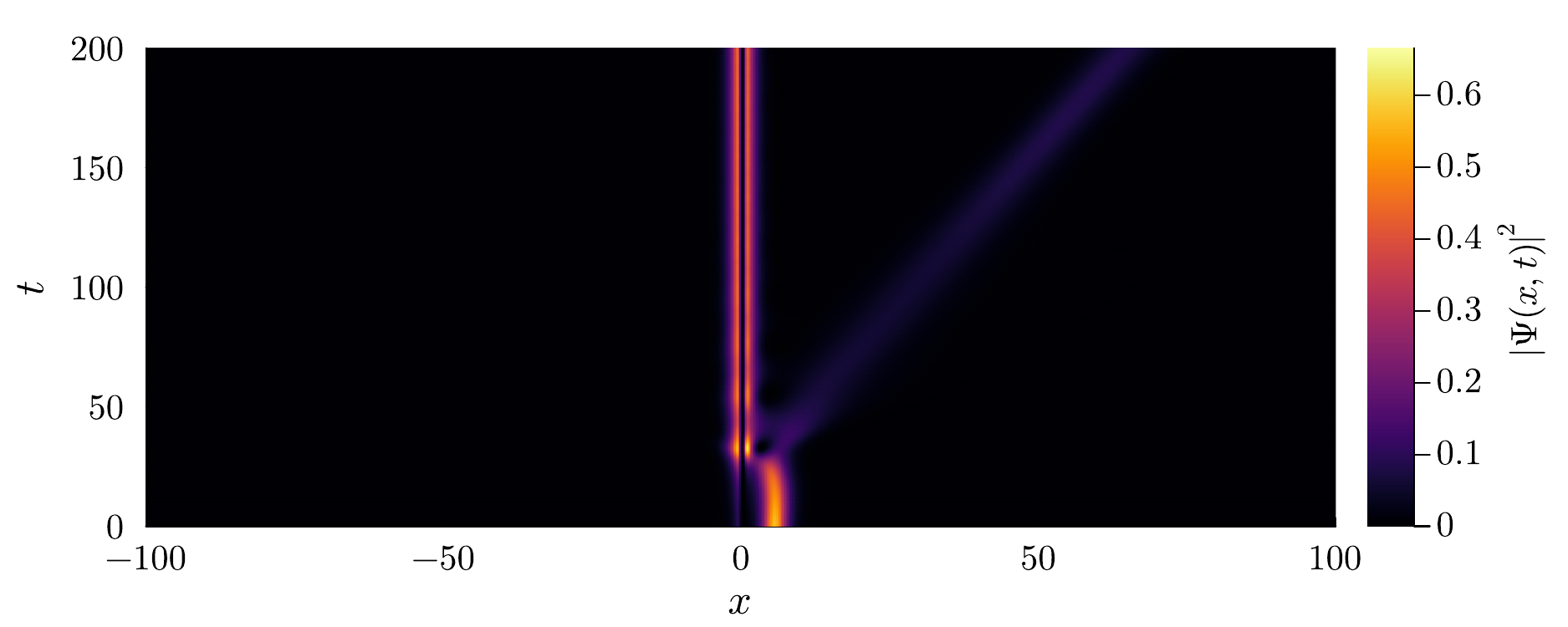}\\[1pt]
    \includegraphics[width=0.62\linewidth]{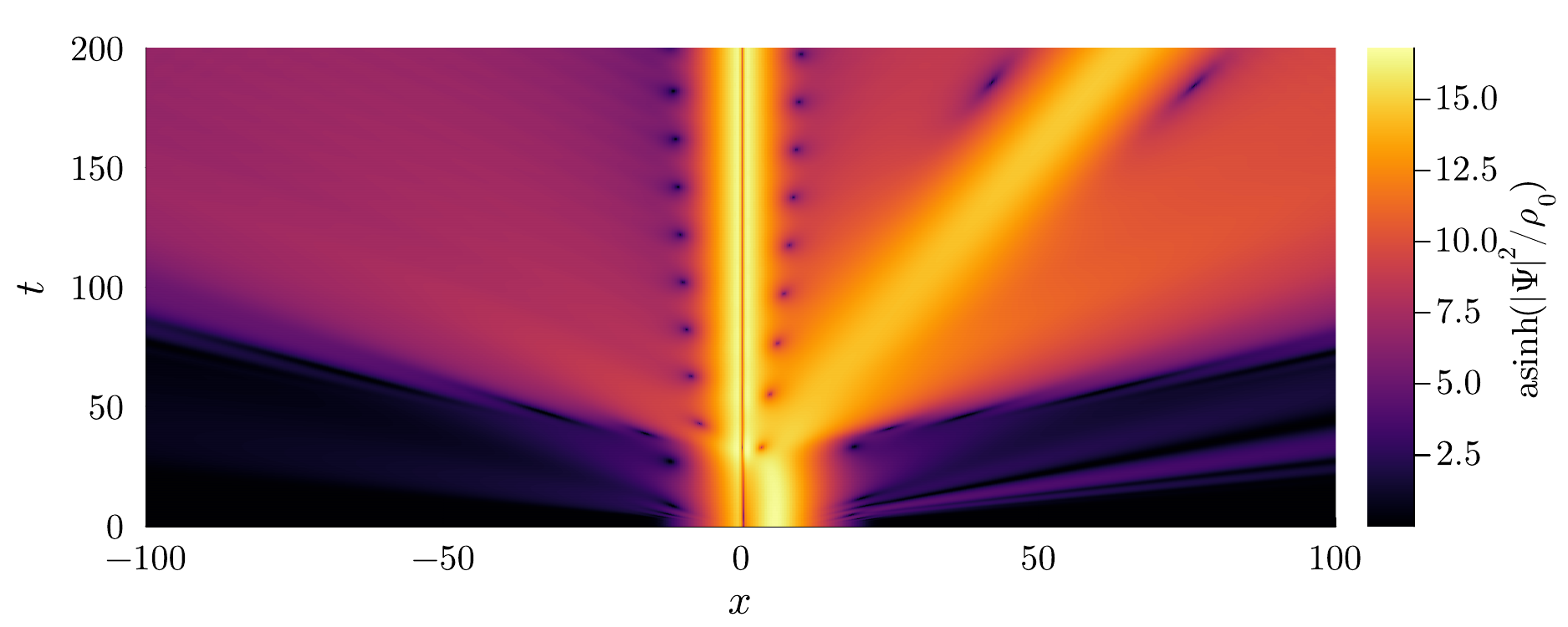}
    \caption{$V=5\cos(\tfrac{3\pi}{2}x)\mathbf 1_{(-1,1)}$. Top: raw density. Bottom: rescaled $\operatorname{asinh}(|\Psi|^2/\rho_0)$; a diffuse radiation fan around the soliton track is visible in the rescaled view but invisible in the raw view.}
    \label{fig:dyn-E1}
\end{figure}

  \section{Discussion and Future Directions}\label{sec:discussion}

Scattering and transmission resonances of $H_V$ on the imaginary axis induce nonlinear bound states of the focusing one-dimensional NLS/GP above a positive $L^2$ excitation threshold. Theorem~\ref{thm:intro-stability-sharp} gives explicit conditions for orbital stability and instability near the bifurcation point, in terms of positivity of the resonance mode and the scalar functionals $\Omega_\star$ and $\nu_\star$. Corollary~\ref{cor:small-gamma-stability} shows that sufficiently near-threshold resonances produce orbitally stable nonlinear bound states.

Our spectral analysis extends to the general class of focusing power nonlinearities $|\Psi|^{2\sigma}\Psi$ with $\sigma > 0$; the bifurcation construction of resonance-induced nonlinear bound states in~\cite{turner2026resonance} extends, as does the stability/instability analysis based on the spectral theory of Schr\"odinger operators with P\"oschl--Teller potentials; see~\cite{LandauLifshitzQM}, \cite{Flugge1971}. For perturbations of the nonlinearity, such as $|\psi|^2\psi\to |\psi|^2\psi - \eta |\psi|^4\psi$ with $\eta$ small, we expect the same conclusions to persist by perturbative arguments. More challenging would be to extend the present work to nonlinearities for which the soliton part of the nonlinear bound state, outside the support of $V$, has no explicit analytical form and no accompanying explicit spectral theory.

\subsection*{Future Directions}
Our article \cite{turner2026resonance}, on the bifurcation of resonance-induced nonlinear bound states,  outlines an agenda for studying: non-compactly supported $V$, periodic and discrete analogues, multiple bifurcations from a threshold resonance, variational characterizations, time-dependent scattering. We single out a number of directions most directly related to the present stability work.

\begin{enumerate}
\item \emph{Variational perspective.} In \cite{turner2026columbia-thesis} and in a forthcoming work \cite{tw-variational}, we make use of our bifurcation construction of resonance induced states to explore the question of when they are actually global minimizers of the Hamiltonian, $\mathcal{H}$, subject to fixed $\mathcal{N}$.
    \item \emph{``Islands of stability'' in higher dimensions.} Our original motivation  was a multidimensional setting; numerical experiments in the $d = 3$ NLS/GP reveal a resonance-induced phenomenon: when a scattering resonance is near zero energy (similar to Corollary~\ref{cor:small-gamma-stability}), ranges of optical power, $\mathcal{N}$, emerge within which  nonlinear bound states are stabilized. This scenario was discussed in \cite{rose1988bound,weinstein89}, and remains an important open problem.

    \item \emph{Linear stability analysis of the Hamiltonian flow.} In our matching approach, we obtain an explicit asymptotic expression for the small perturbed translation eigenvalue $\mu(\eps)$ of $\Lop$. Combined with a computation of $\left.d\mathcal N/dE\right|_{\eps=0}$, this yields Theorem~\ref{thm:W-GSS}. The same approach could, in principle, be applied to the linearization of the Hamiltonian time-dynamics:
    \[
      J\mathcal L^\eps \;:=\; \begin{pmatrix} 0 & L_-^\eps \\ -\Lop & 0 \end{pmatrix},
    \]
    whose spectrum governs linear stability. In principle, our approach, together with the scattering theory of  P\"oschl--Teller potentials, could be applied to study the spectrum of $J\mathcal L^\eps$. This could lead to an understanding of Hamiltonian eigenvalue collisions and possible Hamiltonian Hopf bifurcations. 

\item \emph{Center-stable manifold description of long-time dynamics.} Center-stable manifolds and global asymptotic dynamics around ground states of the focusing NLS without potential have been studied in~\cite{beceanu2008centre,nakanishi2012global}, and asymptotic-stability results for nonlinear bound states bifurcating from point spectra appear in~\cite{sw1990,sw1992,buslaev1995scattering,pillet-wayne1997,sw2004,goodman2004strong,kirr2007asymptotic,cuccagna2008asymptotic,kirr2009asymptotic,Chen2026,Cuccagna2001,CuccagnaMaeda2021}.  How these constructions extend to the unstable resonance-induced branches studied here is an open question. The numerics in Section~\ref{sec:dynamics} (Figures~\ref{fig:dyn-E0}--\ref{fig:dyn-E1}) demonstrates behavior such a theory should seek to classify.

    \item \emph{Off-axis resonances.} Our construction is restricted to resonance frequencies $k_\star \in i\R$, those for which the energy $E_\star=k_\star^2$ is negative. Typically Schr\"odinger operators also have resonances which are off the imaginary axis, complex poles and zeros of $k\mapsto r_-(k)$ in the complex plane. The work in~\cite{sacchetti2023tunnel} indicates that such off-axis resonances can  produce long-lived nonlinear states, which are stabilized by the nonlinearity, although the mechanism appears to be  different from that explored in this article. In  \cite{turner2026resonance}, distinct purely imaginary resonances, which give rise to distinct bifurcation curves terminating abruptly at $L^2$ thresholds, may merge (under deformation of the potential) to produce a single smooth bifurcation curve. An understanding of these bifurcation phenomena and their manifestations in the time-dynamics would be of great interest.
\end{enumerate}

\appendix

\section{Expansion of $D(\eps,\mu)$ for $\eps$ and $\mu$ small}
\label{sec:D-expand}


{
In this appendix we derive  Taylor expansion through second-order of $D(\eps,\mu)$, defined in ~\eqref{eq:Ddef}. This then establishes ~\eqref{eq:D-parabolic}, used in the proof of Claim~(1c).

\begin{prop}\label{prop:D-expand}
$D(\eps,\mu)$, given by \eqref{eq:Ddef}, is analytic on a neighborhood $\mathcal Q=\{|\eps|<C_1,\ |\mu|<C_1\}$ of $(0,0)$, and
 has the expansion:
\begin{equation}\label{eq:D-parabolic-derived}
  D(\eps,\mu) \;=\; -\,4\gs\,\mu \;+\; 3\,\Omega_\star\,\eps^2 \;+\; \mathcal O(\eps^3 + \eps\mu + \mu^2),\qquad (\eps,\mu)\in\mathcal Q \ ;
\end{equation}
see~\eqref{eq:D-parabolic} of Claim~(1c). Here $\Omega_\star=\Omega_\star^{\rm TR}$ is displayed in ~\eqref{eq:Omega}.
\end{prop}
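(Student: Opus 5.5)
The plan is to Taylor expand the bracket in \eqref{eq:Ddef} directly, using three facts. The first is that the P\"oschl--Teller data at $x=b$ are explicit. The second is that $\win^R$ is already known through first order from \eqref{eq:win-expand}. The third is a cancellation at first order, which lets the $\mathcal O(\eps^2)$ coefficient be read off from products of first-order quantities. Analyticity of $D$ on $\mathcal Q$ was already noted in the proof of \claimref{1c}, since $p$, $\la$, $\la_0$, $t_R$ and $\win^R$ are analytic near $(0,0)$. So only \eqref{eq:D-parabolic-derived} needs proof.

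\textbf{Step 1: an exact small parameter.} Instead of expanding $t_R$ itself, I will use
\[
s(\eps):=1-t_R(\eps)^2=\frac{\psi_\eps(b)^2}{2\la_0(\eps)^2},
\]
which holds exactly by \eqref{eq:yRL-tanh}, because $t_R=-\sqrt{1-\psi_\eps^2(b^-)/(2\la_0^2)}$. Then $t_R=-1+\tfrac{s}{2}+\tfrac{s^2}{8}+\mathcal O(s^3)$ and $s=\tfrac{\eps\Ub^2}{2\gs^2}+\mathcal O(\eps^2)$. From $\la=\sqrt{\la_0^2-\mu}$ we get $m=1-\tfrac{\mu}{2\la_0^2}+\mathcal O(\mu^2)$. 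Substituting into $p=m^2-1+3mt_R+3t_R^2$ gives
\[
p=-\tfrac32 s-\tfrac38 s^2+\tfrac{\mu}{2\gs^2}+\mathcal O(\eps\mu+\mu^2+\eps^3).
\]
The second factor gives
\[
\la_0(1-t_R^2)(3m+6t_R)=\la_0\, s\,\bigl(-3+3s\bigr)+\mathcal O(\eps\mu+\eps^3).
\]

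\textbf{Step 2: the first-order cancellation.} I write $\win^R+\la=2\la_0+(\win^R-\la_0)+(\la-\la_0)$, where $\la-\la_0=-\mu/(2\gs)+\dots$ and $\win^R-\la_0=\mathcal O(\eps+\mu)$ by \eqref{eq:win-expand} and \eqref{eq:la0}. The terms linear in $s$ inside the bracket are then $-\tfrac32 s\cdot 2\la_0$ and $+3\la_0 s$, and these cancel exactly. This is why any $\mathcal O(\eps^2)$ error in $\psi_\eps(b)^2$ never enters. The surviving linear term is $\tfrac{\mu}{2\gs^2}\cdot2\gs=\mu/\gs$, and multiplying by $-4\gs^2$ gives the $-4\gs\mu$ in \eqref{eq:D-parabolic-derived}. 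This step also shows that $D$ has no pure $\eps$ term.

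\textbf{Step 3: the $\eps^2$ coefficient.} The pure $\eps^2$ part of the bracket comes from two sources:
\[
-\tfrac32 s\,(\win^R-\la_0)\Big|_{\mu=0}\qquad\text{and}\qquad -\tfrac38 s^2\cdot 2\la_0-3\la_0 s^2 .
\]
For the first, \eqref{eq:win-expand} together with \eqref{eq:la0} gives $\win^R-\la_0$ at $\mu=0$ as $\eps$ times the quantity
\[
-\Ub^{-2}\Bigl(3\!\int U_\star^4+\dot E\bigl[\textstyle\int U_\star^2+\tfrac1{2\gs}\bigr]\Bigr)+\tfrac{\dot E}{2\gs}.
\]
Multiplying by $s\approx\eps\Ub^2/(2\gs^2)$ cancels the $\Ub^{-2}$. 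The $s^2$ terms contribute $\Ub^4$ multiplied by powers of $\gs$. I then eliminate $\dot E$ using \eqref{eq:Edot-solved}, in the form $2\mathcal B_\star\dot E=\Omega_\star-\tfrac{3}{4\gs}$. The $\dot E$-terms collapse to $\dot E\,\mathcal B_\star$, because $\int U_\star^2+\tfrac1{2\gs}-\tfrac{\Ub^2}{2\gs}=\mathcal B_\star$. The target is that, after multiplying by $-4\gs^2$, everything combines into $3\Omega_\star\eps^2$. The mixed remainders $\eps\mu$ and $\mu^2$ are harmless, because the claim only asserts an $\mathcal O(\eps\mu+\mu^2)$ error.

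The main obstacle is Step 3. It is pure bookkeeping, but the constant $\tfrac{1}{4\gs}$ in $\Omega_\star^{\rm TR}$ must come out correctly: it arises from the left boundary through $\alpha$ and through the $\tfrac{1}{2\gs}$ entering $\mathcal B_\star$ and $\dot E$. In the same way, the coefficient of $\Ub^4/(2\gs)$ must match after combining the $s\,\dot E/(2\gs)$ and $s^2$ contributions. I would first check the combination numerically against the examples of Section~\ref{ssec:mu-verification}. If a discrepancy appears, I would carry $\psi_\eps(b)^2$ and $\la_0$ to second order explicitly, to confirm that no stray $\eps^2$ term survives the cancellation of Step 2.
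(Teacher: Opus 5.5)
Your strategy is sound and organized much like the paper's: the exact cancellation of the terms linear in $s$ gives $D(0,0)=\partial_\eps D(0,0)=0$ and $\partial_\mu D(0,0)=-4\gs$, and the $\eps^2$ coefficient at $\mu=0$ then comes from \eqref{eq:win-expand} and elimination of $\dot E$. However, Step~1 contains a concrete error, and with it Step~3 cannot close.

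At $\mu=0$ one has $m\equiv1$, so $p(1,t_R)=3t_R(1+t_R)$ exactly. With $1+t_R=\tfrac s2+\tfrac{s^2}{8}+\mathcal O(s^3)$ this gives $p=-\tfrac32 s+\tfrac38 s^2+\mathcal O(s^3)$, not $-\tfrac32 s-\tfrac38 s^2$. You dropped the contribution $3(1+t_R)^2=\tfrac34 s^2+\dots$ that comes from $3t_R^2$. As a check, $s=0.1$ gives $p(1,t_R)\approx-0.1461$, against $-0.1463$ for the corrected expansion and $-0.1538$ for yours. Consequently the pure $s^2$ terms of the bracket are $\tfrac34\la_0 s^2-3\la_0 s^2=-\tfrac94\la_0 s^2$, not $-\tfrac{15}{4}\la_0 s^2$.

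With the corrected coefficient the bookkeeping closes:
\begin{itemize}
\item Using $\dot E\,\mathcal B_\star=\tfrac{\Ub^4-1}{4\gs}-\int_{-b}^bU_\star^4\,dx$, the term $-4\gs^2\cdot\bigl(-\tfrac32 s\,(\win^R-\la_0)\bigr)$ contributes $\bigl(-6\int_{-b}^bU_\star^4\,dx-\tfrac{3(\Ub^4-1)}{4\gs}\bigr)\eps^2$.
\item The term $-4\gs^2\cdot(-\tfrac94\la_0 s^2)$ contributes $\tfrac{9\Ub^4}{4\gs}\eps^2$.
\item The sum is exactly $3\Omega_\star\eps^2$.
\end{itemize}
With your coefficient, the $s^2$ terms contribute $\tfrac{15\Ub^4}{4\gs}\eps^2$ instead, and the total becomes $\bigl(3\Omega_\star+\tfrac{3\Ub^4}{2\gs}\bigr)\eps^2$. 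So the final combination, which you state only as a target, is the missing step, and as set up it would fail.

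Your proposed remedy would not find the discrepancy. Carrying $\psi_\eps(b)^2$ and $\la_0$ to second order only affects the terms linear in $s$, and your own Step~2 shows those cancel identically.

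The paper avoids second-order Taylor coefficients of $p$ and $q$ altogether. At $\mu=0$ it factors
\[
D(\eps,0)=-12\gs^2(1+t_R)\bigl[t_R(\win^R+\la_0)-\la_0(1-t_R)(1+2t_R)\bigr].
\]
Because the explicit factor $1+t_R$ is $\mathcal O(\eps)$, the bracket is needed only to first order. Either adopt that factorization or simply correct the $s^2$ coefficient of $p$.
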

\begin{proof} We write the expression for $D(\eps,\mu)$ in ~\eqref{eq:Ddef} as
\begin{equation}\label{eq:D-PWQ}
  D(\eps,\mu) \;=\; -4\gs^2\,\bigl[\,p(m,t_R)\,(\win^R+\la) - q\,\bigr]\ ,
\end{equation}
where 
$p(m,t_R)=p\big(m(\mu,\eps),t_R(\eps)\big)$ is a function of $(\eps,\mu)$ through $m=m(\eps,\mu)$ and $t_R=t_R(\eps)$, and 
\begin{equation}\label{eq:PWQ-def}
  q=q(\eps,\mu) \;:=\; \la_0(\eps)\bigl(1-t_R^2(\eps)\bigr)\bigl(3m(\eps,\mu)+6t_R(\eps)\bigr),
\end{equation}
\medskip

\noindent\underline{\emph{Analyticity of $D(\mu,\eps)$}}\ Since $-E(0)=\gs^2>0$, the maps $\la_0(\eps)=\sqrt{-E(\eps)}$, $\la(\eps,\mu)=\sqrt{\la_0^2-\mu}$ and $m=\la/\la_0$ are real-analytic near $(0,0)$. By~\eqref{eq:1ptR}, $t_R=-\sqrt{1-\psi_\eps(b)^2/2\la_0^2}$ has a real-analytic radicand equal to $1+\mathcal O(\eps)$, hence is real-analytic (in particular the logarithmic singularity of $y_R$ does not enter $D$). Finally $\win^R=\phi_{\rm in}'(b)/\phi_{\rm in}(b)$ is real-analytic in $(\eps,\mu)$ by real-analytic dependence of the IVP~\eqref{eq:phi-ivp} on its parameters, together with $\phi_{\rm in}(b)=\Ub+\mathcal O(\eps+|\mu|)\neq0$ from~\eqref{eq:phi-in-pm-b}. As $p(m,t_R)$, $q$ and $\win^R+\la$ are polynomial (resp.\ real-analytic) in these quantities, $D$ is real-analytic on $\mathcal Q$. Consequently for $(\mu,\eps)$ near $(0,0)$:
\begin{equation}\label{eq:D-taylor-abstract}
  D(\eps,\mu) = D(0,0) + \partial_\eps D(0,0)\,\eps + \partial_\mu D(0,0)\,\mu + \tfrac12\partial_\eps^2 D(0,0)\,\eps^2 + \mathcal O(\eps^3+\eps\mu+\mu^2).
\end{equation}
We now compute the Taylor coefficients through second order by directly computing derivatives of $D(\mu,\eps)$ at $(0,0)$, and using asymptotic formulae for $\mu$ and $\eps$ small.

\medskip
\noindent\underline{\emph{Preliminary expansions}}\ Recall from~\eqref{eq:m-expand-b},~\eqref{eq:la-expand},~\eqref{eq:win-expand} and the tail expansion
\begin{equation}\label{eq:1ptR}
  t_R(\eps) := t(\xi)\Big|_{\xi=b-y_R(\eps)} = -\sqrt{1-\tfrac{\psi_\eps(b)^2}{2\la_0^2(\eps)}}
  \;=\; -1 + \frac{\Ub^2}{4\gs^2}\,\eps + \mathcal O(\eps^2)
\end{equation}
(obtained by substituting~\eqref{eq:psi_eps-expand} and~\eqref{eq:la0}) that 
\[ m=1-\tfrac{\mu}{2\gs^2}+\mathcal O(\eps\mu+\mu^2)\quad {\rm and}\quad  t_R=-1+\tfrac{\Ub^2}{4\gs^2}\eps+\mathcal O(\eps^2).\] Substituting these into $p(m,t)=m^2-1+3mt+3t^2$ (whose gradient at $(m,t)=(1,-1)$ is $(\partial_m p,\partial_t p)=(2m+3t,\,3m+6t)|_{(1,-1)}=(-1,-3)$) gives
\begin{equation}\label{eq:wout-den-R}
  p(m,t_R) \;=\; \frac{\mu}{2\gs^2} - \frac{3\Ub^2}{4\gs^2}\,\eps + \mathcal O(\eps^2 + \eps\mu + \mu^2)
  \;=\; -\,\frac{3\Ub^2\eps - 2\mu}{4\gs^2} + \mathcal O(\eps^2 + \eps\mu + \mu^2)\ .
\end{equation}
Further, using $1-t_R^2=(1-t_R)(1+t_R)=\tfrac{\Ub^2}{2\gs^2}\eps+\mathcal O(\eps^2)$, we have 
\begin{equation}\label{eq:wout-num}
  q(\mu,\eps) \;=\; \la_0(1-t_R^2)(3m + 6t_R) \;=\; -\frac{3\Ub^2}{2\gs}\,\eps + \mathcal O(\eps^2 + \eps\mu + \mu^2).
\end{equation}
In particular, we have  $p(m,t_R)\big|_{(0,0)}=0$, $q(0,0)=0$, and $(\win^R+\la)\big|_{(0,0)}=\gs+\gs=2\gs$, and  first-order derivative information:
\begin{equation}\label{eq:PWQ-derivs}
  \partial_\eps p\big|_{(0,0)} = -\frac{3\Ub^2}{4\gs^2},\quad
  \partial_\mu p\big|_{(0,0)} = \frac{1}{2\gs^2},\quad
  \partial_\eps q\big|_{(0,0)} = -\frac{3\Ub^2}{2\gs},\quad
  \partial_\mu q\big|_{(0,0)} = 0,
\end{equation}
with the abbreviation $\partial p:=\partial\,[\,p(m,t_R)\,]$.
\medskip

\noindent\underline{\emph{Expressions for the derivatives of $D$.}}\ Differentiating~\eqref{eq:D-PWQ} by the product rule (with $p:=p(m,t_R)$),
\begin{subequations}\label{eq:D-partials}
\begin{align}
  \partial_\eps D &= -4\gs^2\bigl(\partial_\eps p\,(\win^R+\la) + p\,\partial_\eps(\win^R+\la) - \partial_\eps q\bigr), \label{eq:D-eps}\\
  \partial_\mu D &= -4\gs^2\bigl(\partial_\mu p\,(\win^R+\la) + p\,\partial_\mu(\win^R+\la) - \partial_\mu q\bigr), \label{eq:D-mu}\\
  \partial_\eps^2 D &= -4\gs^2\bigl(\partial_\eps^2 p\,(\win^R+\la) + 2\,\partial_\eps p\,\partial_\eps(\win^R+\la) + p\,\partial_\eps^2(\win^R+\la) - \partial_\eps^2 q\bigr). \label{eq:D-epseps}
\end{align}
\end{subequations}
Since $p(m,t_R)\big|_{(0,0)}=0$, terms containing  an undifferentiated factor $p$ drop out when~\eqref{eq:D-partials} is evaluated at $(0,0)$. Thus, we can then read off the first order Taylor coefficients of $D(\mu,\eps)$ from the expansions~\eqref{eq:wout-den-R}--\eqref{eq:PWQ-derivs}.
\medskip

\noindent\underline{\emph{$D(0,0)=0$.}}\ By~\eqref{eq:D-PWQ}, $D(0,0)=-4\gs^2\bigl(p(m,t_R)|_{(0,0)}\,(\win^R+\la)|_{(0,0)}-q(0,0)\bigr)=-4\gs^2(0\cdot2\gs-0)=0$.

\medskip
\noindent\underline{\emph{$\partial_\eps D(0,0)=0$.}}\ Evaluating~\eqref{eq:D-eps} at $(0,0)$ and dropping the $p\,\partial_\eps(\win^R+\la)$ term, yields
\begin{equation}
  \partial_\eps D(0,0) = -4\gs^2\bigl(\partial_\eps p\big|_{(0,0)}\,(\win^R+\la)\big|_{(0,0)} - \partial_\eps q\big|_{(0,0)}\bigr)
  = -4\gs^2\Bigl(-\tfrac{3\Ub^2}{4\gs^2}\cdot 2\gs + \tfrac{3\Ub^2}{2\gs}\Bigr) = 0.
\end{equation}

\medskip
\noindent\underline{\emph{$\partial_\mu D(0,0)=-4\gs$.}}\ Evaluating~\eqref{eq:D-mu} at $(0,0)$ and dropping the $p\,\partial_\mu(\win^R+\la)$ term, yields
\begin{equation}
  \partial_\mu D(0,0) = -4\gs^2\bigl(\partial_\mu p\big|_{(0,0)}\,(\win^R+\la)\big|_{(0,0)} - \partial_\mu q\big|_{(0,0)}\bigr)
  = -4\gs^2\Bigl(\tfrac{1}{2\gs^2}\cdot 2\gs - 0\Bigr) = -4\gs.
\end{equation}

Summarizing, 
\begin{equation}\label{eq:D-expanded}
  D(\eps,\mu) \;=\; -\,4\gs\,\mu \;+\; \mathcal O(\eps^2+\eps\mu+\mu^2),
\end{equation}
where
\begin{equation}\label{eq:D-derivatives}
  D(0,0)=0,\qquad \partial_\eps D(0,0)=0,\qquad \partial_\mu D(0,0)=-4\gs\neq 0.
\end{equation}

\medskip
\noindent\underline{\emph{$\tfrac12\partial_\eps^2 D(0,0)=3\Omega_\star$.}}\ We evaluate  $\tfrac12\partial_\eps^2 D(0,0)$ by computing  the order $\eps^2$ Taylor coefficient of $D(\eps,0)$.

For $\mu=0$ one has $\la(0,\eps)=\la_0(\eps)$, hence $m(0,\eps) \equiv1$. So, $p(1,t_R)=3t_R(1+t_R)$ and $q=3\la_0(1-t_R^2)(1+2t_R)$; then~\eqref{eq:Ddef} gives
\begin{equation}\label{eq:D-mu0}
  D(\eps,0) \;=\; -12\gs^2\,(1+t_R)\,\Bigl[\,t_R\bigl(\win^R(\eps,0)+\la_0\bigr) - \la_0\,(1-t_R)(1+2t_R)\,\Bigr].
\end{equation}
Both factors vanish at $\eps=0$ (there $t_R=-1$ and $\win^R=\la_0=\gs$), consistent with $D(\eps,0)=\mathcal O(\eps^2)$. By~\eqref{eq:1ptR}, $1+t_R=\tfrac{\Ub^2}{4\gs^2}\eps+\mathcal O(\eps^2)$; inserting this together with $\la_0(\eps)=\gs-\tfrac{\dot E}{2\gs}\eps+\mathcal O(\eps^2)$ (from~\eqref{eq:la0}) and $\win^R(\eps,0)=\gs-\tfrac{\eps}{\Ub^2}\bigl(3\!\int_{-b}^b U_\star^4\,dx+\dot E(\int_{-b}^b U_\star^2\,dx+\tfrac{1}{2\gs})\bigr)+\mathcal O(\eps^2)$ (from~\eqref{eq:win-expand}), the bracket in~\eqref{eq:D-mu0} satisfies
\begin{equation}\label{eq:bracket-slope}
  \Bigl[\,\cdots\,\Bigr] \;=\; \Biggl(\frac{3\!\int_{-b}^b U_\star^4\,dx + \dot E\bigl(\int_{-b}^b U_\star^2\,dx + \tfrac{1}{2\gs}\bigr)}{\Ub^2} - \frac{\dot E}{2\gs} - \frac{3\Ub^2}{4\gs}\Biggr)\eps + \mathcal O(\eps^2).
\end{equation}
Substituting $\dot E$ from~\eqref{eq:Edot-solved} (TR case),
\begin{equation}
  \dot E = \frac{\tfrac{\Ub^4-1}{2\gs} - 2\!\int_{-b}^b U_\star^4\,dx}{2\!\int_{-b}^b U_\star^2\,dx + \tfrac{1-\Ub^2}{\gs}},
\end{equation}
the coefficient of $\eps$ in~\eqref{eq:bracket-slope} simplifies to $-\Omega_\star/\Ub^2$, with $\Omega_\star=\tfrac{2\Ub^4+1}{4\gs}-2\!\int_{-b}^b U_\star^4\,dx$ as in~\eqref{eq:Omega}. Hence
\begin{equation}\label{eq:D-mu0-expanded}
  D(\eps,0) \;=\; -12\gs^2\cdot\frac{\Ub^2}{4\gs^2}\cdot\Bigl(-\frac{\Omega_\star}{\Ub^2}\Bigr)\eps^2 + \mathcal O(\eps^3) \;=\; 3\,\Omega_\star\,\eps^2 + \mathcal O(\eps^3),
\end{equation}
so $\tfrac12\partial_\eps^2 D(0,0)=3\Omega_\star$.


This proves Proposition~\ref{prop:D-expand} and the proof of   Claim~(1c) is now complete.
\end{proof}
}%

\section{Extensions of Theorem~4.1 of \cite{turner2026resonance}}

\subsection{First order correction to $E(\eps)$; Proof of~\eqref{eq:Edot-solved} in Theorem \ref{thm:bifurcation}}\label{ssec:proof-Edot}

Set $u(x;\eps) := \psi_\eps(x)/\sqrt\eps$ on $[-b,b]$. By Theorem~\ref{thm:bifurcation},
\begin{equation}
  u(x;\eps) \;=\; U_\star(x) + \eps\,\dot u(x) + \mathcal O(\eps^2),
  \qquad
  E(\eps) \;=\; -\gs^2 + \eps\,\dot E + \mathcal O(\eps^2).
\end{equation}
Substituting into $-u'' + Vu - \eps u^3 = E(\eps)\,u$ and matching $\mathcal O(\eps)$ terms,
\begin{equation}\label{eq:Edot-lin}
  (H_V + \gs^2)\,\dot u \;=\; U_\star^3 + \dot E\,U_\star \qquad \text{on } [-b,b].
\end{equation}

Outside $[-b,b]$, $V=0$ and $\psi_\eps$ solves $-\psi_\eps'' - \psi_\eps^3 = E(\eps)\psi_\eps$. Multiplying by $\psi_\eps'$ and integrating to infinity gives
\begin{equation}\label{eq:first-integral}
  (\psi_\eps')^2 + E(\eps)\,\psi_\eps^2 + \tfrac12\,\psi_\eps^4 \;=\; 0 \qquad (|x|>b),
\end{equation}
which extends to $x = \pm b$ by $C^1$ continuity of $\psi_\eps$. Substituting $\psi_\eps = \sqrt\eps\,u$ in~\eqref{eq:first-integral} gives
\begin{equation}
  \eps(u')^2 + E(\eps)\,\eps u^2 + \tfrac12\eps^2 u^4 \;=\; 0.
\end{equation}
Dividing by $\eps$ and differentiating in $\eps$ at $\eps = 0$, with $u(\,\cdot\,;0) = U_\star$, $E(0)=-\gs^2$, and $U_\star'(b)=\gs\Ub$ from~\eqref{eq:TR-BC},
\begin{equation}\label{eq:Edot-BC-b}
  \dot u'(b) - \gs\,\dot u(b) \;=\; -\frac{\dot E\,\Ub}{2\gs} - \frac{\Ub^3}{4\gs}.
\end{equation}
The same computation at $x = -b$ gives
\begin{equation}\label{eq:Edot-BC-a}
  \dot u'(-b) - \gs\,\dot u(-b) \;=\; -\frac{\dot E\,U_\star(-b)}{2\gs} - \frac{U_\star(-b)^3}{4\gs}.
\end{equation}
The TR parametrization $\psi_\eps(-b)^2 = \eps$ gives $u(-b;\eps)^2 \equiv 1$. Differentiating at $\eps=0$,
\begin{equation}
  2\,U_\star(-b)\,\dot u(-b) \;=\; 0,
\end{equation}
so the normalization $U_\star(-b)=1$ from~\eqref{eq:resonance-normalization} forces $\dot u(-b)=0$. Hence~\eqref{eq:Edot-BC-a} reduces to
\begin{equation}\label{eq:Edot-alpha}
  \dot u'(-b) \;=\; -\frac{\dot E\,U_\star(-b)}{2\gs} - \frac{U_\star(-b)^3}{4\gs}.
\end{equation}

Apply Lemma~\ref{prop:boundary-integral} to $f = \dot u$, $g = U_\star^3 + \dot E\,U_\star$ (the right side of~\eqref{eq:Edot-lin}), and $\alpha = \dot u'(-b)$:
\begin{equation}\label{eq:Edot-wronskian}
  \dot u'(b) - \gs\,\dot u(b) \;=\; \frac{1}{\Ub}\Bigl[\dot u'(-b)\,U_\star(-b) - \int_{-b}^b U_\star^4\,dx - \dot E\!\int_{-b}^b U_\star^2\,dx\Bigr].
\end{equation}
By~\eqref{eq:Edot-alpha},
\begin{equation}
  \dot u'(-b)\,U_\star(-b) \;=\; -\frac{\dot E\,U_\star(-b)^2}{2\gs} - \frac{U_\star(-b)^4}{4\gs}.
\end{equation}
Multiplying~\eqref{eq:Edot-BC-b} by $\Ub$ and equating with $\Ub$ times~\eqref{eq:Edot-wronskian}, then collecting $\dot E$,
\begin{equation}
  \dot E\!\left[\int_{-b}^b U_\star^2\,dx - \frac{\Ub^2 - U_\star(-b)^2}{2\gs}\right]
  \;=\; \frac{\Ub^4 - U_\star(-b)^4}{4\gs} - \!\int_{-b}^b U_\star^4\,dx.
\end{equation}
The left bracket is $\mathcal B_\star^{\rm TR}$ from~\eqref{eq:Bstar}; the right side equals $\tfrac12\bigl(\Omega_\star^{\rm TR} - 3 U_\star(-b)^4/(4\gs)\bigr)$ from~\eqref{eq:Omega}. With $U_\star(-b)=1$, this is the TR branch of~\eqref{eq:Edot-solved}.

For SR (Remark~\ref{rem:SR-stability}), the same algebra with $U_\star(0) = 0$ and $\dot u(0) = \dot u'(0) = 0$ yields the SR branch $\dot E = \Omega_\star^{\rm SR}/(2\mathcal B_\star^{\rm SR})$ of~\eqref{eq:Edot-solved}. \qed

\subsection{Soliton centering; Proof of~\eqref{eq:yRL-tanh},~\eqref{eq:yL-tanh} in Theorem \ref{thm:bifurcation}}\label{ssec:proof-centers}

By Theorem~\ref{thm:bifurcation}, on the right exterior $\{x>b\}$,
\begin{equation}
  \psi_\eps(x) \;=\; \Scal\bigl(x - y_R(\eps);\,E(\eps)\bigr), \qquad \Scal(\xi;E) = \sqrt{2|E|}\,\sech\bigl(\la_0(\eps)\,\xi\bigr),
\end{equation}
with $\la_0(\eps) = \sqrt{|E(\eps)|}$ from~\eqref{eq:la0}. Continuity at $x = b$ gives
\begin{equation}\label{eq:center1}
  \sech^2\!\bigl(\la_0(\eps)(b - y_R(\eps))\bigr) \;=\; \frac{\psi_\eps(b)^2}{2\,\la_0(\eps)^2}.
\end{equation}
By Theorem~\ref{thm:bifurcation}(iv), $y_R(\eps) \to +\infty$ as $\eps \to 0^+$, hence $b - y_R(\eps) < 0$ for $\eps$ small, so
\begin{equation}
  \tanh\!\bigl(\la_0(\eps)(b - y_R(\eps))\bigr) \;<\; 0.
\end{equation}
The identity $\sech^2 + \tanh^2 = 1$ and \eqref{eq:center1} then give
\begin{equation}
  \tanh\!\bigl(\la_0(\eps)(b - y_R(\eps))\bigr) \;=\; -\sqrt{1 - \frac{\psi_\eps(b)^2}{2\,\la_0(\eps)^2}},
\end{equation}
which inverts to~\eqref{eq:yRL-tanh}. The same argument at $x = -b$, using the TR parametrization $\psi_\eps(-b)^2 = \eps$, yields~\eqref{eq:yL-tanh}.

For SR, the left tail is absent (Remark~\ref{rem:SR-stability}); only \eqref{eq:yRL-tanh} applies, with the same derivation. \qed

\section{A potential satisfying the conditions of Corollary~\ref{cor:small-gamma-stability}}\label{ssec:cor-nonvacuity}

\bigskip

To construct potentials which satisfy the transmission-resonance smallness condition $\gs < \Gamma_{\rm TR}(M,b)$ of Corollary~\ref{cor:small-gamma-stability}, we observe that:
\begin{itemize}
  \item Any smooth and strictly positive function $U(x)$ on $[-b,b]$, such that $U'(\pm b)=0$, generates a potential $V(x)=U''(x)/U(x)$ for which $E_\star=0$, $U_\star(x)=U(x)$ is a ``threshold'' zero-energy resonance.
  \item Small perturbations of $V$ yield resonance branches with $\gs>0$ arbitrarily small while $\| V \|_{L^1}$ remains bounded below.
\end{itemize}

\begin{prop}[TR threshold-perturbation construction]\label{prop:threshold-construction}
Let $b>0$ and let $U\in C^2([-b,b])$ satisfy $U\ge c>0$ on $[-b,b]$, $U'(\pm b)=0$, $U(-b)=1$, and $U(b)\ne 1$. Set
\begin{equation}\label{eq:V-from-U}
  V(x) := \frac{U''(x)}{U(x)} \text{ on } [-b,b], \qquad V\equiv 0 \text{ outside.}
\end{equation}
Then:
\begin{enumerate}[label=\textup{(\roman*)}, leftmargin=2em]
  \item $(\gs,U_\star)=(0,U)$ is a threshold/zero-energy resonance of $H_V$, with $U$ extended by $U(\pm b)$ outside $[-b,b]$.
  \item Let $W\in L^\infty(\R)$ with $\supp W\subset[-b,b]$ and
  \begin{equation}\label{eq:IW-def}
    \mathcal I(W) \;:=\; \int_{-b}^{b} W\,U^2\,dx \;\ne\; 0;
  \end{equation}
  by replacing $W$ with $-W$ if necessary we may assume the sign convention $\mathcal I(W)>0$. Then there exists $\tau_0>0$ such that the family $V_\tau:=V+\tau W$ supports, for every $\tau\in(0,\tau_0)$, a TR resonance pair $(\gs(\tau),U_\star^{(\tau)})$ with $\gs(\tau)>0$, $U_\star^{(\tau)}>0$ on $[-b,b]$, $\gs(\cdot)$ continuous on $(0,\tau_0)$, and $\gs(\tau)\to 0^+$ as $\tau\to 0^+$.
  \item For $\tau$ in this neighborhood, sufficiently small, $\gs(\tau)<\Gamma_{\rm TR}(2\|V\|_{L^1},b)$, and Corollary~\ref{cor:small-gamma-stability} applies: $\psi_\eps^{(\tau)}$ is orbitally stable for $\eps>0$ small.
\end{enumerate}
\end{prop}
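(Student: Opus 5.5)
The plan is to treat the resonance problem as a Robin boundary value problem on $[-b,b]$ and perturb off the threshold resonance with the implicit function theorem.

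\emph{Part (i).} This is immediate. Since $U\ge c>0$, the potential $V=U''/U$ is bounded and $-U''+VU=0$ on $[-b,b]$. The boundary conditions \eqref{eq:TR-BC} with $\gs=0$ read $U'(\pm b)=0$, and these hold by hypothesis. Extending $U$ by the constants $U(\pm b)$ gives a $C^1$ solution of $H_VU=0$ on $\R$. Since $e^{\pm\gs x}$ with $\gs=0$ is constant, this is exactly the threshold transmission resonance.

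\emph{Part (ii).} Introduce the shooting function
\[
F(\gamma,\tau):=u'(b;\gamma,\tau)-\gamma\,u(b;\gamma,\tau),
\]
where $u(\cdot;\gamma,\tau)$ solves $(H_{V_\tau}+\gamma^2)u=0$ on $[-b,b]$ with $u(-b)=1$ and $u'(-b)=\gamma$. Then $F$ is real analytic in $(\gamma,\tau)$, $F(0,0)=0$, and a TR pair is exactly a zero of $F$ with $\gamma>0$.

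The derivatives of $F$ come from a Wronskian identity in the spirit of Lemma~\ref{prop:boundary-integral}. Compare $u$ with $U$, set $W_r=Uu'-U'u$, and note $W_r'=(\tau W+\gamma^2)Uu$. Integrating and using $U'(\pm b)=0$ gives
\[
U(b)\,u'(b)-U(-b)\,\gamma=\int_{-b}^{b}(\tau W+\gamma^2)Uu\,dx .
\]
At leading order $u\approx U$, so
\[
U(b)\,F(\gamma,\tau)=\gamma\bigl(1-U(b)^2\bigr)+\tau\,\mathcal I(W)+\mathcal O(\gamma^2+\tau^2+\gamma\tau).
\]
Hence $\partial_\gamma F(0,0)=(1-U(b)^2)/U(b)\neq0$, because $U(b)\ne1$ and $U(b)>0$. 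The implicit function theorem then gives an analytic curve
\[
\gamma(\tau)=\frac{\tau\,\mathcal I(W)}{U(b)^2-1}+\mathcal O(\tau^2).
\]

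\emph{Main obstacle: the sign of $\gs(\tau)$.} The sign of $\gamma(\tau)$ for $\tau>0$ is $\operatorname{sgn}(U(b)^2-1)$, given the convention $\mathcal I(W)>0$. If $U(b)>1$ we are done. If $U(b)<1$, I would either take $\tau<0$ (which amounts to replacing $W$ by $-W$, so the sign convention should be chosen with this in mind), or apply the reflection of Remark~\ref{rem:gamma-sign}. Either way we land in a branch with $\gs>0$, so the statement needs the sign convention read modulo this adjustment.

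Positivity of $U_\star^{(\tau)}=u(\cdot;\gamma(\tau),\tau)$ follows since it is uniformly $C^0$-close to $U\ge c$. Continuity of $\gs(\tau)$ and $\gs(\tau)\to0^+$ come directly from the implicit function theorem.

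\emph{Part (iii).} Note $\|V_\tau\|_{L^1}\le\|V\|_{L^1}+\tau\|W\|_{L^1}\le2\|V\|_{L^1}$ for small $\tau$, assuming $V\not\equiv0$; otherwise the bound $M$ is fixed to absorb the $W$ term. Next, $U_\star^{(\tau)}(b)\to U(b)\neq1$. So either $U_\star^{(\tau)}(b)\le1$ for all small $\tau$, or $U_\star^{(\tau)}(b)^2-1\ge A_0:=\tfrac12(U(b)^2-1)>0$. This is the dichotomy \eqref{eq:TR-Vside-dichotomy}.

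The remaining hypotheses of Corollary~\ref{cor:small-gamma-stability} are checked as follows. Hypothesis \ref{H:positivity} holds by part (ii). For \ref{H:Bstar} and \ref{H:Omega}, note that as $\gs\to0$ the $1/(2\gs)$ terms dominate $\mathcal B_\star$ and $\Omega_\star$. Thus $\mathcal B_\star\sim(1-U(b)^2)/(2\gs)\ne0$, and $\Omega_\star-3/(4\gs)\sim(2U(b)^4-2)/(4\gs)\ne0$. Finally, choosing $\tau$ small enough that $\gs(\tau)<\Gamma_{\rm TR}$ lets us invoke Corollary~\ref{cor:small-gamma-stability}.
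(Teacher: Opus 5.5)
Your proposal is correct and follows essentially the same route as the paper. Part (i) is a direct check. Part (ii) uses the shooting function $F(\gamma,\tau)=u'(b)-\gamma u(b)$ and the implicit function theorem, with nondegeneracy $\partial_\gamma F(0,0)=(1-U(b)^2)/U(b)\neq0$; your Wronskian-with-$U$ computation is the same calculation as the paper's variation of parameters with the second solution $W_1$. Part (iii) uses the dichotomy together with the $1/\gs$-dominance of $\mathcal B_\star^{\rm TR}$ and $\Omega_\star^{\rm TR}-\tfrac{3}{4\gs}$.

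The sign problem you flag is real: with the convention $\mathcal I(W)>0$, $\gs(\tau)$ is positive for $\tau>0$ only when $U(b)>1$. The paper's proof resolves this exactly as your first option does, by choosing the sign of $W$ so that $\gs(\tau)>0$ rather than so that $\mathcal I(W)>0$.
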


\begin{proof}
\emph{Part (i).} On $[-b,b]$,
\begin{equation}
  H_V U \;=\; -U'' + \frac{U''}{U}\,U \;=\; 0,
\end{equation}
so $U$ solves the resonance equation. The constant extension outside $[-b,b]$ is $C^1$ since $U'(\pm b)=0$, is bounded, and is not in $L^2$. At $\gs=0$ the TR condition~\eqref{eq:TR-BC} reduces to $U'(\pm b)=0$, which holds by hypothesis.

\medskip
\emph{Part (ii).} Let $\phi_{\gs,\tau}$ be the IVP solution of
\begin{equation}
  \phi'' \;=\; (V_\tau+\gs^2)\,\phi,
  \qquad
  \phi(-b)=1,
  \qquad
  \phi'(-b)=\gs,
\end{equation}
which automatically matches the left-side TR condition $\phi'(-b)=\gs\,\phi(-b)$ in~\eqref{eq:TR-BC}. Define the right-side residual
\begin{equation}
  F(\gs,\tau) \;:=\; \phi_{\gs,\tau}'(b) - \gs\,\phi_{\gs,\tau}(b).
\end{equation}
Then $F(\gs,\tau)=0$ iff $V_\tau$ admits a TR resonance with parameter $\gs$ and eigenfunction $\phi_{\gs,\tau}$, and at threshold
\begin{equation}
  F(0,0) \;=\; U'(b) \;=\; 0.
\end{equation}

Let $W_1$ be the second IVP solution of the unperturbed equation $\phi''=V\phi$ with
\begin{equation}
  W_1(-b)=0, \qquad W_1'(-b)=1.
\end{equation}
The Wronskian $W[U,W_1]$ is constant in $x$ and equals $1$ at $x=-b$, so at $x=b$, using $U'(b)=0$,
\begin{equation}
  W_1'(b) \;=\; \frac{1}{U(b)}.
\end{equation}
A variation-of-parameters expansion of $F$ around $(0,0)$ then yields
\begin{equation}\label{eq:F-expansion}
  F(\gs,\tau)
  \;=\; \frac{1-U(b)^2}{U(b)}\,\gs
   \;+\; \frac{\mathcal I(W)}{U(b)}\,\tau
   \;+\; O(\gs^2+\gs\tau+\tau^2).
\end{equation}
Under the hypothesis $U(b)\neq 1$, the $\gs$-coefficient is nonzero, so the implicit function theorem applies to $F=0$ directly and produces a $C^1$ branch
\begin{equation}
  \gs(\tau) \;=\; -\frac{\mathcal I(W)}{1-U(b)^2}\,\tau \;+\; O(\tau^2)
\end{equation}
through $(0,0)$. After replacing $W\mapsto -W$ if necessary so that $\gs(\tau)>0$ on one side of $\tau=0$, we have $\gs(\tau)\to 0^+$ continuously as $\tau\to 0^+$. Setting $U_\star^{(\tau)}:=\phi_{\gs(\tau),\tau}$, the bound $U\ge c>0$ and continuous dependence give $U_\star^{(\tau)}>0$ on $[-b,b]$ for $\tau$ small.

\medskip
\emph{Part (iii).} The bound
\begin{equation}
  \|V_\tau\|_{L^1} \;\le\; \|V\|_{L^1} + |\tau|\,\|W\|_{L^1}
\end{equation}
is uniform on $(0,\tau_0)$. The hypothesis~\eqref{eq:TR-Vside-dichotomy} of
Corollary~\ref{cor:small-gamma-stability} is automatically verified for
$\tau$ small: by continuous dependence, $U_\star^{(\tau)}(b) \to U(b)$ as
$\tau \to 0^+$, so for $\tau$ in a neighborhood of $0$
\begin{itemize}
\item \emph{Case I} ($U(b) < 1$): $U_\star^{(\tau)}(b) < 1$ for $\tau$
  small, so the first alternative of~\eqref{eq:TR-Vside-dichotomy} holds.
\item \emph{Case II} ($U(b) > 1$): $U_\star^{(\tau)}(b)^2 - 1 \ge A_0$
  for $\tau$ small, with $A_0 := (U(b)^2 - 1)/2 > 0$, so the second
  alternative holds.
\end{itemize}
The remaining hypotheses \ref{H:Bstar} and \ref{H:Omega} of
Theorem~\ref{thm:intro-stability-sharp} hold along the family in either case.
Writing $A^{(\tau)} := U_\star^{(\tau)}(b)^2 - 1$ and using the identity
\[
  \Omega_\star^{\rm TR}-\tfrac{3}{4\gs} = \frac{A(A+2)}{2\gs} - 2 I_4,
\]
we obtain:
\begin{itemize}
\item In Case~I, $A^{(\tau)} \le 0$ gives
  $\mathcal B_\star^{\rm TR}(\tau) = I_2 - A^{(\tau)}/(2\gs) \ge I_2 > 0$
  and $\Omega_\star^{\rm TR}(\tau) - \tfrac{3}{4\gs(\tau)}
  \le -2 I_4 < 0$, both bounded from $0$ uniformly in $\gs(\tau)$.
\item In Case~II, $A^{(\tau)} \ge A_0$ gives the divergent leading
  behavior $\mathcal B_\star^{\rm TR}(\tau) \le -A_0/(2\gs(\tau)) + O(1)$
  and $\Omega_\star^{\rm TR}(\tau) - \tfrac{3}{4\gs(\tau)}
  \ge A_0(A_0+2)/(2\gs(\tau)) - O(1)$,
  both nonzero for $\tau$ small.
\end{itemize}
Either way, $\mathcal B_\star^{\rm TR}\neq 0$ and
$\Omega_\star^{\rm TR}-\tfrac{3}{4\gs}\neq 0$ for $\tau$ small.
In the non-degenerate cases, $\Gamma_{\rm TR}$ is bounded below by a
positive constant on $\tau\in(0,\tau_0)$, and since $\gs(\tau)\to 0^+$,
shrinking $\tau_0$ ensures
\begin{equation}
  \gs(\tau) \;<\;
  \begin{cases}
    \Gamma_{\rm TR}\bigl(b, \|V_\tau\|_{L^1}\bigr) & \text{if } U(b) < 1,\\[2pt]
    \Gamma_{\rm TR}\bigl(b, \|V_\tau\|_{L^1}, A_0\bigr) & \text{if } U(b) > 1.
  \end{cases}
\end{equation}
\end{proof}

\begin{rem}[Symmetric wells: $U(b)=1$]\label{rem:Ub-equal-1}
The hypothesis $U(b)\ne 1$ excludes the symmetric case $U(b)=U(-b)=1$ (typical of even potentials $V$). There the $\gs$-coefficient in~\eqref{eq:F-expansion} vanishes, the bifurcation becomes parabolic ($\gs(\tau)\sim\sqrt{|\tau|}$), and the leading $1/\gs$ blow-up of $\mathcal B_\star^{\rm TR}$ and of $\Omega_\star^{\rm TR}-\tfrac{3}{4\gs}$ disappears, so the verifications of~\ref{H:Bstar}, \ref{H:Omega}, and the dichotomy~\eqref{eq:TR-Vside-dichotomy} all require a separate finer analysis. We do not pursue this case here.

This restriction is specific to the TR side: the SR side of Corollary~\ref{cor:small-gamma-stability} carries no analog of the dichotomy~\eqref{eq:TR-Vside-dichotomy}, since the half-line BCs are asymmetric ($U_\star(0)=0$, $U_\star'(b)=\gs\,U_\star(b)$) and the analog of $(1-U(b)^2)$ does not arise. The SR construction in Section~\ref{sssec:half-line-example} therefore proceeds with no such hypothesis on $U(b)$.
\end{rem}

\subsubsection{Full-line (TR) example}\label{sssec:full-line-example}
To exhibit a potential to which Corollary~\ref{cor:small-gamma-stability}
applies in the TR case, we use a deliberately asymmetric threshold
profile $U$ with $U(b) > 1$, so that the perturbed branch enters the
second alternative of~\eqref{eq:TR-Vside-dichotomy} with gap constant
$A_0 := (U(b)^2 - 1)/2 > 0$ inherited from the threshold profile. Fix $b = 1$ and set
\begin{equation}\label{eq:Uflat}
  U_\flat(x) \;:=\; \tfrac{3}{2} - \tfrac{3}{10}\cos\!\bigl(\tfrac{\pi(x+1)}{2}\bigr) - \tfrac{1}{5}\cos(\pi(x+1))
  \quad \text{on } [-1,1],
\end{equation}
extended outside $[-1,1]$ as the constants $U_\flat(\pm 1)$.
A direct check gives $U_\flat \in [1,\,\tfrac{61}{32}] \subset [1, 2]$, $U_\flat(-1) = 1$,
$U_\flat(1) = \tfrac{8}{5} = 1.6$, and $U_\flat'(\pm 1) = 0$
(both sine terms vanish at the endpoints). Hence
Proposition~\ref{prop:threshold-construction} applies with $U = U_\flat$,
and the corresponding potential
\begin{equation}\label{eq:Vflat}
  V_\flat(x) \;:=\; \frac{U_\flat''(x)}{U_\flat(x)}
  \;=\; \frac{(3\pi^2/40)\cos\!\bigl(\tfrac{\pi(x+1)}{2}\bigr) + (\pi^2/5)\cos(\pi(x+1))}{U_\flat(x)}
  \quad\text{on } [-1,1],
\end{equation}
with $V_\flat \equiv 0$ outside, has $\|V_\flat\|_{L^1}\approx 1.89$
and supports the threshold/zero-energy resonance $(\gs,U_\star) = (0, U_\flat)$.
Crucially, the threshold value $U_\flat(b)^2 = 2.56$ satisfies
$|U_\flat(b)^2 - 1| = 1.56 \ne 0$, the \emph{non-degenerate} branch of
Proposition~\ref{prop:threshold-construction}(ii). Figure~\ref{fig:cor-nonvacuity}(a)
shows $V_\flat$ and $U_\flat$ at threshold.

\smallskip\noindent\emph{Perturbation family.} Take $W = -V_\flat$ in
Proposition~\ref{prop:threshold-construction}
(the sign $-$ is chosen so that $\gs(\tau) > 0$ for $\tau > 0$;
the integration-by-parts computation $\mathcal I(V_\flat) = \int V_\flat U_\flat^2 = -\int (U_\flat')^2 < 0$
combined with $1 - U_\flat(b)^2 < 0$ pins the sign of the leading-order slope $-\mathcal I(W)/(1 - U(b)^2)$),
giving the one-parameter family
\begin{equation}
  V_\tau(x) \;:=\; (1-\tau)\,V_\flat(x),
  \qquad \tau \in [0, 0.02],
  \qquad \|V_\tau\|_{L^1} \le \|V_\flat\|_{L^1} \;=:\; M_0 \approx 1.89.
\end{equation}
By Proposition~\ref{prop:threshold-construction}(ii) (non-degenerate
branch), for small $\tau > 0$ this produces a TR pair
$(\gs(\tau), U_\star^{(\tau)})$ with $\gs(\tau) \approx 0.40\,\tau$ and
$U_\star^{(\tau)}(b)^2 \to U_\flat(b)^2 = 2.56$ as $\tau \to 0^+$. Hence
the second alternative of~\eqref{eq:TR-Vside-dichotomy} is satisfied for
$\tau$ small with $A_0 := (2.56 - 1)/2 = 0.78$. Theorem~\ref{thm:bifurcation} then seeds,
for each such pair, a bifurcation branch
$(\psi_\eps^{(\tau)}, E^{(\tau)}(\eps))$ for $\eps > 0$ small.

\smallskip\noindent\emph{Stability via Theorem~\ref{thm:intro-stability-sharp}.}
Computing $(\gs, \Omega_\star^{\rm TR}, \mathcal B_\star^{\rm TR}, \Nslope)$
numerically from~\eqref{eq:Omega},~\eqref{eq:nu} as $\tau$
varies in $(0, \tfrac{1}{2}]$, we find $\Omega_\star^{\rm TR} > 0$ and
$\Nslope < 0$ throughout, both diverging as $\tau \to 0^+$ (see
Figure~\ref{fig:cor-nonvacuity}(b)). Theorem~\ref{thm:intro-stability-sharp}
then gives orbital stability of $\psi_\eps^{(\tau)}$ for small $\eps$
on the whole interval, with the branch in $\Sreg_2^{\rm TR}$ of
Corollary~\ref{cor:stability-TR}.

\smallskip\noindent\emph{An example satisfying the hypotheses of Corollary~\ref{cor:small-gamma-stability}.}
Along this family $\|V_\tau\|_{L^1} \le M_0$ is uniformly bounded and
$\gs(\tau) \to 0^+$, and the second alternative
of~\eqref{eq:TR-Vside-dichotomy} holds with $A_0 = 0.78$, so by
Proposition~\ref{prop:threshold-construction}(iii),
for $\tau > 0$ sufficiently small
\begin{equation}
  \gs(\tau) \;<\; \Gamma_{\rm TR}(1, M_0, 0.78).
\end{equation}
Corollary~\ref{cor:small-gamma-stability} therefore applies and
independently confirms orbital stability of $\psi_\eps^{(\tau)}$ at
small $\eps$, demonstrating its non-vacuity.

\begin{figure}[!htbp]
    \centering
    \setlength{\abovecaptionskip}{4pt}
    \setlength{\belowcaptionskip}{0pt}
    \begin{subfigure}{0.49\linewidth}
        \includegraphics[width=\linewidth]{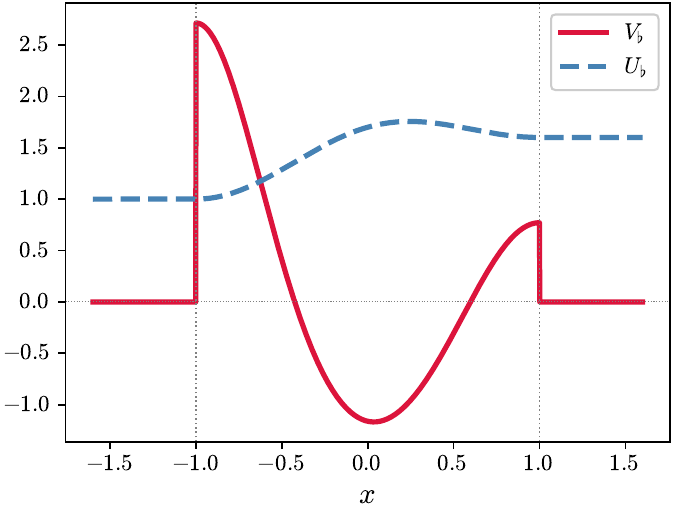}
        \vspace{-1.6em}\subcaption{}
    \end{subfigure}\hfill
    \begin{subfigure}{0.49\linewidth}
        \includegraphics[width=\linewidth]{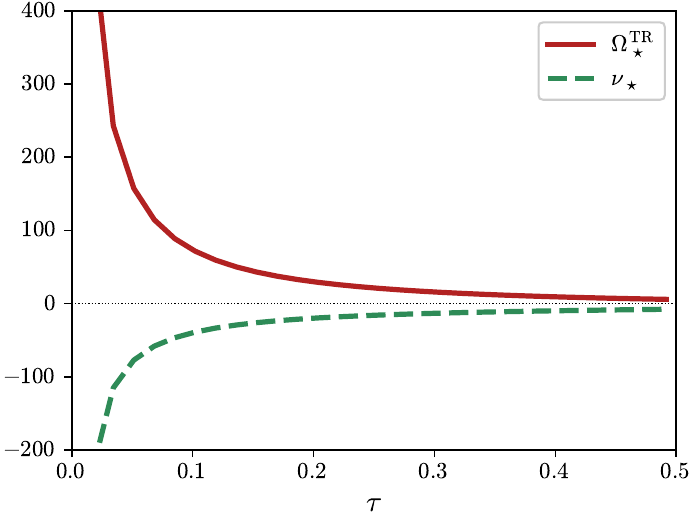}
        \vspace{-1.6em}\subcaption{}
    \end{subfigure}
\vspace{1em}
    \caption{Non-vacuity of Corollary~\ref{cor:small-gamma-stability} via
    the asymmetric perturbation $V_\tau = (1-\tau)V_\flat$ on $[-1,1]$.
    (a)~Threshold potential $V_\flat(x)$ (red) and threshold eigenfunction
    $U_\flat(x)$ (blue dashed), with $U_\flat(-1) = 1$ and $U_\flat(1) = 1.6 \ne 1$
    (non-degenerate). (b)~Stability terms vs $\tau$: both
    $\Omega_\star^{\rm TR}(\tau)$ (red) and $|\Nslope(\tau)|$ (green dashed)
    grow like $1/\gs(\tau) \sim 1/\tau$, placing the branch in $\Sreg_2^{\rm TR}$
    across the entire shown range.}
    \label{fig:cor-nonvacuity}
\end{figure}

\subsubsection{Half-line (SR) example}\label{sssec:half-line-example}
The threshold-perturbation idea of Proposition~\ref{prop:threshold-construction} carries over to the half-line $\R_+$, with the Dirichlet BC $U(0)=0$ replacing $U(-b)=1$ and the Neumann BC $U'(b)=0$ on the right unchanged. (We do not restate the SR analog of the proposition; the construction below verifies all needed properties directly for the concrete example.) Set
\begin{equation}\label{eq:Usharp}
  U_\sharp(x) := \tfrac{2}{\pi}\sin(\pi x/2) \text{ on } [0,1], \qquad U_\sharp(x) := \tfrac{2}{\pi} \text{ on } (1,\infty).
\end{equation}
Then $U_\sharp\ge 0$, $U_\sharp(0)=0$, $U_\sharp(1)=2/\pi$, and $U_\sharp'(1)=0$. The corresponding potential is the square well
\begin{equation}\label{eq:Vsharp}
  V_\sharp(x) := \frac{U_\sharp''(x)}{U_\sharp(x)} = -\frac{\pi^2}{4}\,\chi_{(0,1]}(x), \qquad \|V_\sharp\|_{L^1}=\frac{\pi^2}{4}\approx 2.47.
\end{equation}
The pair $(\gs,U_\star)=(0,U_\sharp)$ is a threshold/zero-energy SR resonance: the BC~\eqref{eq:SR-BC} at $\gs=0$ reduces to $U_\sharp(0)=0$ and $U_\sharp'(1)=0$.

\smallskip\noindent\emph{Perturbation family.} Take
\begin{equation}
  V_\tau^\sharp \;:=\; (1-\tau)\,V_\sharp,
  \qquad \tau\in[0,1),
  \qquad \|V_\tau^\sharp\|_{L^1} \;\le\; \|V_\sharp\|_{L^1} \;=:\; M_0^\sharp.
\end{equation}
Shallowing rather than deepening preserves $\gs(\tau)>0$; deepening would convert the resonance into a true $L^2$ bound state. A half-line analog of the implicit-function-theorem argument from Part~(ii) of Proposition~\ref{prop:threshold-construction} (IVP from the Dirichlet endpoint $x=0$) produces, for each small $\tau>0$, an SR pair $(\gs(\tau),U_\star^{(\tau)})$ with $\gs(\tau)>0$, $U_\star^{(\tau)}\ge 0$, and $\gs(\tau)\to 0^+$ as $\tau\to 0^+$. Theorem~\ref{thm:bifurcation} seeds the bifurcation branch $(\psi_\eps^{(\tau)},E^{(\tau)}(\eps))$ for each such pair.

\smallskip\noindent\emph{Stability via Theorem~\ref{thm:intro-stability-sharp}.} Computing $(\gs,\Omega_\star^{\rm SR},\mathcal B_\star^{\rm SR},\Nslope)$ from~\eqref{eq:Omega},~\eqref{eq:nu} as $\tau$ varies, we find $\Omega_\star^{\rm SR}>\gs(\mathcal B_\star^{\rm SR})^2$ throughout $\tau\in(0,1)$, with $\Omega_\star^{\rm SR}\to\infty$ and $\Nslope\to-\infty$ as $\tau\to 0^+$ (see Figure~\ref{fig:cor-nonvacuity-SR}(b)); in contrast to the full-line case, no $\Omega_\star^{\rm SR}=0$ crossover occurs. Theorem~\ref{thm:intro-stability-sharp} therefore gives orbital stability of $\psi_\eps^{(\tau)}$ at small $\eps$ for every $\tau\in(0,1)$, even up to $\tau$ near $1$ (where the well is shallowed nearly to vacuum): the SR construction is very robust.

\smallskip\noindent\emph{Non-vacuity of Corollary~\ref{cor:small-gamma-stability}.} Along this family $\|V_\tau^\sharp\|_{L^1}\le M_0^\sharp$ is uniformly bounded and $\gs(\tau)\to 0^+$, so the half-line analog of Proposition~\ref{prop:threshold-construction}(iii) gives, for $\tau>0$ sufficiently small,
\begin{equation}
  \gs(\tau) \;<\; \Gamma_{\rm SR}(M_0^\sharp,1),
\end{equation}
and Corollary~\ref{cor:small-gamma-stability} (SR case) applies, independently confirming stability and demonstrating non-vacuity.

\begin{figure}[!htbp]
    \centering
    \setlength{\abovecaptionskip}{4pt}
    \setlength{\belowcaptionskip}{0pt}
    \begin{subfigure}{0.49\linewidth}
        \includegraphics[width=\linewidth]{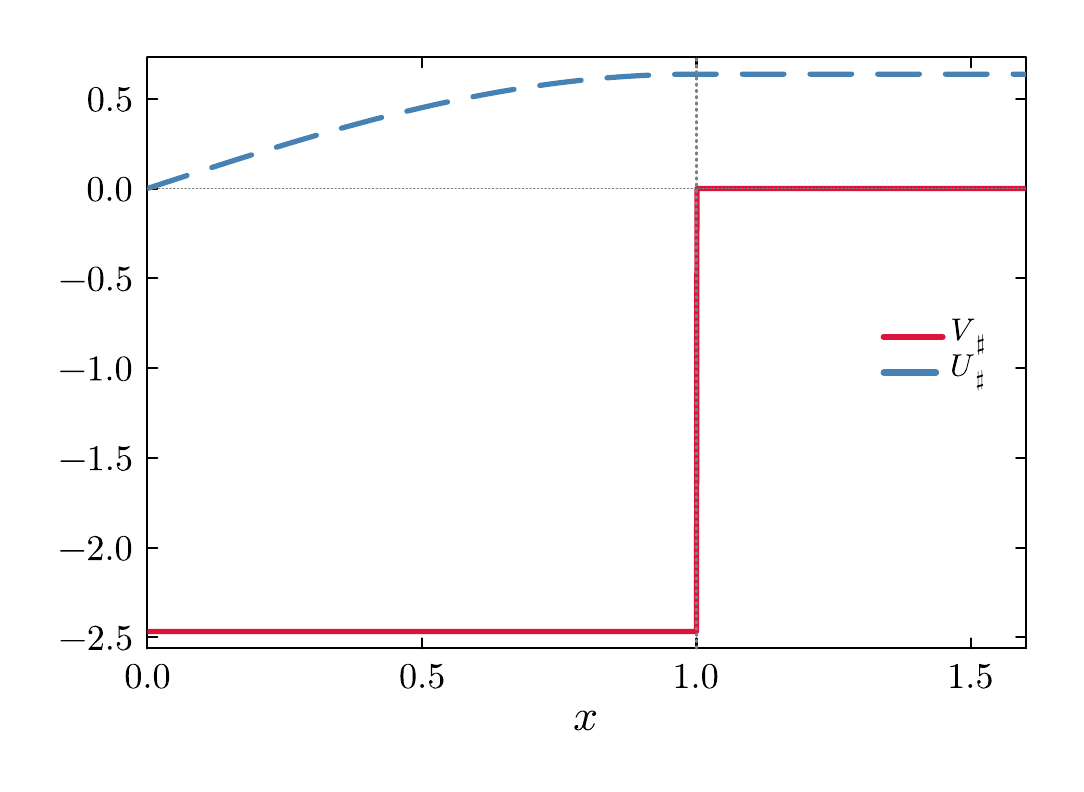}
        \vspace{-1.6em}\subcaption{}
    \end{subfigure}\hfill
    \begin{subfigure}{0.49\linewidth}
        \includegraphics[width=\linewidth]{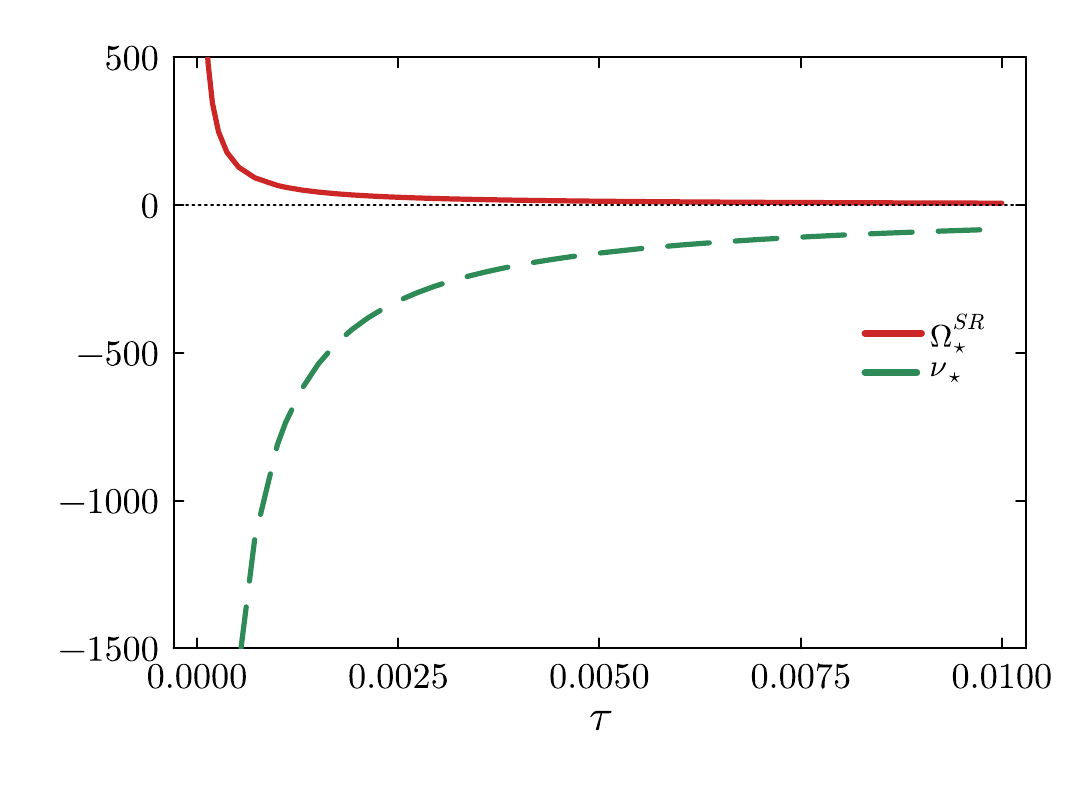}
        \vspace{-1.6em}\subcaption{}
    \end{subfigure}
\vspace{1em}
    \caption{SR analog of Figure~\ref{fig:cor-nonvacuity}: non-vacuity of Corollary~\ref{cor:small-gamma-stability} via the perturbation $V_\tau^\sharp = (1-\tau)V_\sharp$ on $\R_+$. (a)~Threshold potential $V_\sharp(x)=-\tfrac{\pi^2}{4}\chi_{(0,1]}(x)$ (red) and threshold eigenfunction $U_\sharp(x)=(2/\pi)\sin(\pi x/2)$ on $[0,1]$, $\equiv 2/\pi$ on $(1,\infty)$ (blue dashed). (b)~$\Omega_\star^{\rm SR}(\tau)$ (red) and $\Nslope(\tau)$ (green dashed); both remain in $\Sreg^{\rm SR}$ throughout, with $\gs(\tau)\to 0^+$ as $\tau\to 0^+$ (small-$\tau$ extremes clipped by the $y$-axis range).}
    \label{fig:cor-nonvacuity-SR}
\end{figure}

\vspace{4em}
\printbibliography

\end{document}